\def\marginindex#1{}
\def\margindef#1{\emph{#1}\marginindex{#1}}
\def\margindef#1{\emph{#1}}
\setlist[enumerate]{leftmargin=*,font=\upshape\bfseries\sffamily\color{lipicsGray}}
\setlist{topsep=\smallskipamount}
\sffamily\color{lipicsGray}\arabic*,
\sffamily\color{lipicsGray},
\unskip\hspace{0pt plus 1filll}\hbox to 6cm\bgroup\ctfont{---},
  \tabularx{\linewidth}{@{}
    r 
    >{\pseudosetup} 
    X 
  },
\newcommand\defchunk[2]{%
  \protected@write\@auxout{}{\csgdef{chunk#1}{#2}}%
  \unskip« {\itshape\ifcsdef{chunk#1}{\csuse{chunk#1}}{}} » $\equiv$ \label{chunk#1}}
\newcommand\inschunk[1]{\unskip« {\itshape\ifcsdef{chunk#1}{\csuse{chunk#1}}{}} » $\to$ line \ref{chunk#1}}
\def\eqdef{\doteq}
\def\brack#1{\left\langle #1 \right\rangle}
\def\mop{\operatorname}
\def\st{\ \middle|\ }
\def\ref#1{{\upshape\sffamily\ref{#1}}}
\newcommand{\myto}[2][\to]{%
  \mathrel{\vbox{\offinterlineskip\ialign{%
        \hfil##\hfil\cr
        $\scriptscriptstyle#2$\cr
        \noalign{\kern0ex}
        $#1$\cr
      }}}}
\def\gb{Gröbner basis}
\def\gbs{Gröbner bases}
\def\rb{rewrite basis}
\def\rbs{rewrite bases}
\def\sb{signature basis}
\def\sbs{signature bases}
\def\lm{\mop{lm}}
\def\rew{\myto{1}}
\def\sig{\mop{sig}}
\def\tdown{\mathrel{\smash{\breve\downarrow}}}
\def\tup{\mathrel{\smash{\breve\uparrow}}}
\def\tailred{\myto[\smile]{1}}
\def\epsilon{\varepsilon}
\def\leq{\leqslant}
\def\geq{\geqslant}
\def\basis{\mathcal{M}}
\def\SIG{\mathcal{S}}
\def\rk{\mop{rk}}
\def\eqlt{\equiv_{\mathrm{lt}}}
\def\epsilon{\varepsilon}
\def\children{\mathit{children}}
\def\labell{L}
\newcommand\AXle[2][\sigma]{A#2^{\leq #1}}
\newcommand\AXlt[2][\sigma]{A#2^{< #1}}
\newcommand\LAXle[2][\sigma]{\brack{\smash{A#2^{\leq #1}}}}
\newcommand\LAXlt[2][\sigma]{\brack{\smash{A#2^{< #1}}}}
\newcommand\AGle[1][\sigma]{\AXle[#1]{G}}
\newcommand\AGlt[1][\sigma]{\AXlt[#1]{G}}
\newcommand\LAGle[1][\sigma]{\LAXle[#1]{G}}
\newcommand\LAGlt[1][\sigma]{\LAXlt[#1]{G}}
\newcommand{\implication}[2]{\proofsubparagraph{Proof that \ref{#1} implies \ref{#2}.}}
\title{Axioms for a theory of signature bases}
\authorrunning{P. Lairez}
\author{Pierre Lairez}{Université Paris-Saclay, Inria, 91120 Palaiseau, France}{pierre.lairez@inria.fr}{0000-0003-3756-0151}{}
\keywords{\gb{}, F5 algorithm, signature basis}
\begin{document}

\maketitle
\begin{abstract}
  Twenty years after the discovery of the F5 algorithm, Gröbner bases with signatures are still challenging to understand and to adapt to different settings. This contrasts with Buchberger's algorithm, which we can bend in many directions keeping correctness and termination obvious. I propose an axiomatic approach to Gröbner bases with signatures with the purpose of uncoupling the theory and the algorithms, giving general results applicable in many different settings (e.g. Gröbner for submodules, F4-style reduction, noncommutative rings, non-Noetherian settings, etc.),
  and extending the reach of signature algorithms.
\end{abstract}

%

\section{Introduction}

\subparagraph*{Context}
Introduced by \textcite{Faugere_2002} to compute \gbs{},
the F5 algorithm proposes the concept of \emph{signature}
to avoid the redundant computations that arise in Buchberger's algorithm \parencite{Buchberger_1965,Buchberger_2006}.
Each polynomial handled by the algorithm is augmented with a signature
designed to enforce a fundamental postulate, which we may state as ``two elements with the same signature are substitutable''.
We can find precursive ideas in the work of \textcite{GebauerMoller_1986} and \textcite{MollerMoraTraverso_1992}
and signatures also share somes ideas with Hilbert-driven algorithms \parencite[]{Traverso_1996}.

Today's situation of signature algorithms is equivocal.
F5's relevance, from the pure aspect of performance,
was demonstrated by a success on cryptographic challenges early on \parencite{FaugereJoux_2003}.
Moreover, the predictability of F5 in certain situations enables complexity analyses that are particularly relevant in cryptanalysis \parencite{BardetFaugereSalvy_2015}.
But none of the current best implementations for computing \gbs{} uses signatures, be it Magma \parencite{BosmaCannonPlayoust_1997}, msolve \parencite{BerthomieuEderSafeyElDin_2021}, Singular \parencite{Singular_CAS} or Maple.
They prefer Buchberger's algorithm, handling S-pairs as \textcite{GebauerMoller_1988} do
and using simultaneous reductions in the F4 style \parencite{Faugere_1999} --~see the report of \Textcite{MonaganPearce_2015} on this approach.
The theoretical benefits of signature algorithms are diminished by a higher implementation complexity
and a larger output (the signature bases computed by signature algorithms are more constrained than \gbs{}).
More than benchmarks, literature about signature algorithms is turned towards revealing the core ideas
behind F5 and understanding what makes a signature algorithm terminate.
Termination is a very peculiar aspect, not as transparent as termination of Buchberger's algorithm.
Nonetheless, thanks to decisive work by \textcite{HashemiArs_2010,GaoGuanVolny_2010,ArriPerry_2011,EderPerry_2011,GaoVolnyWang_2016}
these goals have been reached in the polynomial case --~see the survey by \textcite{EderFaugere_2017}.

The point of studying signature algorithm may not be the quest of new world records for polynomial system solving,
but rather the understanding of signatures themselves, and what we can extract from them.
Signature bases convey extra information compared to \gbs{}, related to the syzygy module of the input generators. Many ideal-theoretic operations --~intersection, quotient, saturation, Ext modules~--
are related to syzygy modules \parencite{Stillman_1990}, and signature algorithms seem to give an efficient access to them
\parencite{GaoGuanVolny_2010,SunWang_2011,Faugere_2001,EderLairezMohrSafeyElDin_2023}.
Porting these ideas to more general settings is a strong motivation to engage in the study of signatures.
Yet, in my view, the lack of flexibility of the theory of signature algorithms hinders further development, both practical and theoretical. For example, modern implementations for computing \gbs{} make it clear that simultaneous reductions in the F4 style are key towards high performance. Yet, there is no satisfactory description of a signature algorithm with F4-style reduction (\textcite{AlbrechtPerry_2010} do not prove termination and \textcite[\S13]{EderFaugere_2017} are superficial).

This work considers the setting of a module over an algebra over a field, with well-ordered monomials.
This covers many interesting case but excludes some recent developments of signature algorithms which demonstrate the wide applicability of the concept:
signatures in local rings \parencite{LuWangXiaoZhou_2018}, coefficients in Euclidean rings \parencite{EderPfisterPopescu_2017}, principal ideal domains \parencite{FrancisVerron_2020,HofstadlerVerron_2023}, or Tate algebras \parencite{CarusoVacconVerron_2020}, and signatures in a tropical setting \parencite{VacconYokoyama_2017,VacconVerronYokoyama_2018}.

\subparagraph{Contribution} I propose a set of axioms that specifies a context in
which signature algorithms are applicable.
They fit many known settings --~such as solvable algebras \parencite{SunWangMaZhang_2012} and free algebras \parencite{HofstadlerVerron_2022}~--
and some previously unknown settings, such as differential algebras (\S\ref{sec:diff-algebr}).
Very importantly, the axioms describe \emph{modules over a ring}, rather than focusing on the special case of ideals.
While many ideal theoretic constructions (such as ideal intersection) are best understood in terms of modules, they are not addressed in previous works.
The ideas of the most recent frameworks for signature algorithms \parencite{EderPerry_2011,GaoVolnyWang_2016} work smoothly in this axiomatic setting, so many statements will of course be familiar, yet with a wider applicability.

Working with axioms makes some useful ideas emerge.
At least two of them are worth attention.
First, the concept of \emph{prebasis} is introduced to precisely describe admissible inputs for signature algorihms, or, in other words,
to describe what it means for signatures to be consistent.
Previous works all starts by fixing the input,
crafting specific signatures, and developing the theory with respect to these specific signatures.
This is overly restrictive. Moreover, this approach does not highlight the essential properties of signatures, ensuring correctness and termination, among the incidental properties of this specific construction.
This is also problematic when trying to define what signatures and \sbs{} are.
\gbs{} are defined by the equality~$\brack{\lm G} = \lm \brack{G}$, or the confluence of some rewriting system \parencite[e.g.][Definition~5.37]{BeckerWeispfenning_1993},
we do not need to say a \gb{} \emph{of something}.
It is a very desirable definition which should have an equivalent in the signature setting.
The main obstacle here is the definition of signatures, independently of the specific construction that is usually performed for a given input.
This raises an interesting question:
given a set of signatures --~basically anything well-ordered on which act monomials~-- what are the admissible inputs?
Similarly, imagine running some signature basis algorithm from an input~$f_1,\dotsc,f_m$,
and stopping it midway. In this intermediate state, we have polynomials~$g_1,\dotsc,g_r$ with signatures
deriving from the original input by legal operations on polynomials and signatures.
These signatures must be consistent in some sense.
Can we characterize this consistency property without refering to the original input?
This leads to the concept of prebasis, that is
a set of polynomials with signatures which satisfy the fundamental postulates of signatures (elements with equal signatures are substitutable). I prove that if a set of sigpoly pairs is a \gb{} in the module representation, then it is a prebasis (Theorem~\ref{lem:gb-implies-prebasis}).
A concrete application is the reuse of signatures from one computation to another, which is a way to avoid redundant computations.

Second, I introduce \emph{sigtrees} to uncouple the termination criterion from the algorithms themselves. Sigtrees make a “one size fits all” termination criterion.
For Buchberger's algorithm, termination follows from a general principle, Dickson's Lemma,
not from \emph{ad hoc} arguments.
The concept of sigtrees is a tentative to provide such a general argument.
It proves the termination of all known signature algorithms
and also settles positively a conjecture in the classical polynomial setting:
the termination of signature algorithms with out-of-order signature handling and the F5 reductant selection strategy (among all possible reductants, choose the most recent one).
We may blend in an F4-style reduction, the termination argument remains the same.
\looseness=-1

Lastly, as a didactic contribution, I try to emphasize an elemental feature of \sbs{} (or rather, for that matter, \rbs),
putting a clear distinction with \gbs{}.
To check that a given set~$G$ is a \gb{} of an ideal~$I$,
it is enough to check that (1)~$G$ generates~$I$, and (2)~the S-pairs reduce to zero.
Typically (1) will hold {by design} if~$G$ has been constructed from a generating set of~$I$ by usual reduction steps.
Checking~(2) is more difficult and requires arithmetic operations in the base field.
This is typically a costly operation.
In constrast, to check that a given set~$G$ with signatures is a \rb{} of an ideal~$I$,
it is enough to check that (1')~$G$ is a \emph{prebasis} of~$I$, and (2')~the leading monomials and the signatures satisfy some combinatorial property (Theorem~\ref{thm:faugere-criterion}).
The concept of prebasis is introduced in Section~\ref{sec:prebases}, but for the moment, it is enough to say that (1') will hold by design
if~$G$ is obtained by allowed reduction steps from an initial prebasis of~$I$.
The important part is the nature of (2'): it requires no arithmetic operations to be checked, only operations on monomials.
Algorithms for \sbs{} are all about exploiting this combinatorial structure.
This reminds of \emph{staggered linear bases} introduced by \textcite[]{GebauerMoller_1986} to compute \gbs{}, they feature a similar combinatorial structure --~and the link with signatures have recently been investigated by \textcite{HashemiJavanbakht_2021}.


\subparagraph{Plan}
In Section~\ref{sec:gbs}, we define the algebraic structure in which we consider \sbs{}, \emph{monomial modules},
that are vector spaces with a ``leading monomial'' map and an action of a monoid
with some compatibility rules. We also introduce the rewriting system defined by the top reduction.
In Section~\ref{sec:signatures}, we define signatures, \sbs{}, prebases, and \rbs{}. We also state a combinatorial criterion for a set to be a \rb.
Section~\ref{sec:addit-prop-rbs} gathers secondary properties of \rbs{}, such as a precise comparison with \sbs{}, that are not necessary for the next sections.
Section~\ref{sec:algorithms} introduces Noetherian hypotheses, termination arguments and review algorithm \emph{templates}.
Section~\ref{sec:settings} illustrates the axioms by several different settings in which they apply.
\looseness=-1

\subparagraph{Acknowledgment}
I am grateful to Hadrien Brochet and Frédéric Chyzak for a very careful reading and useful comments.
I thank the referees for thoughtful reports.

\section{\gbs{}}
\label{sec:gbs}

Before going to signatures,
we lay down some definitions.
The main ones are the definitions of a \emph{monomial space} --~a vector space with a concept of leading monomial, see Section~\ref{sec:top-reduction}~-- and a \emph{monomial module} --~a monomial space
endowed with a linear action of a (non necessarily commutative) monoid, compatible with leading monomials,
see Section~\ref{sec:monomial-modules}.

In monomial spaces, we develop a (short) theory of \emph{top reduction modulo tail equivalence} (Section~\ref{sec:top-reduction}),
using the terminology of rewriting systems (Section~\ref{sec:rewriting-systems}).
Using rewriting systems to describe the theory of \gbs{} in polynomial rings is done in several textbooks \parencite[e.g.][]{BeckerWeispfenning_1993,Winkler_1996,KreuzerRobbiano_2000,Mora_2005}: in a few words, we say that a polynomial~$f$ can be reduced by a polynomial~$g$, if we can cancel out one of the terms of~$f$ by substracting a scalar multiple of the leading monomial of~$g$.
The context of signatures puts the emphasis on \emph{top reduction} --~the reduction of the leading monomial~--
as opposed to \emph{tail reduction}.
The practice of \gbs{} computation also shows
that tail reduction steps are optional. They are irrelevant as far as termination and correctness is concerned,
to perform them or not is only a matter of performance.
Lastly, tail reduction does not enjoy nice properties.
For example, if~$g$ reduces~$f$ then~$mg$ reduces~$mf$ for any monomial~$m$, in a polynomial setting.
But this implication breaks if~$m$ is a polynomial rather than a monomial,
or if~$f$ and~$g$ lie in a Weyl algebra, unless the reduction is a top reduction.
All of these hints at replacing tail reduction by a more flexible \emph{tail equivalence}
and replace the customary reduction by the top reduction modulo tail equivalence.
This fits the abstract setting of “reduction modulo equivalence” developed by \textcite{Huet_1980}.

\needspace{5\baselineskip}
\subsection{Rewriting systems}
\label{sec:rewriting-systems}

Let~$X$ be a set and~$\myto{1}$ a binary relation on~$X$. ``$x\myto{1} y$'' reads ``$x$ reduces to~$y$''.
Following \textcite{Huet_1980}, we define
the following binary relations:\footnote{Actually Huet denotes either~$\to$ or~$\myto1$ the one-step reduction, which I denote only~$\myto1$,
  and~$\myto*$ the multistep reduction, which I denote~$\to$.}
\begin{itemize}
  \item $x \myto{n} y$, for $n>0$, if there is some~$z\in X$ such that~$x\myto{1} z$ and~$z\myto{n-1}y$;
  \item $x\to y$ if~$x=y$ or~$x\myto{n}y$ for some~$n > 0$, this is the reflexive transitive closure of~$\myto1$;
  \item $x\uparrow y$ if there is some~$z\in X$ such that~$z\to x$ and~$z\to y$;
  \item $x\downarrow y$ if there is some~$z\in X$ such that~$x\to z$ and~$y\to z$.
\end{itemize}

The relation $\myto{1}$ is \emph{Noetherian} if there is no infinite sequence~$x_0\myto1 x_1 \myto1 \dotsb$.
An element~$x\in X$ is \emph{$\myto1$-reduced} if there is no~$y\in X$ such that~$x\myto{1} y$.
If~$x \to y$ and~$y$ is $\myto1$-reduced, then~$y$ is a \emph{normal form} of~$x$.
If~$\myto1$ is Noetherian, then every element has at least one normal form.
The relation $\myto{1}$ is \emph{confluent} if~$x\uparrow y$ implies~$x\downarrow y$ for any~$x, y\in X$.
If~$\myto{1}$ is confluent, then any~$x\in X$ has at most one normal form.


Moreover, given an equivalence relation~$\smile$ on~$X$,
we define:
\begin{itemize}
  \item $x \tup y$ if there are~$z, z' \in X$ such that~$z\smile z'$, $z\to x$ and~$z'\to y$;
  \item $x \tdown y$ if there are~$z, z' \in X$ such that~$z\smile z'$, $x\to z$ and~$y\to z'$;
\end{itemize}
The relation~$\rew$ is \emph{confluent modulo~$\smile$} if $x\tup y$ implies~$x\tdown y$,  for any~$x, y\in X$.


\subsection{Top reduction}
\label{sec:top-reduction}


\begin{definition}[Monomial space, leading monomial, $\eqlt$]
  A \emph{monomial space} over a field~$K$ is a $K$-linear space~$M$ with a basis~$B \subset M$ endowed with a well-order relation~$\leq$.
  The \emph{leading monomial} of~$f\in M$, denoted~$\lm f$, is the $\leq$-maximal element of~$B$ with a nonzero coefficient in~$f$, or~$0$ if~$f = 0$.
  The \emph{set of leading monomials} of~$M$ is defined to be the well-ordered set $B \cup \left\{ 0 \right\}$ where~$0$ is added as the smallest element.

  An equivalence relation~\margindef{$\eqlt$} is defined on~$M$ by~$x\eqlt y$ if~$x=y=0$ or~$\lm(x-y) < \lm x$,
  to be understood as ``$x$ and~$y$ have the same leading term''.
\end{definition}
The convention that~$\lm 0 = 0$ is useful to simplify many statements: being able to write~$\lm f$ without checking that~$f\neq 0$ avoids a case distinction.
From now on, we fix a field~$K$ and a monomial space~$M$ over~$K$. The set of leading monomials of~$M$ is denoted~$\basis$.

\begin{remark}[Equivalent monomial spaces]
  Different choices of a basis~$B$ may lead to equivalent monomial spaces, in the following sense.
  Another well-ordered basis~$B'$ of~$M$ gives an equivalent monomial space if there is an increasing bijection~$\iota : B \to B'$ such that~$\lm' f = \iota(\lm f)$, where~$\lm' f$ is the leading monomial of~$f$ relatively to~$B'$.
  The theory is described only using~$\lm$, not the basis~$B$, so it does not distinguish between equivalent monomial spaces.
  From an axiomatic point of view, we can check that the maps~$\lm$ from~$M$ onto a well ordered set that come from a well-ordered basis, are exactly the maps satisfying
  \begin{enumerate}[label=L\arabic*]
    \item\label{it:lm:zero} $\forall x\in M, \lm x = 0 \Leftrightarrow x = 0$;
    \item\label{it:lm:colin} $\forall x, y\in M, \lm x = \lm y \neq 0 \Leftrightarrow \exists \lambda \in K^\times, \lm(x-\lambda y) < \lm x$.
  \end{enumerate}
  For a given~$f \in M$, we do not define the terms of~$f$, its monomial support, or the coefficient of a monomial in~$f$, because these notions depend on a specific choice of~$B$, which indicates that they are irrelevant in our setting.
\end{remark}

\begin{example}\label{example:w1}
  For polynomial rings, the monomial basis is a very natural choice.
  In a noncommutative setting however, there may be several natural bases.
  For example, the Weyl algebra~$W_1$ generated by two elements~$x$ and~$\partial$ subject to the relation~$\partial x = x \partial + 1$,
  the two natural bases are~$B = \left\{ x^n \partial^m \right\}$ and~$B' = \left\{ \partial^m x^n \right\}$ (with the same possible orderings as the polynomial case). These two bases give equivalent monomial spaces.
\end{example}




\begin{definition}[Top reduction, $\to_E$]\label{def:top-red}
  For any~$E\subseteq M$, the \emph{top reduction}~$\rew_E$ is defined on~$M$ by
  \[ x \rew_E y \Leftrightarrow \lm y < \lm x \text{ and } \exists \lambda \in K^\times, \exists e \in E, y = x - \lambda e. \]
\end{definition}

In other words, $x\rew_E y$ if $y$ is the result of cancelling the leading monomial of~$x$ using a reducer in~$E$.
In this situation, we always have~$\lambda e \eqlt x$.
Since~$x\myto1_E y$ implies~$\lm x < \lm y$, and the set of leading monomials is well-ordered, it is clear that~$\myto1_E$ is Noetherian.

\begin{definition}[Tail equivalence, $\smile_E$, $\tup_E$, $\tdown_E$]
  For any subset~$E\subseteq M$, we define a relation~$\tailred_E$ on~$M$, called \emph{tail equivalence}, defined by
  \[ x \tailred_E y \Leftrightarrow \exists \lambda\in K^\times, \exists e\in E, y = x - \lambda e \text{ and } \lm e < \lm x. \]
  The reflexive transitive closure of~$\tailred_E$ is denoted~$\smile_E$.
  The confluence relations~$\tup_E$ and~$\tdown_E$ are defined using~$\smile_E$.
\end{definition}
Note that~$x\smile_E y$ implies $x \eqlt y$.
The tail equivalence is not a reduction since it is symmetric, it is not defined which side of an equivalence~$x\smile_E y$ is more reduced.

The following statement is a variant, in the setting of monomials spaces, of Buchberger's well known criterion for polynomial ideals.
For~$E\subseteq M$, let~$\brack{E}$ denote the $K$-linear subspace generated by~$E$.

\begin{theorem}[Buchberger's criterion for monomial spaces]\label{thm:linear-buchberger}
  Let~$E$ be a subset of~$M$.
  The following assertions are equivalent:

  (Characterization by leading monomials)\nopagebreak
  \begin{enumerate}[label={B}\arabic*]
    \item\label{item:pivot-basis} $\forall x\in {\brack{E}}, x \neq 0 \Rightarrow \exists e\in E, \lm e = \lm x$.
  \end{enumerate}

  (Characterization by rewriting)
  \begin{enumerate}[label={B}\arabic*, resume]
    \item\label{item:pivot-rewrites} $\forall x\in {\brack{E}}, x \to_E 0$;
  \end{enumerate}

  (Characterizations by confluence properties)
  \begin{enumerate}[label={B}\arabic*, resume]
    \item\label{item:pivot-nf} $\forall x, y\in M, x - y \in \brack{E} \Rightarrow x\tdown_E y$;
    \item\label{item:pivot-confluent} $\to_E$ is confluent modulo~$\smile_E$; 
  \end{enumerate}

  (Characterizations by S-pairs)
  \begin{enumerate}[label={B}\arabic*, resume]
    \item\label{item:pivot-spair} $\forall e, f\in E, \forall \lambda \in K^\times, e \eqlt \lambda f \Rightarrow e-\lambda f \to_E 0$;
    \item\label{item:pivot-regular} $\forall e, f\in E, \forall \lambda \in K^\times, e \eqlt \lambda f \Rightarrow e - \lambda f \in \brack{g\in E\st \lm g<\lm e}$;
  \end{enumerate}
\end{theorem}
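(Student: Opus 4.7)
The plan is to establish the cycle B1 $\Rightarrow$ B3 $\Rightarrow$ B4 $\Rightarrow$ B5 $\Rightarrow$ B6 $\Rightarrow$ B1, together with the direct equivalence B1 $\Leftrightarrow$ B2. The pair B1 $\Leftrightarrow$ B2 is routine: iterating top reductions terminates by well-ordering of $\basis$, turning B1 into B2, while the first step of any chain $x \to_E 0$ exhibits an $e \in E$ with $\lm e = \lm x$, since a top-reduction step requires $\lm(x - \lambda e) < \lm x$.

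For B1 $\Rightarrow$ B3, I would first take top-reduced normal forms $\hat x, \hat y$ of $x,y$ (they exist because $\rew_E$ is Noetherian). Then $\hat x - \hat y \in \brack{E}$, so either $\hat x = \hat y$ or B1 yields an $e \in E$ with $\lm e = \lm(\hat x - \hat y)$. The normal-form condition rules out $\lm e \in \{\lm \hat x, \lm \hat y\}$, hence $\lm \hat x = \lm \hat y$ with cancellation in the leading term. Iterating B1 on $\hat x - \hat y$ produces a representation $\hat x - \hat y = \sum \mu_i e_i$ with every $\lm e_i$ strictly below $\lm \hat x = \lm \hat y$; reading this as a chain of tail reductions gives $\hat x \smile_E \hat y$, and hence $x \tdown_E y$ with witnesses $z = \hat x, z' = \hat y$.

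The implication B3 $\Rightarrow$ B4 is a direct check: splitting a witness configuration for $x \tup_E y$ into three pieces puts $x - y$ into $\brack{E}$, and B3 applies. For B4 $\Rightarrow$ B5, the key observation is that an S-pair configuration $e \eqlt \lambda f$ produces two top-reductions out of $e$, namely $e \rew_E 0$ (using $e$ itself) and $e \rew_E (e - \lambda f)$ (using $f$); hence $0 \tup_E (e - \lambda f)$ with $z = z' = e$, and B4 yields $0 \tdown_E (e - \lambda f)$. Both witnesses must be $0$ since $0$ has only trivial $\to_E$-successors and $\smile_E$ preserves leading monomials, giving $e - \lambda f \to_E 0$. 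Then B5 $\Rightarrow$ B6 is obtained by reading off the reducers $g_i$ used in the chain $e - \lambda f \to_E 0$, each with $\lm g_i \leq \lm(e - \lambda f) < \lm e$. Finally, B6 $\Rightarrow$ B1 is the classical S-polynomial argument: given $x = \sum \lambda_i e_i \in \brack{E}$, run a Noetherian induction on the multiset $\{\lm e_i\}$, either returning an $e_i$ with $\lm e_i = \lm x$ or using B6 to rewrite a pair of maximum-$\lm$ terms as a combination with strictly smaller leading monomials.

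I expect the main obstacle to be B1 $\Rightarrow$ B3: one must upgrade the $\brack{E}$-relation between normal forms to the finer $\smile_E$-relation. The decisive trick is that two top-reduced elements whose difference lies in $\brack{E}$ are forced to share a leading monomial, after which iterating B1 on the difference yields a chain of genuine tail reductions rather than a mixed chain that could also contain top reductions.
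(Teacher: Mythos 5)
Your proof is correct, and most of it coincides with the paper's: the segment B3 $\Rightarrow$ B4 $\Rightarrow$ B5 $\Rightarrow$ B6 $\Rightarrow$ B1 is argued exactly as in the paper (the two reductions $e \rew_E 0$ and $e\rew_E e-\lambda f$ out of $e$ for B4 $\Rightarrow$ B5, reading off the reducers of the chain for B5 $\Rightarrow$ B6, and your multiset-based Noetherian induction for B6 $\Rightarrow$ B1 is only a repackaging of the paper's double minimality of $m$ and of $N$). Where you genuinely diverge is around B2 and B3. The paper closes the cycle as B1 $\Rightarrow$ B2 $\Rightarrow$ B3, proving B2 $\Rightarrow$ B3 by induction on $\max(\lm x,\lm y)$ with a case analysis (the cases $\lm x>\lm y$, $\lm x<\lm y$, and $\lm x=\lm y$ with $\mu=1$ or $\mu\neq 1$); you instead prove B1 $\Leftrightarrow$ B2 as a standalone equivalence (your direct extraction of a reducer with $\lm e=\lm x$ from the first step of $x\to_E 0$ is a step the paper only gets by going around the cycle) and pass straight from B1 to B3 via normal forms: taking $\to_E$-normal forms $\hat x,\hat y$, irreducibility together with B1 applied to $\hat x-\hat y\in\brack{E}$ forces $\lm\hat x=\lm\hat y$ with cancellation of leading terms, after which an expansion $\hat x-\hat y=\sum_i\mu_i e_i$ with all $\lm e_i<\lm\hat x$ is reread as a $\smile_E$-chain joining $\hat x$ and $\hat y$. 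That chain-conversion trick is precisely the paper's $\mu=1$ case, but your normal-form preprocessing eliminates the induction and the other cases, at the price of invoking the (already available) Noetherianity of $\rew_E$ up front; the paper's version treats arbitrary $x,y$ head-on and keeps the whole theorem as a single directed cycle of one-step implications.
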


\implication{item:pivot-basis}{item:pivot-rewrites}
Let~$x \in \brack{E}$ be a nonzero element
and let~$y$ be a $\to_E$-normal form of~$x$. In particular~$y\in \brack{E}$.
By hypothesis, either~$y=0$, or~$\lm y = \lm e$ for some~$e\in E$. The latter would contradict the irreducility of~$y$, so~$y=0$ and~$x\to_E 0$.

\implication{item:pivot-rewrites}{item:pivot-nf}
Let~$x, y\in M$ such that~$x-y\in \brack{E}$.
By hypothesis, $x-y\to_E 0$.
So there is some~$e\in E$ and~$\lambda\in K^\times$ such that~$x-y \myto1_E x-y-\lambda e \to_E 0$ (unless~$x=y$ but this case is trivial). In particular~$\lambda e \eqlt x-y$.

If~$\lm x > \lm y$, then~$x\myto1_E x-\lambda e$ and~$x-\lambda e - y \in \brack{E}$
and by induction on~$\max(\lm x, \lm y)$, we may assume that~$x-\lambda e\tdown_E y$ and therefore $x\tdown_E y$.
The case~$\lm y > \lm x$ is similar.
If~$\lm x = \lm y$, there is some~$\mu \in  K^\times$ such that~$\lm(x - \mu y) < \lm x$.
There are again two cases. If~$\mu = 1$, that is~$x\eqlt y$,
then the sequence of top-reduction $x-y \myto1 u_1 \myto1_E u_2 \myto1_E \dotsb \myto1_E u_n \myto1 0$
gives a sequence of tail equivalence
\[ x = y + (x-y) \smile_E y + u_1 \smile_E \dotsb \smile_E y + u_n \smile_E y. \]
In particular, $x\tdown_E y$.
If~$\mu \neq 1$, then~$\lm x = \lm y = \lm e$,
so there are reductions~$x \myto1_E x - \kappa e$ and~$y\myto1_E y - \nu e$, for some~$\kappa,\nu\in K^\times$.
By induction on~$\max(\lm x, \lm y)$, we may assume that~$x -  \kappa e \tdown_E y - \nu e$,
which implies~$x \tdown_E y$.

\implication{item:pivot-nf}{item:pivot-confluent}
Let~$x, y\in M$ such that~$x\tup_E y$.
Both~$\myto1_E$ and~$\tailred_E$ preserve equality modulo~$\brack{E}$, so $x-y \in \brack{E}$,
therefore~$x \tdown_E y$, by hypothesis.

\implication{item:pivot-confluent}{item:pivot-spair}
If~$e \eqlt \lambda f$, then~$e \myto1_E e - \lambda f$.
Moreover~$e\myto1_E e-e = 0$. The confluence hypothesis implies that~$e - \lambda f \to_E z$
and~$0 \to z'$ for some~$z, z'\in M$ such that~$z\smile_E z'$.
But~$0$ is reduced and only $\smile_E$-equivalent to itself.
So~$z = 0$ and~$e - \lambda f \to_E 0$.

\implication{item:pivot-spair}{item:pivot-regular}
The rewriting $e - \lambda f \to_E 0$ implies,
by definition of~$\to_E$, that~$e-\lambda f\in \brack{g\in E\st \lm g \leq \lm (e-\lambda f)}$.
Since~$\lm(e - \lambda f) < \lm e$, this gives the claim.

\implication{item:pivot-regular}{item:pivot-basis}
Let~$x\in \brack{E}$
and let~$m\in \basis$ minimal such that~$x\in \brack{e\in E\st \lm e \leq m}$.
We can write~$x = \lambda_1 e_1 + \dotsb + \lambda_r e_r$ with~$e_i \in E$, $\lambda_i \in K$
and~$\lm e_i \leq m$.
We also assume that the~$e_i$ are chosen in such a way that the number~$N$ of indices~$i$ such that~$\lm e_i = m$ is minimal. By minimality of~$m$, we have~$N \geq 1$.

Assume for contradiction that~$\lm x < m$.
In particular~$N \geq 2$ (otherwise, the leading monomials of the $e_i$ cannot cancel to give~$\lm x < m$).
Up to reordering the indices, we may assume that~$m = \lm e_1 = \lm e_2$.
Then~\ref{item:pivot-regular} ensures that~$e_1 - \mu e_2 \in \brack{e\in E\st \lm e < m}$
for some~$\mu \in K^\times$.
We can rewrite~$x$ as
\[ x = \lambda_1 (e_1 - \mu e_2) + (\lambda_2 + \lambda_1 \mu) e_2 + \lambda_3 e_3 + \dotsb + \lambda_r e_r, \]
in contradiction with the minimality of~$N$.
So~$\lm x = m$.
\qed




\bigskip

\begin{definition}[Pivot basis]
A subset~$E\subseteq M$ is a \margindef{pivot basis} if it satisfies the equivalent properties of Theorem~\ref{thm:linear-buchberger}.
\end{definition}

The concept of pivot basis is similar to that of a row echelon form of a matrix.
The following minor lemma, on increasing unions of pivot bases, will be used in Sections~\ref{sec:signature-bases}
and~\ref{sec:rewrite-bases}.

\begin{lemma}
  \label{lem:pivot-incr-union}
  Let~$I$ be a totally ordered set
  and let~$(E_i)_{i\in I}$ be a family of subsets of~$M$.
  If~$E_i \subseteq E_j$ for any~$i,j \in I$ with~$i < j$,
  and if each~$E_i$ is a pivot basis, then~$\cup_{i\in I} E_i$ is a pivot basis.
\end{lemma}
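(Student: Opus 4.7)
The plan is to verify the characterization \ref{item:pivot-basis}, which is the easiest to handle with linear combinations. Let $E = \cup_{i\in I} E_i$ and let $x\in \brack{E}$ be nonzero; I want to exhibit some $e\in E$ with $\lm e = \lm x$.

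By definition of $\brack{E}$, I can write $x$ as a finite linear combination $x = \lambda_1 e_1 + \dotsb + \lambda_r e_r$ with $\lambda_k \in K$ and $e_k \in E$. For each $k$, pick an index $i_k\in I$ such that $e_k \in E_{i_k}$. Since $I$ is totally ordered, the \emph{finite} set $\{i_1,\dotsc,i_r\}$ admits a maximum $j$ (no well-ordering is needed for this, only total ordering of a finite subset). The increasing hypothesis gives $E_{i_k}\subseteq E_j$ for every $k$, so every $e_k$ lies in $E_j$, and therefore $x\in \brack{E_j}$.

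Since $E_j$ is itself a pivot basis, property \ref{item:pivot-basis} applied to $x\in \brack{E_j}\setminus\{0\}$ yields some $e \in E_j$ with $\lm e = \lm x$. Because $E_j\subseteq E$, this $e$ witnesses \ref{item:pivot-basis} for $E$, concluding the proof.

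There is no real obstacle; the only thing to watch is that the index set is only assumed totally ordered, not well-ordered, so one should not invoke a global maximum. The argument uses the fact that any linear combination involves only finitely many terms, so only a finite (hence bounded) set of indices is relevant.
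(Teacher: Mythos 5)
Your proof is correct, but it takes a different route from the paper. You verify the union is a pivot basis via the leading-monomial characterization \ref{item:pivot-basis}: any element of the span involves only finitely many generators, hence lies in the span of a single~$E_j$ (the maximum of a finite set of indices in a totally ordered set), and \ref{item:pivot-basis} for~$E_j$ hands you the required element, which also belongs to the union. The paper instead verifies the S-pair characterization \ref{item:pivot-spair}: given~$e \eqlt \lambda f$ in the union, both elements already lie in a single~$E_{\max(j,k)}$, so the difference reduces to zero with respect to that set, and \emph{a fortiori} with respect to the larger union. Both arguments hinge on the same finiteness trick (here the paper only ever needs a maximum of two indices, you need a maximum of~$r$), and both lean on the equivalences of Theorem~\ref{thm:linear-buchberger} already being in place. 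Your version stays purely linear-algebraic, never mentioning the rewriting relation, which is arguably more elementary; the paper's version works at the level of the reduction relation, using the monotonicity~$\to_{E_j} \subseteq \to_E$, which keeps the verification local to pairs of elements. Your closing caution about the index set being only totally ordered, not well ordered, is well placed: the argument indeed only ever requires a maximum of a finite subset.
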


\begin{proof}
  We check the criterion~\ref{item:pivot-spair}.
  Let~$e, f \in \cup_i E_i$ and~$\lambda \in K^\times$ such that~$e \eqlt \lambda f$.
  By definition, $e$ is in some~$E_j$ while~$f$ is in some~$E_k$,
  so both~$e$ and~$f$ are in~$E_{\max(j,k)}$.
  Since~$E_{\max(j,k)}$ is a pivot basis, $e - \lambda f \to 0$ with respect to~$E_{\max(j,k)}$.
  \emph{A fortiori}, it rewrites to~$0$ with respect to~$\cup_i E_i$, which contains~$E_{\max(j,k)}$.
\end{proof}

\subsection{Monomial modules}
\label{sec:monomial-modules}

Recall that a monoid is a set~$A$ with an associative composition
law~$A\times A\to A$ (denoted multiplicatively) which admits an identity element
denoted~$1_A$.

\begin{definition}[Monomial module]
  A \margindef{monomial module} over a monoid~$A$
is a monomial space~$M$ with a linear action of~$A$ on~$M$ (denoted also multiplicatively)
such that:
\begin{enumerate}[label=M\arabic*]
  \item\label{it:momo:lm} $\forall a\in A, \forall f, g \in M, \lm f = \lm g \Rightarrow \lm(af) = \lm(ag)$;
  \item\label{it:momo:ord} $\forall a\in A, \forall f, g\in M, \lm f < \lm g \Rightarrow \lm(af) < \lm(ag)$;
\end{enumerate}
\end{definition}
Note that \ref{it:momo:ord} implies also the following:
\begin{enumerate}[label=M\arabic*, start=3]
  \item\label{it:momo:div} $\forall a\in A, \forall f\in M, \lm(af) \geq \lm f$.
\end{enumerate}
Indeed, if~$\lm(af) < \lm f$, then~$\lm(a^{k+1} f) < \lm(a^k f)$
for any~$k \geq 0$, which would contradicts the well-orderedness of~$\basis$.
Note also that~$M$ is torsionfree: if~$g \neq 0$, then~$ag \neq 0$ for all~$a\in A$,
as a consequence of~\ref{it:momo:ord}.

\begin{definition}[Action on the set of monomials, divisibility]
  The monoid~$A$ acts on the set of monomials~$\basis$ by~$a \lm f \eqdef \lm(af)$.
  A \margindef{divisor} of~$m\in \basis$ is an element~$n \in \basis$ such that~$a n = m$ for some~$a\in A$.
\end{definition}

In the case where~$M$ is a module over a polynomial algebra~$R = K[x_1,\dotsc,x_n]$,
it is natural to choose~$A = \left\{ \smash{x_1^{i_1} \dotsb x_n^{i_n}} \st i_1,\dotsc,i_n \geq 0 \right\}$,
although~$A = R \setminus \left\{ 0 \right\}$ is also a possible choice (with no major theoretical difference).
When~$M$ is a module over a noncommutative ring, the monoid $A = R\setminus \left\{ 0 \right\}$ is a natural choice.
For example, in the Weyl algebra~$W_1$ (see Example~\ref{example:w1}),
the set of monomials~$\left\{ x^n \partial^m \right\}$ is not closed under multiplication (because~$\partial x = x \partial + 1$).
We can also choose~$A$ to be the monoid generated by~$x$ and~$\partial$.
Section~\ref{sec:settings} presents more examples.



\subsection{\gbs{}}
\label{sec:buchberger}

\begin{definition}[\gb]
  Let~$M$ be a monomial space over a monoid~$A$.
  A subset~$G\subseteq M$ is a
\margindef{\gb{}} if $A G$, that is $\left\{ af \st a\in A, f\in G \right\}$,
is a pivot basis.
\end{definition}
It is naturally a key concept, see~\parencite[]{CoxLittleOShea_2015} for an introduction to the topic.
The purpose of signatures is not to replace the concept of \gbs, but rather to give a way to compute them.

\begin{remark}[Singletons]\label{remark:gb-singleton}
  Let~$f$ be a non zero element of~$M$.
  Is~$\left\{ f \right\}$ a \gb{}?
  It will be the case in many practical settings but it is not a consequence of the axioms above.
  A counterexample in a free algebra in two variables is given by \textcite{GreenMoraUfnarovski_1998} (see also Section~\ref{sec:free-algebras}).

  Unfolding the definitions, we see that for every singleton~$\{f\}$ to be a \gb{} is necessary and sufficient that:
  \begin{enumerate}[M1, resume]
    \item\label{it:momo:extra} $\forall f\in M, \forall g\in \brack{a f \st a\in A} \setminus \left\{ 0 \right\}, \exists a\in A, \lm g = \lm(af)$.
  \end{enumerate}
  This holds in most usual settings, and all settings presented in Section~\ref{sec:settings}.
  For example, if~$R = M$ is a polynomial ring, and~$A \subset R$ the monoid of monomials,
  then for any~$g \in \brack{a f \st a\in A}$,
  there is some~$h \in R$ such that~$g = hf$
  and we have~$\lm g = \lm(\lm(h) f)$.
\end{remark}

\section{Signatures}
\label{sec:signatures}

From now on, we fix a monoid~$A$ and two monomial modules over~$A$, denoted~$M$ and~$S$,
with respective sets of monomials denoted~$\basis$ and~$\SIG$.
A signature is an element of~$\SIG$.
We are interested in computing \gbs{} in~$M$ while~$S$ is the module of signatures.

In addition to the axioms for the monomial module~$S$, we also require that
\begin{enumerate}[label=S\arabic*]
  \item\label{it:sig:refines} $\forall a, b \in A, \forall \sigma \in \SIG, \forall m \in \basis, a \sigma = b\sigma \text{ and } \sigma \neq 0 \Rightarrow am = bm$.
  \item\label{it:sig:compat} $\forall a, b\in A, \forall \sigma \in \SIG, \forall m \in \basis, a \sigma \leq b \sigma \text{ and } \sigma \neq 0 \Rightarrow am \leq bm$.
\end{enumerate}
Naturally \ref{it:sig:compat} implies \ref{it:sig:refines}, but we state them separately because \ref{it:sig:compat} will only be useful later in Section~\ref{sec:algorithms} (and specifically in Lemma~\ref{lem:div-implies-dom}) when we will study algorithms for computing \sbs{} and termination issues.
This hypothesis is called \emph{compatibility} by \textcite{GaoVolnyWang_2016} and others.

In concrete situations, we will have a natural module map~$\phi : S \to M$ (that is a~$K$-linear map commuting with the action of~$A$),
but it is not a requirement for the theory.
As \textcite{ArriPerry_2011}, or \textcite{Kambe_2023} more recently, we can define in this context a notion of signature for the elements of~$\phi(S)$
by
\[ \widetilde{\sig}(f) \eqdef \min\left\{ \lm p \in\SIG \st p\in S \text{ and } \phi(p) = f \right\}. \]
This lead however to conceptual difficulties, because the signatures that appear in computations may not coincide with this definition, creating a gap between the theory and tha algorithms.
In the axiomatic approach, we do not define what \emph{is} the signature of elements in~$\phi(S)$.
Instead, we adjunct elements of~$M$ with signatures, and describe the required consistency properties.

\subsection{Signature bases}
\label{sec:signature-bases}

\begin{definition}[Sigpair, sigset, polynomial part, signature]
 A \margindef{sigpair} is an element of~$M\times \SIG$.
 A \margindef{sigset} is a set of sigpairs.
 The first element of a sigpair~$f$, denoted~$f^\natural$, is the \emph{polynomial part} of~$f$ (eventhough~$f$ may not be a polynomial, strictly speaking).
 The second element of a sigpair~$f$, denoted~$\sig f$, is the \margindef{signature} of~$f$.
\end{definition}

\begin{definition}[$\AGlt$, $\AGle$, regular reduction]
For a sigpair~$f$ and some~$a\in A$, we define the sigpair~$a f = (a f^\natural, a \sig f)$.
For notational convenience, we also define a scalar multiplication $\lambda f = (\lambda f^\natural, \sig f)$, for $\lambda \in K^\times$.
For any sigset~$G$, let $A G$ denote the sigset
$AG = \left\{ af \st a\in A, f \in G \right\}$.
For~$\sigma\in\SIG$, let
\[ AG^{\sigma} \eqdef \left\{ \smash{af^\natural} \st a\in A \text{ and } a\sig f = \sigma \right\}, \  AG^{\leq \sigma} \eqdef \cup_{\tau \leq \sigma} AG^\tau \text{ and } AG^{< \sigma} \eqdef \cup_{\tau < \sigma} AG^\tau. \]
They are subsets of~$M$, not sigsets.
Each set~$\AGlt$ defines a reduction rule~$\myto1_{\AGlt}$ (Definition~\ref{def:top-red}), that we denote~$\myto1_G^\sigma$, the \margindef{regular reduction} in signature~$\sigma$.
On~$M\times \SIG$, we define~$f \myto1_G g$ if~$\sig f = \sig g$ and~$f^\natural \myto1_G^{\sig f} g^\natural$.
This is the \emph{regular reduction} of sigpairs.
The tail equivalence relations~$\smile_G^\sigma$ and~$\smile_G$ are defined similarly using~$\smile_{\AGlt}$.
\end{definition}
The reduction relations~$\myto1_G^\sigma$, for any~$\sigma\in \SIG$, are Noetherian, as any top-reduction relation in a monomial space.
So~$\myto1_G$ is also Noetherian, and every sigpair has at least one normal form modulo regular reduction.
The regular reduction~$\to_G$ is the same as the \emph{regular top $\mathfrak s$-reduction} defined by \textcite{EderFaugere_2017}.
In contrast, we will not make use of \emph{singular $\mathfrak s$-reductions} and \emph{tail $\mathfrak s$-reductions}.

\begin{example}[Univariate polynomials]
  Let~$M = S = K[x]$, $\basis = \SIG = \left\{x^n \st n \geq 0\right\} \cup \left\{ 0 \right\}$, with the usual ordering.
  Let~$A = \left\{ x^n \st n \geq 0 \right\}$.
  Let~$G$ be the sigset $\left\{ (x-1, x) \right\}$, made of a single sigpair with polynomial part~$x-1$ and signature~$x$.
  For any~$m \geq 0$, we check that
  \[ \AGlt[x^m] = \left\{ x^k (x-1) \st 0 \leq k < m-1 \right\} \text{ and }  \AGle[x^m] = \left\{ x^k (x-1) \st 0 \leq k \leq m-1 \right\}. \]
  Both are pivot bases. 

  Consider the sigpairs~$f_1 = (x^2, x)$ and~$f_2 = (x^2, x^3)$.
  They have the same polynomial part~$x^2$ but not the same signature.
  The sigpair~$f_1$ is~$\to_G$ reduced.
  Indeed, the only possible reduction to investigate is that of~$f_1$ by~$xg$ (where~$g$ is the unique element of~$G$), but~$\sig(xg) = x^2$, which exceeds~$\sig(f_1) = x$, forbidding the reduction.
  In contrast, we have reductions
  \[ f_2 \myto1_G (x, x^3) \myto1_G (1, x^3), \]
  using~$xg$ and~$g$ successively.
  This exemplifies the additional constraints that signatures impose on reductions, compared to the usual setting without signatures.
\end{example}

The following statements are direct consequences of the axioms for monomial modules.
\begin{lemma}\label{lem:filtration}
  Let~$G$ be a sigset, let~$\sigma\in\SIG$ and~$a\in A$. Then
  \begin{itemize}
    \item for any~$\tau \leq \sigma$, $AG^{\leq \tau} \subseteq AG^{\leq \sigma}$;
    \item for any $f\in AG^{\leq \sigma}$, $af \in AG^{\leq a\sigma}$;
    \item for any $f\in AG^{< \sigma}$, $af \in AG^{< a\sigma}$.
  \end{itemize}
\end{lemma}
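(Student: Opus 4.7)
The plan is to unfold the definitions and invoke the monomial module axioms applied to~$S$, namely \ref{it:momo:lm} and \ref{it:momo:ord}, together with associativity of the monoid action on~$M$ and on~$S$.

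The first bullet is immediate: if $\tau \leq \sigma$, then by transitivity $\{\tau' \in \SIG \st \tau' \leq \tau\} \subseteq \{\tau' \in \SIG \st \tau' \leq \sigma\}$, so the corresponding unions $AG^{\leq \tau}$ and $AG^{\leq \sigma}$ satisfy the claimed inclusion.

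For the remaining two bullets, the central ingredient is a preliminary observation: for any $\tau \leq \sigma$ in $\SIG$ and any $a \in A$, one has $a\tau \leq a\sigma$, with strictness preserved. I would establish this by case analysis. When $\tau$ and $\sigma$ are both nonzero, pick $p, q \in S$ with $\lm p = \tau$ and $\lm q = \sigma$; then \ref{it:momo:lm} handles the equality case and \ref{it:momo:ord} (applied in the monomial module~$S$) handles the strict case, giving $\lm(ap) \leq \lm(aq)$ with strictness preserved, that is $a\tau \leq a\sigma$. When $\tau = 0$, we have $a\tau = 0$, which is the minimum of $\SIG$, so $a\tau \leq a\sigma$ holds trivially; moreover if $\sigma \neq 0$, then torsion-freeness (a consequence of \ref{it:momo:ord}, as noted after the axioms) gives $a\sigma \neq 0$, so the inequality is strict.

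With this in hand, the second bullet is immediate. Writing $f = b g^\natural$ with $g \in G$, $b \in A$, and $b\sig g \leq \sigma$, associativity gives $af = (ab) g^\natural$ and $(ab)\sig g = a(b\sig g) \leq a\sigma$, so $af \in AG^{\leq a\sigma}$. The third bullet follows the same pattern: $f = b g^\natural$ with $b \sig g < \sigma$ yields $(ab)\sig g = a(b\sig g) < a\sigma$ by the strict half of the preliminary observation, hence $af \in AG^{<a\sigma}$.

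There is no real obstacle here; the only minor care needed is the handling of the zero signature, to make sure the preliminary monotonicity statement covers the boundary case. Everything else is unfolding of notation and the associativity $a(bx) = (ab)x$ of the monoid action on both $M$ and $S$.
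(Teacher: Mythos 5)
Your proof is correct, and it fills in precisely the routine verification the paper waves at with ``direct consequences of the axioms for monomial modules'' (the lemma is stated without proof in the text). The key step you isolate — monotonicity of the $A$-action on $\SIG$, i.e.\ $\tau \leq \sigma \Rightarrow a\tau \leq a\sigma$ with strictness preserved, following from \ref{it:momo:lm} and \ref{it:momo:ord} applied in $S$ together with the definition $a\lm p = \lm(ap)$ — is exactly what is needed, and the rest is the unfolding of notation and associativity that you describe.
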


Signature-based algorithms for \gbs{} actually compute something more constrained than \gbs.
\begin{definition}[Signature basis]
  A \margindef{\sb{}} is a sigset~$G$ such that for any~$\sigma \in \SIG$
  the set $\AGle$ is a pivot basis.
\end{definition}

Using Theorem~\ref{thm:linear-buchberger}, and a bit of signature manipulation to reduce from~$\AGle$ to~$\AGlt$, we can prove a sigset~$G$ is a \sb{} if and only if regular reduction~$\to_G$ is confluent modulo tail equivalence~$\smile_G$.
Signature bases are a refinement of \gbs{}, in the sense that
forgetting the signatures in a \sb{} gives a \gb.

\begin{lemma}\label{lem:sb-implies-gb}
  If~$G$ is a \sb{}, then~$G^\natural = \left\{ \smash{f^\natural} \st f \in G \right\}$ is a~\gb.
\end{lemma}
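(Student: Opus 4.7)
The plan is to deduce this directly from Lemma~\ref{lem:pivot-incr-union}, once we recognize the ``Gröbner basis'' criterion for $G^\natural$ as a statement about the union of the filtration $\AGle$.

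First I would unpack definitions. By definition, $G^\natural$ is a \gb{} exactly when $A G^\natural = \left\{ a f^\natural \st a\in A,\ f\in G \right\}$ is a pivot basis. I would then check the identity
\[
A G^\natural \;=\; \bigcup_{\sigma \in \SIG} \AGle.
\]
The inclusion $\supseteq$ is immediate from the definition $\AGle = \bigcup_{\tau\leq\sigma} AG^\tau$ and $AG^\tau \subseteq AG^\natural$. For $\subseteq$, any $a f^\natural \in A G^\natural$ satisfies $a f^\natural \in AG^{a\sig f}$, so in particular $a f^\natural \in \AGle[a\sig f]$.

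Next I would verify the hypotheses of Lemma~\ref{lem:pivot-incr-union} applied to the family $(\AGle)_{\sigma \in \SIG}$, indexed by the totally ordered (indeed well-ordered) set $\SIG$. Monotonicity, namely $\sigma \leq \tau \Rightarrow \AGle \subseteq \AGle[\tau]$, is exactly the first bullet of Lemma~\ref{lem:filtration}. That each $\AGle$ is a pivot basis is the very definition of $G$ being a \sb{}. Lemma~\ref{lem:pivot-incr-union} then yields that the union $A G^\natural$ is a pivot basis, which by definition says that $G^\natural$ is a \gb{}.

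There is essentially no obstacle here: the statement is a bookkeeping consequence of the filtration structure packaged into the definition of a signature basis, together with the already-proved stability of the pivot basis property under increasing unions. The only thing to be careful about is the set-theoretic identity above, in particular allowing $\sig f = 0$ (in which case $a\sig f = 0$ and $af^\natural \in \AGle[0]$), so that the union really exhausts $A G^\natural$.
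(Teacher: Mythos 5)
Your proof is correct and takes essentially the same route as the paper: express $A G^\natural$ as the increasing union $\bigcup_{\sigma\in\SIG}\AGle$, note monotonicity and that each $\AGle$ is a pivot basis by the definition of a \sb{}, then invoke Lemma~\ref{lem:pivot-incr-union}. Your extra care with the set-theoretic identity (including $\sig f = 0$) is a harmless elaboration of what the paper states tersely.
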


\begin{proof}
  The set $A G^\natural$ is the union of all~$A G^{\leq\sigma}$, with~$\sigma \in \SIG$. By construction, $A G^{\leq\sigma} \subseteq AG^{\leq\tau}$ if~$\sigma \leq \tau$. So Lemma~\ref{lem:pivot-incr-union} applies and shows that~$A G^\natural$ is a pivot basis.
\end{proof}

\subsection{Prebases}
\label{sec:prebases}

\begin{definition}[Prebasis]
  A sigset~$G$ is a \margindef{prebasis} if
  \begin{enumerate}[label=P\arabic*,leftmargin=*]
    \item\label{it:prebasis:zero} $AG^0 \subseteq \left\{ 0_M \right\}$;
    \item\label{it:prebasis:colin} $\forall \sigma\in\SIG, \forall f, g\in AG^\sigma, \exists \lambda \in K^\times, f - \lambda g \in \LAGlt$.
  \end{enumerate}
\end{definition}
Equivalently, \ref{it:prebasis:colin} means that any~$f \in AG^\sigma$
generates the quotient space $\LAGle/\LAGlt$ as a~$K$-linear space.
The concept of prebasis embodies the postulate that ``two elements with the same signature are substitutable''.
A prebasis is an admissible input for signature algorithms.

\begin{example}\label{example:trivial-sig}
  A trivial choice for the set of signatures is~$\SIG = \basis$.
  Let~$G$ be a sigset such that~$\sig f = \lm f^\natural$ for any~$f \in G$.
  Then~$G$ is a prebasis if and only if~$G^\natural$ is a \gb{}.
  Indeed, in this case, $\LAGlt = \brack{ \smash{ag^\natural} \st a \in A, g\in G, \lm(ag) < \sigma}$.
  So the condition for being a prebasis is exactly Criterion~\ref{item:pivot-regular} for~$A G^\natural$ to be a pivot basis, that is for~$G^\natural$ to be a \gb{}.
\end{example}

\begin{example}\label{example:trivial-prebasis}
  If~$AG^0 = \varnothing$ and if~$AG^\sigma$ contains exactly zero or one element for any~$\sigma\in \SIG$, then~$G$ is a prebasis.
  This follows directly from the definition.
\end{example}

In the course of computing a \sb, or a \rb{},  as we will see latter,
we will add new elements to a prebasis~$G$ given as input.
Naturally, the construction of new elements must respect both the polynomial part and the signature.
In particular, we want to preserve the prebasis property.
Typically, we construct new elements by regular reduction: for any~$a\in A$ and $g \in G$,
we allow the insertion of any sigpair~$h$ such that~$a g \to_G h$.
In view of using F4-style reductions (Section~\ref{sec:simult-reduct}),
we give a wider definition of allowed extensions of a sigset, that we call \emph{sigsafe extensions}.

\begin{definition}[Sigsafe extension]
  A sigset~$H$ is a \margindef{sigsafe extension} of a sigset~$G$
  if $G\subseteq H$ and for any~$h \in H$, there is some~$f\in AG^{\sig h}$ and some~$\lambda\in K^\times$ such that~$h^\natural \equiv \lambda f^\natural \pmod{\LAGlt[\sig h]}$.
\end{definition}
The problem of computing \sbs{} is more formally stated as ``given a prebasis~$G$, compute a \sb{} that is a sigsafe extension of~$G$''.
For computing a \gb{} of the submodule of~$M$ generated by a set~$G \subseteq M$, we will follow the steps:
first, choose a signature module~$S$,
second, we endow the elements of~$G$ with signatures to turn it into a prebasis;
third, we compute a \sb{} that is a sigsafe extension of~$G$;
four, we remove signatures to obtain a \gb{} (Lemma~\ref{lem:sb-implies-gb}).

\begin{lemma}\label{lem:sigsafe-ext-prebasis}
  Let~$G$ be a prebasis and let~$H$ be a sigsafe extension of~$G$.
  Then:
  \begin{itemize}
    \item $\forall \sigma\in\SIG, \LAGlt = \LAXlt{H}$ and $\LAGle = \LAXle{H}$;
    \item $H$ is a prebasis;
    \item if~$H'$ is a sigsafe extension of~$H$, it is also a sigsafe extension of~$G$.
  \end{itemize}
\end{lemma}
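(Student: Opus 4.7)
The plan is to prove the three items in sequence, since the first is the main technical step and it enables the other two by identifying the two filtrations~$\LAGlt$ and~$\LAXlt{H}$.

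For the first item, the inclusions $\LAGlt \subseteq \LAXlt{H}$ and $\LAGle \subseteq \LAXle{H}$ are immediate from $G \subseteq H$. For the reverse inclusion $\LAXlt{H} \subseteq \LAGlt$, I would take a generator~$ah^\natural$ of $\AXlt{H}$, with $h \in H$ and $a\sig h < \sigma$, and apply the sigsafe extension property: there exist $f \in AG^{\sig h}$ and $\lambda \in K^\times$ with $h^\natural - \lambda f \in \LAGlt[\sig h]$. Multiplying by~$a$ gives $ah^\natural - \lambda af \in a \cdot \LAGlt[\sig h]$. The axiom~\ref{it:momo:ord} applied in the signature module shows that multiplication by~$a$ strictly preserves the ordering, so each generator $bg^\natural$ of $\LAGlt[\sig h]$ (with $b\sig g < \sig h$) is sent to $ab g^\natural$ with $ab\sig g < a\sig h < \sigma$, hence into~$\AGlt$. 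Combined with $af \in AG^{a\sig h} \subseteq \AGlt$, this places $ah^\natural$ in $\LAGlt$. The argument for $\LAGle$ is the same with non-strict inequalities at the last step.

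For the second item, verifying~\ref{it:prebasis:zero} amounts to noting that $\LAGlt[0] = \{0\}$ (since $0$ is the smallest signature, so $AG^{<0}$ is empty): for any $ah^\natural \in AH^0$ the sigsafe rewriting collapses to $ah^\natural = \lambda af$ with $af \in AG^0 \subseteq \{0\}$ by the prebasis property of~$G$, forcing $ah^\natural = 0$. For~\ref{it:prebasis:colin}, given $f, g \in AH^\sigma$, I would use the sigsafe property twice to obtain $f \equiv \mu_1 f_1$ and $g \equiv \mu_2 g_1$ modulo $\LAGlt$ with $f_1, g_1 \in AG^\sigma$, then invoke the prebasis property of~$G$ to find $\nu \in K^\times$ with $f_1 - \nu g_1 \in \LAGlt$, and conclude $f - (\mu_1\nu/\mu_2)g \in \LAGlt = \LAXlt{H}$. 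For the third item, I would chain the two sigsafe rewritings: for $h' \in H'$, one has $h'^\natural \equiv \lambda h \pmod{\LAXlt[\sig h']{H}}$ with $h \in AH^{\sig h'}$, and writing $h = a h_0^\natural$ with $h_0 \in H$ and applying sigsafeness of~$H$ over~$G$ (then multiplying by~$a$) yields $h \equiv \mu af \pmod{\LAGlt[\sig h']}$ with $af \in AG^{\sig h'}$. The identity $\LAXlt[\sig h']{H} = \LAGlt[\sig h']$ from the first item combines these two congruences into the required sigsafe relation for $h'$ over~$G$.

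The only delicate point is the reverse inclusion in the first item, where one must verify that sigsafe rewriting genuinely respects the signature filtration after multiplication by elements of~$A$; this is exactly where the order-preservation axiom~\ref{it:momo:ord} in the signature module is essential. Once that is established, the remaining items are routine bookkeeping.
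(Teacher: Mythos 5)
Your proposal is correct and takes essentially the approach the paper intends: the paper skips the proof, stating it is a simple application of Lemma~\ref{lem:filtration}, and your key step—multiplying the sigsafe congruence by~$a$ and using order preservation (\ref{it:momo:ord} applied in the signature module) to stay within the filtration, i.e.\ $af \in AG^{\leq a\sigma}$ and $a\cdot AG^{<\sigma} \subseteq AG^{<a\sigma}$—is exactly the content of that lemma, re-derived rather than cited. The second and third items are then the routine combination of congruences together with the identification $\LAGlt = \LAXlt{H}$, just as intended.
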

We skip the proof, which is a simple application of Lemma~\ref{lem:filtration}.



Generalizing Example~\ref{example:trivial-sig}, we may construct prebases in~$M$ from a \gb{} in~$S$.

\begin{theorem}\label{lem:gb-implies-prebasis}
  Let~$\phi : S \to M$ be a $K$-linear map commuting with the action of~$A$.
  If~$H \subseteq S$ is a \gb{}, then $\left\{ (\phi(h), \lm h)  \st h \in H \right\} \subseteq M \times \SIG$
  is a prebasis.
\end{theorem}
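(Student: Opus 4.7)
The plan is to verify the two prebasis axioms \ref{it:prebasis:zero} and \ref{it:prebasis:colin} directly for the sigset $G = \left\{(\phi(h), \lm h) \st h \in H\right\}$, by translating each requirement through $\phi$ into a statement in $S$ that follows from $AH$ being a pivot basis. Axiom \ref{it:prebasis:zero} is bookkeeping: any element of $AG^0$ has the form $a\phi(h) = \phi(ah)$ for some $h\in H$ with $a\lm h = 0$, and since $a\lm h = \lm(ah)$ by definition of the action on monomials, axiom \ref{it:lm:zero} in $S$ forces $ah = 0$, hence $a\phi(h) = 0$.

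For \ref{it:prebasis:colin}, I would pick $f = \phi(ah_1)$ and $g = \phi(bh_2)$ in $AG^\sigma$, so that $\lm(ah_1) = \lm(bh_2) = \sigma$; the degenerate case $\sigma = 0$ reduces to the previous paragraph. First, apply axiom \ref{it:lm:colin} inside $S$ to produce $\lambda \in K^\times$ with $\lm(ah_1 - \lambda b h_2) < \sigma$. The key step is then to lift this inequality to the membership
\[ ah_1 - \lambda b h_2 \in \left\langle e \in AH \st \lm e < \sigma \right\rangle: \]
since $AH$ is a pivot basis, criterion \ref{item:pivot-rewrites} yields $ah_1 - \lambda bh_2 \to_{AH} 0$, and because each top-reduction uses a reducer whose leading monomial equals that of the current intermediate element, the reduction exhibits the difference as a combination $\sum_i \mu_i c_i k_i$ with $c_i\in A$, $k_i\in H$, $\mu_i \in K^\times$, and $\lm(c_i k_i) < \sigma$. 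Applying $\phi$ termwise (using that it commutes with the $A$-action) gives $f - \lambda g = \sum_i \mu_i c_i \phi(k_i)$, and each summand lies in $AG^{\lm(c_i k_i)} \subseteq AG^{<\sigma}$, so $f - \lambda g \in \LAGlt$ as desired.

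I expect the only non-mechanical point to be the extraction of a strict leading-monomial bound from the pivot basis property—\emph{i.e.}, deducing the membership displayed above from the mere inequality given by \ref{it:lm:colin}. This is essentially the content of Theorem~\ref{thm:linear-buchberger} applied in $S$, so the main obstacle is a conceptual one (keeping track of two monomial modules at once) rather than a computational one; the transport through $\phi$ is then a formality because $\phi$ is $A$-linear and by construction $c_i\phi(k_i)$ automatically lands in the correct signature slice $AG^{\lm(c_i k_i)}$.
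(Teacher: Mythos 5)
Your proof is correct and follows essentially the same route as the paper's. The only superficial difference is in P2: you invoke criterion \ref{item:pivot-rewrites} for $AH$ and re-derive the explicit decomposition $ah_1 - \lambda bh_2 = \sum_i \mu_i c_i k_i$ with $\lm(c_i k_i) < \sigma$ by unwinding the reduction, whereas the paper cites criterion \ref{item:pivot-regular} directly, which hands over this decomposition in one step; your inline argument is exactly the paper's own proof of the implication from \ref{item:pivot-spair} to \ref{item:pivot-regular}, so the mathematical content is identical.
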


\begin{proof}
  Let~$G = \left\{ (\phi(h), \lm h) \st h \in H \right\}$.
  We first check~\ref{it:prebasis:zero}.
  Let~$f \in AG^0$.
  By definition, there is some~$h \in H$ and~$a\in A$ such that~$f = \phi(ah)$
  and~$\lm(ah) = 0$. This implies~$ah = 0$, so~$f = 0$.

  As for~\ref{it:prebasis:colin},
  let~$\sigma\in \SIG$, $f,g\in G$
  and~$a, b\in A$ such that~$\sigma = a \sig f=b \sig g$.
  By definition, there are some~$h, k\in H$ such that~$\lm h = \sig f$, $\lm k = \sig g$,
  $f = \phi(h)$ and~$g =\phi(k)$.
  In particular~$\sigma = \lm(ah) = \lm(bk)$, and there is some~$\lambda\in K$ such that~$ah \eqlt \lambda bk$.
  $H$ is a \gb{}, so~$AH$ is a pivot basis, so Criterion~\ref{item:pivot-regular} applies
  and we have~$ah - \lambda bk  = \sum_i m\mu_i c_i l_i$ for some~$\mu_i\in K$, $c_i\in A$ and $l_i \in H$ such that~$\lm(c_i h_i) < \sigma$.
  In particular, $af - \lambda bg = \sum_i \mu_i c_i \phi(l_i)$
  and~$c_i \phi(l_i) \in AG^{< \sigma}$.
\end{proof}

\begin{remark}[Constructing prebases ``for free'']
  \label{remark:prebases-for-free}
  As a special case of Theorem~\ref{lem:gb-implies-prebasis}, we recover the following classical construction
  which underlies all previous work on signature algorithms.
  Given~$g_1,\dotsc,g_r \in M$, we want to find a signature module~$\mathcal{S}$ and signatures~$\sigma_1,\dotsc,\sigma_r$
  such that~$\left\{ (g_i, \sigma_i) \right\}_{1\leq i \leq r}$ is a prebasis.
  This is the first step of the general strategy for computing \gbs{} using signatures.
  The following construction applies when each of the singletons $\left\{ g_i \right\}$ is a \gb{} (this is the common case, see Remark~\ref{remark:gb-singleton}).

  We choose the signature module~$S = M^r \simeq M \otimes K^r$.
  If~$B_M$ is the distinguished basis of the monomial space~$M$,
  we define~$B_S = \left\{ m \otimes e_i \st m\in B_M, 1\leq i \leq r \right\}$
  as the distinguished basis of~$S$, where~$\{ e_1,\dotsc,e_r \}$ denotes the canonical basis of~$K^r$.
  Let~$\SIG = B_S \cup \left\{ 0 \right\}$ denote the set of leading monomials of~$S$.
  There are two natural well-orders on~$\SIG$, called \emph{position-over-term} (POT)
  and \emph{term-over-position} (TOP): $0$ is always the minimal elements, and for the nonzero monomials,
  \begin{description}
    \item[POT] $m \otimes e_i \leq_\SIG n \otimes e_j$ if $i < j$ or~$i=j$ and~$m \leq_\basis n$;
    \item[TOP] $m \otimes e_i \leq_\SIG n \otimes e_j$ if $m <_\basis n$ or~$m=n$ and~$i \leq j$.
  \end{description}
  The monoid~$A$ acts on $S$ by
  $a \cdot (f_1,\dotsc,f_r) = (af_1,\dotsc,af_r)$,
  turning~$S$ into a monomial module over~$A$.
  Moreover, \ref{it:sig:refines} and~\ref{it:sig:compat}
  are satisfied, so~$S$ is a suitable signature module, with either the POT or the TOP ordering.
  Let~$\phi : S \to M$ defined by~$\phi(f_1,\dotsc,f_r) = f_1+\dotsb+f_r$, which commutes with the action of~$A$.

  Since~$\left\{ g_i \right\}$ is a \gb{} in~$M$, for any~$1\leq i\leq r$, it follows easily that~$\left\{ g_i \otimes e_i \right\}$ is a \gb{} in~$S$.
  Moreover, the leading monomials of some~$a g_i \otimes e_i$ and some other~$b g_j \otimes e_j$ can never be equal, unless~$i = j$.
  So it follows that the set~$H = \left\{ g_i \otimes e_i \st 1\leq i \leq r \right\}$ is a \gb{} in~$S$.
  By Theorem~\ref{lem:gb-implies-prebasis},
  this implies that
  \[ G = \left\{ \left( g_i, \lm g_i \otimes e_i \right) \st 1\leq i \leq r \right\} \]
  is a prebasis. And, by construction, $G^\natural = \left\{ g_1,\dotsc,g_r \right\}$.

  This construction shows that, at least under the extra assumption~\ref{it:momo:extra} on~$M$, we have a systematic way to construct
  prebases from arbitrary finite sets of~$M$, enabling the general strategy of using signatures to compute \gbs.
\end{remark}

\begin{example}\label{expl:mora-system}
  Consider the case where~$M = \mathbb{Q}[x, y]$, where the monomial basis of~$M$ is given the degree reverse lexicographic order, with~$x < y$,
  $A = \left\{ x^i y^j \st i, j \geq 0 \right\}$,
  and consider
  \[ g_1 = \underline{x^{2} y^{2}} - 1,\ g_2 = \underline{y^{5}} - x^{2} y, \text{ and } g_3 = \underline{x^{5}} - x y^{2}, \]
  with leading monomial underlined. This is an example of \textcite{Mora_1994}.
  Following the recipe in Remark~\ref{remark:prebases-for-free},
  we consider the signature module~$S = M^3$, with the TOP ordering,
  and we endow the $g_i$ with signatures~$\sig g_i = \lm g_i \otimes e_i$, so we construct the following sigset:
  \[ G = \left\{ \left(  \underline{x^{2} y^{2}} - 1, x^2y^2 \otimes e_1 \right), \left(\underline{y^{5}} - x^{2} y, y^5 \otimes e_2 \right), \left(  \underline{x^{5}} - x y^{2}, x^5 \otimes e_3 \right)\right\}. \]
  In this case, the fact that~$G$ is a prebasis follows from the trivial reason exposed in Example~\ref{example:trivial-prebasis}.

  It is also common to choose the unshifted signature~$\sig g_i = 1 \otimes e_i$.
  It is equally valid to choose~$\sig g_i = m_i \otimes e_i$, for any non zero~$m_i \in \basis$, from the theoretical point of view.
  The choice~$\sig g_i = \lm g_i \otimes e_i$ comes naturally because in the general setting of a monomial module over~$A$, there is no “1”, so we cannot write, in general, $\sig g_i = 1 \otimes e_i$,
  while we can always write~$\sig g_i = \lm g_i \otimes e_i$.
  \textcite[]{EderFaugere_2017} only work in the polynomial case and fix~$\sig g_i = 1\otimes e_i$. However, we can change the term ordering on~$S$ to what they call \emph{lt-pot}, or Schreyer's order \parencite[Definition~2.5]{EderFaugere_2017},
  and recover the behavior of the choice~$\sig g_i = \lm g_i \otimes e_i$ with the TOP order on~$S$.
  \textcite[\S14]{EderFaugere_2017} suggest that this natural choice $\sig g_i = \lm g_i \otimes e_i$, is better for performance than the unshifted signatures. This is exemplified in Figures~\ref{fig:top} and~\ref{fig:unshifted-top}.
\end{example}

\begin{example}[Sum of submodules]
  Let~$G$ and~$H$ be two finite \gbs{} in~$M$.
  Consider the problem of computing a \gb{} $J$ such that~$\brack{AG} + \brack{AH} = \brack{AJ}$.
  We could use, as in Remark~\ref{remark:prebases-for-free},
  the signature module~$M^{\# G + \#H}$ to turn~$G\cup H$ into a prebasis.
  However, this will lead to many useless computations (reductions to zero) because we did not take into account the fact that~$G$ and~$H$ are already \gbs, so all the S-pairs between two elements of~$G$ (resp.~$H$)
  already reduce to~0.

  Instead, we consider the monomial signature module~$S = M^2$
  with the map~$\phi : (u, v) \in S \to u+v \in M$.
  The set
  $B = \left\{ (g, 0) \st g\in G \right\} \cup \left\{ (0, h) \st h \in H \right\}$
  is a \gb{} in~$S$ because the elements~$(g, 0)$ cannot interact with the elements~$(0,h)$.
  By Theorem~\ref{lem:gb-implies-prebasis}, the sigset
  \[ \left\{(\phi(b), \lm b) \st b \in B\right\} = \left\{ (g, \lm g \otimes e_1) \st g\in G \right\} \cup \left\{ (h, \lm h \otimes e_2) \st h \in H \right\}\]
  is a prebasis. We can use it as a starting point to compute a \gb{} of the sum~$\brack{AG} + \brack{AH}$.
  This saves some computations because
  the signatures encode that elements of~$G$ (resp.~$H$) do not need to be reduced with each other.
\end{example}

\subsection{Rewrite bases}
\label{sec:rewrite-bases}


We now introduce \rbs{}. The definition is purely combinatorial, depending only on leading monomials and signatures,
in addition to the prebasis condition.
We will see that a \rb{} is a \sb{} (Corollary~\ref{coro:rb-implies-sb}).
As for pivot bases, prebases, and \gbs{}, being a \sb{} is a matter of subtle arithmetic conditions.
(One cannot change the coefficients of a \sb{} and hope that it remains a \sb.)
Somehow, we can split these conditions into, on the one hand, the prebasis property, and on the other hand, the combinatorial properties of \rbs{}.
The concept was first introduced by \textcite{EderRoune_2013}.
It is simplified here by removing the need for what they call a ``rewrite order''.
So my definition of \rb{} is actually different from theirs, but relates more simply to \sbs{} (Theorem~\ref{prop:groebner-rewrite}).


\begin{definition}[Rewrite basis]
  For~$\sigma\in \SIG$, a sigset~$G$
  is a \margindef{\rb{} at~$\sigma$} if either~$AG^\sigma = \varnothing$, or there is some $\to_G$-reduced element~$f \in AG$ with~$\sig f = \sigma$.
  A sigset~$G$ is a \margindef{\rb{}} if it is a prebasis and a \rb{} at~$\sigma$ for any~$\sigma\in\SIG$.
\end{definition}

\begin{example}[continued]\label{expl:mora-system-2}
  The sigset in Example~\ref{expl:mora-system} is a prebasis but not a \rb.
  The smallest signature at which it is not a \rb{} is~$\sigma = x^2 y^5 \otimes e_2$.
  Indeed,
  \[ AG^\sigma = \left\{ x^2 g_2 \right\} = \left\{ \left(\underline{x^2 y^5} - x^4 y, x^2 y^5 \otimes e_2 \right) \right\}, \]
  and there is a top reduction of~$x^2 g_2$ by~$y^3 g_1$. So~$AG^\sigma$ does not contain any~$\to_G$-reduced element.
  Note that~$x^2 g_2$ does not reduce~$y^3 g_1$ because $\sig(x^2 g_2) > \sig(y^5 g_1)$,
  so~$G$ is a \rb{} at~$x^2 y^5 \otimes e_1$.
  In constrast to the classical setting, the symmetry of critical pairs is broken by the signatures.

  We also check that~$G$ is a \rb{} at any signature~$m\otimes e_1$, for~$m \in \mathcal{M}$.
  These signatures are not empty when~$m$ is a multiple of~$\lm g_1 = x^2y^2$, say~$m = a \lm g_1$.
  There may be a possible reduction when~$m = b\lm g_i$ (with~$i = 2$ or~3),
  but in this case, we have~$a \sig g_1 = a (\lm g_1 \otimes e_1) = m \otimes e_1$
  and, similarly, $b \sig g_i = m \otimes e_i$. The definition of the TOP order, ensures that~$a \sig g_1 < b\sig g_i$, so the reduction is not possible.
\end{example}

The defining property of \rbs{} implies that of \sbs{}.
This is the first aspect of the definition. (See Section~\ref{sec:relation-between-sbs} for more details on the relation between \rbs{} and \sbs{}.)

\begin{proposition}\label{lem:rb-le}
  Let~$G$ be a prebasis and let~$\sigma\in \SIG$
  such that~$G$ is a \rb{} at any signature~$\tau \leq \sigma$.
  Then~$\AGle$ is a pivot basis.
\end{proposition}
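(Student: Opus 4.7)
The plan is to proceed by well-ordered induction on $\tau \in \SIG$, establishing that $\AGle[\tau]$ is a pivot basis for every $\tau \leq \sigma$; the case $\tau = \sigma$ yields the lemma. The base case $\tau = 0$ is immediate: \ref{it:prebasis:zero} gives $\AGle[0] = AG^0 \subseteq \{0\}$, trivially a pivot basis.

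For the inductive step, assume the conclusion at every $\tau' < \tau$. The increasing union $\AGlt[\tau] = \bigcup_{\tau' < \tau} \AGle[\tau']$ is then itself a pivot basis by Lemma~\ref{lem:pivot-incr-union}. I will verify Criterion~\ref{item:pivot-basis} for $\AGle[\tau]$. If $AG^\tau \subseteq \{0\}$ (in particular if $AG^\tau = \varnothing$), then $\LAGle[\tau] = \LAGlt[\tau]$ and any nonzero $x \in \LAGle[\tau]$ already has $\lm x = \lm e$ for some $e \in \AGlt[\tau] \subseteq \AGle[\tau]$. Otherwise the \rb{} hypothesis at $\tau$ supplies some $f \in AG$ with $\sig f = \tau$ whose polynomial part $f^\natural$ is $\to_G^\tau$-reduced; by definition of top reduction, this means that either $f^\natural = 0$ or $\lm f^\natural$ is not the leading monomial of any element of $\AGlt[\tau]$. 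The case $f^\natural = 0$ reduces to the previous one: via~\ref{it:prebasis:colin}, every $g\in AG^\tau$ then lies in $\LAGlt[\tau]$. So I may assume $f^\natural \neq 0$.

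The prebasis axiom~\ref{it:prebasis:colin} now forces every element of $AG^\tau$ to be a $K^\times$-multiple of $f^\natural$ modulo $\LAGlt[\tau]$, so $\LAGle[\tau] = K f^\natural + \LAGlt[\tau]$. For a nonzero $x \in \LAGle[\tau]$, write $x = \mu f^\natural + z$ with $\mu \in K$ and $z \in \LAGlt[\tau]$. If $\mu = 0$, the pivot property of $\AGlt[\tau]$ directly yields a witness. If $\mu \neq 0$ and $z = 0$, then $f^\natural \in AG^\tau \subseteq \AGle[\tau]$ is the witness. Finally, if $\mu \neq 0$ and $z \neq 0$, the pivot property gives some $e' \in \AGlt[\tau]$ with $\lm e' = \lm z$; since $\lm f^\natural$ does not match $\lm e'$, we have $\lm f^\natural \neq \lm z$, hence $\lm x = \max(\lm f^\natural, \lm z)$ is realized by either $f^\natural$ or $e'$, both of which lie in $\AGle[\tau]$.

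The only conceptually nontrivial step is the use of the \rb{} hypothesis to pin down a single representative $f^\natural$ at signature~$\tau$ whose leading monomial escapes the pivot structure of $\AGlt[\tau]$; after this, \ref{it:prebasis:colin} collapses the passage from $\AGlt[\tau]$ to $\AGle[\tau]$ into a one-dimensional $K$-linear enlargement of $M$, and the verification of Criterion~\ref{item:pivot-basis} becomes bookkeeping. I do not anticipate any hidden obstacle.
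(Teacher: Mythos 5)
Your proof is correct. It follows the same skeleton as the paper's argument: work signature by signature (your transfinite induction is the paper's minimal-counterexample formulation), get that $\AGlt[\tau]$ is a pivot basis from Lemma~\ref{lem:pivot-incr-union}, take the $\to_G^\tau$-reduced element supplied by the \rb{} hypothesis at~$\tau$, and use~\ref{it:prebasis:colin} to see that passing from $\AGlt[\tau]$ to $\AGle[\tau]$ is at most a one-dimensional enlargement of the span. Where you diverge is the finishing step: the paper invokes Criterion~\ref{item:pivot-nf} for the pivot basis $\AGlt[\tau]$ to get $f \tdown_G^\tau \lambda g$ and concludes from the fact that two $\to_G^\tau$-reduced, tail-equivalent elements share a leading monomial, whereas you verify Criterion~\ref{item:pivot-basis} directly, writing any $x \in \LAGle[\tau]$ as $\mu f^\natural + z$ with $z \in \LAGlt[\tau]$ and using the reducedness of $f^\natural$ only through the combinatorial fact that $\lm f^\natural$ is not a leading monomial of $\AGlt[\tau]$, so that $\lm x$ is always witnessed by $f^\natural$ or by a pivot of $\AGlt[\tau]$. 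Your route is slightly more elementary (no confluence machinery beyond the pivot-basis equivalences already needed for $\AGlt[\tau]$) and it makes the degenerate cases ($f^\natural = 0$, $\mu = 0$, $AG^\tau \subseteq \{0\}$) explicit, which the paper's contradiction argument passes over quickly; the paper's version is shorter given that the rewriting characterizations of Theorem~\ref{thm:linear-buchberger} are already in place, and its tail-equivalence step is reused almost verbatim in Corollary~\ref{prop:rb} and later arguments.
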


\begin{proof}
  For contradiction, assume that~$\AGle$ is not a pivot basis,
  and let~$\tau$ be the smallest signature such that~$\AGle[\tau]$ is not a pivot basis.
  In particular, $\AGlt[\tau]$ is a pivot basis: indeed,
  $\AGlt[\tau]$ is the increasing union~$\cup_{\rho < \tau} \AGle[\rho]$.
  so Lemma~\ref{lem:pivot-incr-union} applies.

  The set~$AG^\tau$ is nonempty, as otherwise~$\AGle[\tau] = \AGlt[\tau]$ and the latter is a pivot basis.
  Since~$G$ is a \rb{} at~$\tau$, there is a~$g \in AG^\tau$ which is $\to_G^\tau$-reduced.
  Since~$\AGle[\tau]$ is not a pivot basis there is a~$f \in \LAGle[\tau]$ such that~$\lm f \neq \lm h$ for any~$h\in \AGle[\tau]$.
  By the prebasis condition \ref{it:prebasis:colin}, there is a~$\lambda\in K$ such that~$f - \lambda g \in \LAGlt[\tau]$.
  And because~$\AGlt[\tau]$ is a pivot basis, by hypothesis, Criterion~\ref{item:pivot-nf} implies that~$f \tdown^\tau_G \lambda g$.
  Since both~$f$ and~$g$ are~$\to^\tau_G$-reduced, we have in fact a tail equivalence~$f\smile^\tau_G g$,
  and therefore~$\lm f = \lm g$, which is a contradiction
\end{proof}

\begin{corollary}\label{coro:rb-implies-sb}
  If~$G$ is a \rb{}, then~$G$ is a~\sb{}.
\end{corollary}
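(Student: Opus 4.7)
The plan is to observe that this corollary follows almost immediately by instantiating Proposition~\ref{lem:rb-le} at every signature. Unpacking the definition of a signature basis, we must show that for every $\sigma \in \SIG$, the set $\AGle$ is a pivot basis. So the proof will simply fix an arbitrary $\sigma$ and verify that the hypotheses of Proposition~\ref{lem:rb-le} are satisfied at this $\sigma$.

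By assumption, $G$ is a rewrite basis, which by Definition~unpacks into two things: $G$ is a prebasis, and $G$ is a rewrite basis at every signature $\tau \in \SIG$. In particular, $G$ is a rewrite basis at every $\tau \leq \sigma$. These are exactly the hypotheses of Proposition~\ref{lem:rb-le}, so its conclusion gives that $\AGle$ is a pivot basis. Since $\sigma$ was arbitrary, $G$ is a signature basis by definition.

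There is no real obstacle here: all the substance is already in Proposition~\ref{lem:rb-le}, which handled the induction on signatures and the interplay between the pivot basis property at $\AGlt$ (via Lemma~\ref{lem:pivot-incr-union}), the existence of a $\to_G$-reduced element at signature $\tau$, and the prebasis condition \ref{it:prebasis:colin}. The corollary is just the observation that ``rewrite basis at every $\tau \leq \sigma$'' holds vacuously as soon as one assumes the stronger global hypothesis of being a rewrite basis.
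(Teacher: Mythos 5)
Your proof is correct and takes exactly the approach of the paper: apply Proposition~\ref{lem:rb-le} at each $\sigma$ using that a rewrite basis is, in particular, a prebasis that is a rewrite basis at every signature $\tau \leq \sigma$. The paper simply states that the corollary follows directly from the definitions and Proposition~\ref{lem:rb-le}; your write-up is a slightly more detailed unfolding of the same argument.
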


\begin{proof}
  It follows directly from the definitions and Proposition~\ref{lem:rb-le}.
\end{proof}

\begin{corollary}\label{lem:rb}
  Let~$G$ be a prebasis and let~$\sigma\in \SIG$
  such that~$G$ is a \rb{} at any signature $\tau < \sigma$.
  Then~$\AGlt$ is a pivot basis.
\end{corollary}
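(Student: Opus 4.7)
The plan is to deduce this corollary directly from Proposition~\ref{lem:rb-le} combined with Lemma~\ref{lem:pivot-incr-union}, without redoing the rewriting/prebasis argument from scratch.

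First I would observe the set-theoretic identity
\[ \AGlt = \bigcup_{\tau < \sigma} \AGle[\tau], \]
which is immediate from the definitions~$\AGlt = \cup_{\tau < \sigma} AG^\tau$ and $\AGle[\tau] = \cup_{\rho \leq \tau} AG^\rho$. The family~$\bigl(\AGle[\tau]\bigr)_{\tau < \sigma}$ is totally ordered by inclusion (Lemma~\ref{lem:filtration}).

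Next, for each fixed~$\tau < \sigma$, the hypothesis tells us that~$G$ is a \rb{} at every signature~$\rho \leq \tau$ (since all such~$\rho$ are strictly less than~$\sigma$). Together with the prebasis assumption, Proposition~\ref{lem:rb-le} applies and gives that~$\AGle[\tau]$ is a pivot basis.

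Finally, Lemma~\ref{lem:pivot-incr-union} says that an increasing union of pivot bases, indexed by a totally ordered set, is a pivot basis. Applying it to the family~$\bigl(\AGle[\tau]\bigr)_{\tau < \sigma}$ yields that~$\AGlt$ is a pivot basis, which is exactly the conclusion. There is no real obstacle here since all the machinery is already in place; the only subtlety is being careful that the hypothesis covers~\emph{every} signature~$\rho \leq \tau$ needed to invoke Proposition~\ref{lem:rb-le} for each~$\tau < \sigma$, which follows from transitivity of the order.
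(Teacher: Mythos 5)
Your proof is correct and uses exactly the same two ingredients as the paper's: the decomposition $\AGlt = \bigcup_{\tau < \sigma} \AGle[\tau]$, Proposition~\ref{lem:rb-le} applied to each $\AGle[\tau]$, and Lemma~\ref{lem:pivot-incr-union} to conclude. The only cosmetic difference is that you argue directly while the paper phrases it as a proof by contradiction; the logical content is identical.
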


\begin{proof}
  For contradiction, assume that~$\AGlt$ is not a pivot basis.
  Since $\AGlt$ is the increasing union $\cup_{\tau < \sigma}\AGle[\tau]$,
  there is at least one~$\tau < \sigma$ such that~$\AGle[\tau]$ is not a pivot basis, by Lemma~\ref{lem:pivot-incr-union}.
  This contradicts Proposition~\ref{lem:rb-le}.
\end{proof}

The following statement is an effective form of the prebasis condition, it states that when~$G$ is a \rb, the regular reduction~$\to_G$
is able to witness the prebasis condition: two elements with same signatures have equal $\to_G$-normal forms, up to scaling and tail equivalence.

\begin{corollary}\label{prop:rb}
  Let~$G$ be a prebasis and let~$\sigma\in\SIG$
  such that~$G$ is a \rb{} at any~$\tau < \sigma$.
  For any~$f, g\in AG^\sigma + \LAGlt$, there is some~$\lambda \in K^\times$ such that~$f\tdown_G^\sigma \lambda g$.
\end{corollary}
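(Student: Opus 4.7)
The plan is to reduce the claim to the rewriting characterization of pivot bases (criterion~\ref{item:pivot-nf}) applied to the pivot basis $\AGlt$ furnished by Corollary~\ref{lem:rb}. The proof is mechanical: one invocation of the prebasis axiom followed by one invocation of the pivot-basis criterion.

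First, I dispose of the trivial case: if $AG^\sigma = \varnothing$, the sumset $AG^\sigma + \LAGlt$ is empty and the statement holds vacuously. Otherwise, I decompose $f = f_0 + f_1$ and $g = g_0 + g_1$ with $f_0, g_0 \in AG^\sigma$ and $f_1, g_1 \in \LAGlt$.

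Second, the prebasis axiom~\ref{it:prebasis:colin} applied to the pair $(f_0, g_0) \in AG^\sigma \times AG^\sigma$ supplies a scalar $\lambda \in K^\times$ with $f_0 - \lambda g_0 \in \LAGlt$. Since $f_1 - \lambda g_1 \in \LAGlt$ as well, we obtain $f - \lambda g \in \LAGlt$.

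Third, the hypothesis that $G$ is a rewrite basis at every $\tau < \sigma$ combined with Corollary~\ref{lem:rb} ensures that $\AGlt$ is a pivot basis. I then invoke criterion~\ref{item:pivot-nf} of Theorem~\ref{thm:linear-buchberger} with $E = \AGlt$ on the pair $(f, \lambda g)$: since $f - \lambda g \in \brack{\AGlt}$, this yields $f \tdown_{\AGlt} \lambda g$, which is exactly $f \tdown_G^\sigma \lambda g$ by the definitions of $\to_G^\sigma$ and $\smile_G^\sigma$.

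There is no real obstacle. The one point worth attention is that the scalar $\lambda$ must lie in $K^\times$, as demanded by the statement; this is guaranteed verbatim by the $\lambda \in K^\times$ clause in~\ref{it:prebasis:colin}, so the witness produced by the prebasis condition transports directly to the conclusion.
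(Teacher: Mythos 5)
Your proof is correct and follows essentially the same route as the paper: invoke the prebasis condition~\ref{it:prebasis:colin} to get $\lambda$ with $f-\lambda g\in\LAGlt$, then apply Corollary~\ref{lem:rb} together with Criterion~\ref{item:pivot-nf}. The explicit decomposition $f=f_0+f_1$, $g=g_0+g_1$ is a small expository elaboration the paper leaves implicit, not a different argument.
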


\begin{proof}
  Let~$f, g\in AG^\sigma + \LAGlt$.  \ref{it:prebasis:colin} implies that there is some~$\lambda\in K^\times$ such that~$f-\lambda g\in \LAGlt$.
  By Corollary~\ref{lem:rb},
  $\AGlt$ is a pivot basis, so
  Criterion~\ref{item:pivot-nf} implies that~$f \tdown_G^\sigma \lambda g$.
\end{proof}

The second aspect of the definition of \rbs{} is the algorithmic content.
Checking if $G$ is a \rb{} at~$\sigma$ involves only manipulations in~$A$, $\basis$ and~$\SIG$, but no operations in the base field~$K$.
Moreover, if~$G$ is not a \rb{} at some~$\sigma$, then it is easy to compute a sigsafe extension of~$G$ which is a \rb{} at~$\sigma$:
simply pick some~$f\in AG$ with~$\sig f = \sigma$, compute a $\to_G$-normal form, and insert the result into~$G$.
This suggests an algorithm schema for computing \rbs{} (Pseudo-algorithm~\ref{algo:abstract-algo}).

\captionsetup[algorithm]{name=Pseudo-algorithm}
\begin{algorithm}[ht]
  \begin{description}
    \item[input] A prebasis~$G$
    \item[output] A sigsafe extension~$H$ of~$G$ that is a \rb{}
  \end{description}
  \begin{pseudo}
    \kw{while} $G$ is not a \rb{} \kw{do} \\+
    pick~$\sigma\in\SIG$ such that~$G$ is not a \rb{} at $\sigma$ \label{line:pseudo:picksig}\\
    pick~$f\in AG$ with~$\sig f = \sigma$ \label{line:pseudo:pickf} \ct{$f$ is called the \emph{reductant}}\\
    $g \gets $ any $\to_G$-normal form of~$f$ \\
    $G \gets G \cup \{ g \}$ \quad \ct{$G$ is now a \rb{} at $\sigma$}\\-
    \kw{return} $G$
  \end{pseudo}
  \caption{Algorithm schema for computing \rbs{}}
  \label{algo:abstract-algo}
\end{algorithm}
\captionsetup[algorithm]{name=Algorithm}

There are two significant difficulties to turn this schema into an actual algorithm.
First, how to check that~$G$ is a \rb{}? And how to pick a signature at which~$G$ is not a \rb? These questions are addressed in Section~\ref{sec:criterion-rbs}.
Second, how to ensure termination? This is addressed, in Section~\ref{sec:algorithms}, under Noetherian hypotheses
and under some restrictions on the choice of~$\sigma$ on line~\ref{line:pseudo:picksig}, or the choice of~$f$ on line~\ref{line:pseudo:pickf}.

\subsection{A criterion for \rbs{}}
\label{sec:criterion-rbs}

There is a criterion (that we will call \emph{Faugère's criterion})
to check that a prebasis is a \rb{}. It plays the same role as Buchberger's criterion plays for \gbs:
reducing a definition that involves infinitely many monomials or signatures to finitely many computations.
However, the analogy between the two criteria is rather thin.
For one, Faugère's criterion is not derived from Buchberger's one and I could not find either a derivation of Buchberger's criterion from Faugère's one.
Moreover, Faugère's criterion only involves combinatorial operations (on leading monomials and signatures) while Buchberger's criterion
involves arithmetic operations through the reductions of S-pairs.
When applying Faugère's criterion, the arithmetic side (that is how the coefficients are relevant)
is hidden in the prebasis hypothesis.

The slogan of signature-based algorithms for \gbs{} is ``process at most one S-pair per signature'',
an algorithmic point of view on the idea that ``two elements with the same signature are substitutable''.
Going one step further, we may ask at which signature we \emph{need} to process a S-pair.
In what follows, the concept of S-pair, inherited from Buchberger's algorithm, fades in favor of a study of the signatures themselves.
This approach is somewhat closer to the concept of J-pairs proposed by \textcite{GaoVolnyWang_2016}: the set of critical signatures that we introduce below is closely related to the set of signatures of J-pairs that need to be handled in the GVW~algorithm.

Our goal here, given a prebasis~$G$, is to define a set of signatures~$\Sigma(G)$ such that it is enough to check that~$G$ is a \rb{} at any signature in~$\Sigma(G)$ to prove that~$G$ is a \rb.
Naturally we want~$\Sigma(G)$ to be as small as possible.
And as soon as we will have introduced Noetherian hypotheses, we will want~$\Sigma(G)$ to be finite and computable in a combinatorial way (that is without arithmetic operations in the base field).

\begin{definition}[Critical set]
  For a sigset~$G$ and a sigpair~$f$,
  the \emph{critical set} of~$f$ modulo~$G$, denoted~\margindef{$\Sigma(f, G)$},
  is the set of all~$\sigma\in \SIG$ such that:
  \begin{enumerate}[label=C\arabic*]
    \item\label{it:sigma:nontr} $\exists a\in A, a \sig f = \sigma$ and~$af$ is not $\to_G$-reduced;
    \item\label{it:sigma:min} $\forall b \in A, \big( b \sig f \text{ is a proper divisor of } \sigma \Rightarrow$ $bf$ is  $\to_G$-reduced$\big)$.
  \end{enumerate}
  The \margindef{critical set} of~$G$, is the set of signatures
  \begin{equation*}
    \label{eq:1}
    \Sigma(G) \eqdef \bigcup_{f\in G} \Sigma(f, G).
  \end{equation*}
\end{definition}
In other words, the condition~\ref{it:sigma:nontr} defines a subset of~$\SIG$
corresponding to the signatures at which a multiples~$af$ is not~$\to_G$-reduced.
This subset is closed under the action of~$A$. Indeed, if there is a reduction~$a f\myto1_G h$, then for any~$b\in A$,
there is also a reduction~$bfa \myto1_G bh$.
Among all these signatures, the condition~\ref{it:sigma:min} retains only the minimal ones for divisibility.
This will be important latter to ensure finiteness.
The important property is the following.

\begin{lemma}\label{lem:critset}
  For any sigset~$G$, any sigpair~$f$, and any~$a\in A$, if~$af$ is not~$\to_G$-reduced, then there is some~$\sigma \in \Sigma(f, G)$ which divides~$a\sig f$.
\end{lemma}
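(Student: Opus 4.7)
The hypothesis tells us that $\sigma_a = a\sig f$ already witnesses condition~\ref{it:sigma:nontr} (via the monoid element $a$), but $a$ itself need not be minimal, so $\sigma_a$ may fail to lie in $\Sigma(f,G)$. My plan is to extract a minimal witness by a well-ordering argument inside~$\SIG$. Consider the set
\[ T \eqdef \left\{ b\sig f \st b \in A,\ b\sig f \text{ divides } \sigma_a,\ \text{and } bf \text{ is not } \to_G\text{-reduced} \right\}. \]
It is nonempty because $\sigma_a \in T$ with the choice $b=a$. Since $\SIG$ is well-ordered, $T$ admits a least element $\sigma_0$; fix $b_0 \in A$ realizing $\sigma_0 = b_0\sig f$. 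I will argue that $\sigma_0 \in \Sigma(f,G)$ and that $\sigma_0$ divides $\sigma_a$, which is exactly what the lemma demands.

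The divisibility $\sigma_0 \mid \sigma_a$ is immediate from the definition of~$T$. Condition~\ref{it:sigma:nontr} for $\sigma_0$ is also immediate: $b_0 \sig f = \sigma_0$ and $b_0 f$ is not $\to_G$-reduced by construction. The real content is condition~\ref{it:sigma:min}. To verify it, I will take $c \in A$ such that $c\sig f$ is a proper divisor of $\sigma_0$ and derive that $cf$ must be $\to_G$-reduced. Writing $\sigma_0 = e \cdot c\sig f$ with $c\sig f \neq \sigma_0$, transitivity of divisibility inside the monoid~$A$ gives $c\sig f \mid \sigma_a$. Moreover, axiom~\ref{it:momo:div} applied to the monomial module~$S$ yields $\sigma_0 = e \cdot c\sig f \geq c\sig f$, which combined with $c\sig f \neq \sigma_0$ gives the strict inequality $c\sig f < \sigma_0$. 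If $cf$ were not $\to_G$-reduced, then $c\sig f$ would belong to $T$ and sit strictly below $\sigma_0$, contradicting the minimality of $\sigma_0$. Hence $cf$ is $\to_G$-reduced, establishing~\ref{it:sigma:min}.

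I do not anticipate any serious obstacle: the argument reduces to the standard trick of choosing a minimum in a well-ordered set, once the right set~$T$ has been isolated. The only mild subtlety is translating \enquote{proper divisor} into a strict inequality in~$\SIG$, for which axiom~\ref{it:momo:div}, applied on the signature side rather than the polynomial side, is precisely what is required.
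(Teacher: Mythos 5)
Your proof is correct and is essentially the paper's argument: the paper likewise picks $a'\in A$ with $a'\sig f$ dividing $a\sig f$, $a'f$ not $\to_G$-reduced, and $a'\sig f$ minimal, then verifies \ref{it:sigma:nontr} by construction and \ref{it:sigma:min} by minimality. Your version merely makes explicit the two points the paper leaves implicit (existence of the minimum via well-orderedness of $\SIG$, and the strict inequality for a proper divisor via \ref{it:momo:div} applied to $S$), so there is nothing to change.
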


\begin{proof}
  Let~$a' \in A$ such that~$a' \sig f$ divides~$a\sig f$, $a' f$ is not~$\to_G$-reduced, and~$a' \sig f$ is minimal.
  Let~$\sigma = a' \sig f$.
  We check that~$\sigma \in \Sigma(f, G)$.
  Indeed \ref{it:sigma:nontr} follows from the definition.
  For~\ref{it:sigma:min}, let~$b\in A$ such that~$b\sig f$ is a proper divisor of~$\sigma$.
  In particular, $b\sig f$ divides~$a \sig f$ and~$b \sig f < \sigma$.
  By minimality of~$\sigma$, $b f$ is $\to_G$-reduced.
\end{proof}

There is a resemblance with the notion of critical pairs in Buchberger's criterion (see Section~\ref{sec:buchberger})
but also an important difference: critical pairs are elements of~$M$, while the critical set~$\Sigma(f, G)$ only contains signatures, it is a combinatorial content.
Note that~$\Sigma(f, G)$ is included in the union $\cup_{g\in G} \Sigma(f, \left\{ g \right\})$
and~$\Sigma(f, \left\{ g \right\})$ may be thought as the set of signatures of the possible S-pairs between~$f$ and~$g$. In the classical polynomial setting, $\Sigma(f, \left\{ g \right\})$ contains at most one element.
In the general case, $\Sigma(f, \left\{ g \right\})$ can contain zero, one, finitely many or infinitely many elements, see Section~\ref{sec:settings} for examples.

\begin{example}[continued]\label{expl:mora-system-3}
  Consider the sigset~$G$ defined in Example~\ref{expl:mora-system} and developed in Example~\ref{expl:mora-system-2}.
  We compute that
  \[ \Sigma(g_1, G) = \varnothing,\ \Sigma(g_2, G) = \{ x^2 y^5 \otimes e_2 \},\ \Sigma(g_3, G) = \{ x^5 y^2 \otimes e_3 \}. \]
  Note that~$\Sigma(g_3, \left\{ g_1 \right\}) = \{ x^5 y^5\otimes e_3 \}$, reflecting the reduction of~$y^5 g_3$ by~$x^5 g_1$, but this signature disappears in~$\Sigma(g_3, G)$
  because it is divided by~$x^5 y^2 \otimes e_3$, which comes from the reduction of~$y^2 g_3$ by~$x^3 g_1$.
\end{example}

\begin{proposition}
  \label{prop:faugere-criterion-upto}
  Let~$G$ be a prebasis and let~$\sigma\in\SIG$. If~$G$ is a rewrite basis at any signature~$\tau < \sigma$,
  then~$G$ is a \rb{} at~$\sigma$, or~$\sigma\in \Sigma(G)$.
\end{proposition}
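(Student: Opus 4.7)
The plan is to prove the disjunction as follows: if $\sigma \in \Sigma(G)$ the conclusion already holds, so I assume $\sigma \notin \Sigma(G)$ and aim to show that $G$ is a \rb{} at $\sigma$. If $AG^\sigma$ is empty this is immediate from the definition, so assume it is nonempty; the goal is then to exhibit a $\to_G$-reduced element of $AG$ at signature~$\sigma$. My strategy is to build a sequence of candidates in $AG$ at signature~$\sigma$ whose leading monomials strictly decrease in~$\basis$, until the well-ordering of~$\basis$ forces the sequence to stop at a reduced one.

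Starting from any sigpair $f_0 \in AG$ with $\sig f_0 = \sigma$, write $f_0 = a_0 g_0$ with $g_0 \in G$. If $f_0$ is $\to_G$-reduced we are done; otherwise Lemma~\ref{lem:critset} produces some $\tau \in \Sigma(g_0, G) \subseteq \Sigma(G)$ dividing $\sigma$. Since $\sigma \notin \Sigma(G)$ I have $\tau \neq \sigma$, hence $\tau < \sigma$ strictly. By hypothesis $G$ is a \rb{} at $\tau$, so there exists a $\to_G$-reduced sigpair $h = \alpha \gamma \in AG$ with $\sig h = \tau$. Pick $c \in A$ with $c\tau = \sigma$ and set $f_1 = c h$: then $f_1 \in AG$ with $\sig f_1 = \sigma$, and the construction repeats with $f_1$ in place of $f_0$ as long as $f_1$ is not reduced.

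The crucial point is the strict inequality $\lm(f_1^\natural) < \lm(f_0^\natural)$. Condition~\ref{it:sigma:nontr} in the definition of $\tau \in \Sigma(g_0, G)$ gives $b \in A$ with $b \sig g_0 = \tau$ and $bg_0$ not $\to_G$-reduced, so $bg_0^\natural$ and $h^\natural$ both lie in $AG^\tau$. The prebasis condition~\ref{it:prebasis:colin} then yields $\mu \in K^\times$ with $h^\natural - \mu b g_0^\natural \in \LAGlt[\tau]$. Corollary~\ref{lem:rb}, applicable since $G$ is a \rb{} at every signature strictly below $\tau$, asserts that $\AGlt[\tau]$ is a pivot basis, so if $h^\natural - \mu b g_0^\natural \neq 0$ then its leading monomial belongs to $\lm(\AGlt[\tau])$. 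Now $\lm h^\natural \notin \lm(\AGlt[\tau])$ because $h$ is reduced, while $\lm(bg_0^\natural) \in \lm(\AGlt[\tau])$ because $bg_0$ is not; a short case analysis on how these two monomials compare (together with the subcase $h^\natural = \mu b g_0^\natural$) rules out equality and $\lm h^\natural > \lm(bg_0^\natural)$, leaving only $\lm h^\natural < \lm(bg_0^\natural)$. Axiom~\ref{it:momo:ord} then gives $c \cdot \lm h^\natural < c \cdot \lm(bg_0^\natural)$, and axiom~\ref{it:sig:refines}, applied to the equality $cb \sig g_0 = c\tau = \sigma = a_0 \sig g_0$, identifies $c \cdot \lm(bg_0^\natural)$ with $\lm(a_0 g_0^\natural) = \lm(f_0^\natural)$. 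The strict inequality follows.

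Iterating produces a sequence $f_0, f_1, f_2, \ldots$ in $AG$ at signature $\sigma$ whose leading monomials strictly decrease in $\basis$ as long as each term is non-reduced. The well-ordering of $\basis$ forbids infinite strict descent, so the sequence must terminate, meaning some $f_n$ is $\to_G$-reduced; this proves $G$ is a \rb{} at $\sigma$. The main obstacle of the argument is the leading-monomial comparison carried out in the previous paragraph; once that inequality is secured from the prebasis condition and the pivot-basis property of $\AGlt[\tau]$, the rest of the proof is a straightforward combination of the monomial module axioms with the well-ordering.
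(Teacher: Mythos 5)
Your proof is correct and takes essentially the same approach as the paper: where the paper fixes a reductant of minimal leading monomial and derives a contradiction, you run the equivalent well-founded descent on leading monomials, and where the paper invokes Corollary~\ref{prop:rb} to compare the reduced element at signature~$\tau$ with the non-reduced multiple, you unfold that corollary into the underlying prebasis condition and pivot-basis property of~$\AGlt[\tau]$. The two phrasings are logically interchangeable.
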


\begin{proof}
  Assume that~$G$ is not a \rb{} at~$\sigma$
  and let us prove that~$\sigma\in \Sigma(G)$.
  We may assume that~$AG^\sigma \neq \varnothing$, otherwise~$G$ is a \rb{} at~$\sigma$.
  Let~$a\in A$ and~$f \in G$ such that~$a \sig f = \sigma$.
  We choose~$f$ so that~$\lm(af)$ is smallest.
  By hypothesis, $af$
  is not~$\to_G$-reduced (otherwise~$G$ is a \rb{} at~$\sigma$).
  By Lemma~\ref{lem:critset}, there is a signature~$\tau \in \Sigma(f, G)$ which divides~$\sigma$.
  Let~$b, c\in A$ such that~$b\sig f = \tau$ and~$c \tau = \sigma$.

  If~$\tau = \sigma$, we are done: $\sigma \in \Sigma(f, G)$.
  For contradiction, assume that~$\tau < \sigma$.
  In particular, $G$ is a \rb{} at~$\tau$. So there is some $\to_G^\tau$-reduced~$g \in AG^\tau$.
  By Corollary~\ref{prop:rb}, there is~$\lambda\in K^\times$ such that~$g \tdown_G^\tau \lambda bf$.
  Since~$g$ is~$\to_G^\tau$-reduced and~$bf$ is not, this implies that~$\lm g < \lm  (bf)$, and,
  by~\ref{it:momo:ord}, that
  $\lm(cg) < \lm(cbf)$.
  Moreover, $cb \sig f = a\sig f$, so \ref{it:sig:refines} implies that~$\lm(cbf) = \lm(af)$, and therefore~$\lm(cg) < \lm(af)$.
  This contradicts the minimality of~$\lm(af)$.
\end{proof}

From Proposition~\ref{prop:faugere-criterion-upto}, we easily deduce the following statement.

\begin{theorem}[Faugère's criterion]
  \label{thm:faugere-criterion}
  Let~$G$ be a prebasis. If~$G$ is a rewrite basis at any~$\sigma\in \Sigma(G)$,
  then~$G$ is a \rb{}.
\end{theorem}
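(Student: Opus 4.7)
The plan is to deduce Theorem~\ref{thm:faugere-criterion} from Proposition~\ref{prop:faugere-criterion-upto} by a straightforward well-founded induction on the signature. The set of signatures~$\SIG$ is well-ordered, so to prove that~$G$ is a \rb{} at every~$\sigma \in \SIG$, it suffices to fix~$\sigma$, assume as induction hypothesis that~$G$ is a \rb{} at every~$\tau < \sigma$, and deduce that~$G$ is a \rb{} at~$\sigma$ as well.

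Under this induction hypothesis, Proposition~\ref{prop:faugere-criterion-upto} applies and yields a dichotomy: either~$G$ is already a \rb{} at~$\sigma$, in which case the inductive step is complete, or $\sigma \in \Sigma(G)$. In the second case, the hypothesis of the theorem directly provides that~$G$ is a \rb{} at~$\sigma$. Either way, the inductive step goes through, and the well-founded induction concludes that~$G$ is a \rb{} at every~$\sigma \in \SIG$. Together with the prebasis assumption, this is exactly what it means for~$G$ to be a \rb{}.

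There is essentially no hard step here: all of the substantive work has been absorbed into Proposition~\ref{prop:faugere-criterion-upto}, where the interplay between the prebasis condition, Corollary~\ref{prop:rb}, and the axioms~\ref{it:sig:refines} and~\ref{it:momo:ord} is used to show that whenever the \rb{} property fails at~$\sigma$, the failure can be traced back to an element of~$\Sigma(G)$ dividing~$\sigma$. The only thing left to say in the proof of Theorem~\ref{thm:faugere-criterion} is that the well-order on~$\SIG$ makes the induction legitimate; no additional combinatorial or algebraic argument is required.
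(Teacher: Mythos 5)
Your proof is correct and is exactly the deduction the paper has in mind when it writes ``we easily deduce the following statement'' after Proposition~\ref{prop:faugere-criterion-upto}: a well-founded induction over the well-ordered set~$\SIG$, using the dichotomy from the proposition at each step. No discrepancy.
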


\section{Additional properties of \rbs}
\label{sec:addit-prop-rbs}

This section gathers some properties of \rbs{} that are not central, and not
used in the next sections, but that connect to previous works.

\subsection{Relation between \sbs{} and \rbs}
\label{sec:relation-between-sbs}

Corollary~\ref{coro:rb-implies-sb} shows that \rbs{} are \sbs{}.
With two competing definitions, it is worth studying more precisely the relation between them.

We first introduce a classification of signatures.
Let~$G$ be a prebasis.
For any~$\sigma\in\SIG$, either~$AG^\sigma = \varnothing$, this is a trivial case, or any element of~$AG^\sigma$ generates the quotient~$\LAGle/\LAGlt$.
In the latter case, either  every element of~$AG^\sigma$ is in~$\LAGlt$, if the quotient is zero-dimensional,
or no element of~$AG^\sigma$ is in~$\LAGlt$, if the quotient is one-dimensional.
This leaves the following categories.
A signature~$\sigma\in\SIG$ is:
\begin{itemize}
  \item an \emph{empty signature} if~$AG^\sigma = \varnothing$;
  \item a \emph{nonempty signature} if~$AG^\sigma \neq \varnothing$.
\end{itemize}
Moreover, a nonempty signature is:
\begin{itemize}
  \item a \emph{regular signature} if $AG^\sigma \cap \LAGlt = \varnothing$;
  \item a \emph{syzygy signature} if $AG^\sigma \subseteq \LAGlt$.
\end{itemize}
A nonempty signature is either regular or syzygy, as long as~$G$ is a prebasis.
This classification is relative to the sigset~$G$, but we check easily that it remains unchanged
under sigsafe extensions.

\begin{example}[continued]
  Consider again the prebasis~$G$ from Example~\ref{expl:mora-system}.
  We check easily that:
  \begin{itemize}
    \item $1\otimes e_1$ is an empty signature, because it is not multiple of any of the signatures in~$G$.
    \item $\sigma  = x^5 y^5 \otimes e_3$ is a (nonempty) syzygy signature. Indeed, $AG^\sigma = \left\{ y^5 \smash{g_3^\natural} \right\}$ and  $y^5 g_3 \to_G 0$, using the reducer~$x^5 g_2$.
          So~$y^5 g_3^\natural \in \LAGlt$, and therefore~$AG^\sigma \subseteq \LAGlt$.
    \item $\sigma = x^2 y^2 \otimes e_1$ is a (nonempty) regular signature. Indeed~$\AGlt = \varnothing$, so~$\LAGlt = 0$ while~$AG^\sigma$ contains the nonzero element~$g_1^\natural$.
  \end{itemize}
\end{example}

\begin{proposition}\label{prop:groebner-rewrite}
  Let~$G\subseteq M$ be a prebasis.
  $G$ is a \sb{} if and only if\/
  $G$ is a rewrite basis at any regular signature.
\end{proposition}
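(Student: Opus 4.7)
The plan is to prove the two implications separately; both hinge on the principle that a $\to_G^\sigma$-reduced element of $AG^\sigma$ furnishes precisely the new leading monomial that is missing from $\AGlt$, turning $\AGle$ into a pivot basis via criterion~\ref{item:pivot-basis}.

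For the $\Rightarrow$ direction, I would fix a regular signature~$\sigma$ and argue by contradiction, supposing no $f\in AG$ with $\sig f=\sigma$ is $\to_G$-reduced. Regularity yields a sigpair $f \in AG$ with $\sig f=\sigma$ and $f^\natural\notin \LAGlt$; reducing $f^\natural$ to a $\to_G^\sigma$-normal form produces some $v\in M$ with $v\notin \LAGlt$, hence $v \neq 0$ and $v\in\LAGle$. Since $\AGle$ is a pivot basis, criterion~\ref{item:pivot-basis} yields $e\in \AGle$ with $\lm e=\lm v$. Either $e\in \AGlt$, contradicting the $\to_G^\sigma$-reducedness of $v$, or $e\in AG^\sigma$, in which case the contradiction hypothesis applied to the underlying sigpair in $AG$ produces an element of $\AGlt$ with the same leading monomial, again a contradiction.

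For the $\Leftarrow$ direction, I would proceed by well-founded induction on~$\sigma$: assume $\AGle[\tau]$ is a pivot basis for every $\tau<\sigma$, from which Lemma~\ref{lem:pivot-incr-union} ensures that $\AGlt$ is also a pivot basis, and deduce the same for~$\AGle$. The induction step splits according to the type of $\sigma$. If $\sigma$ is empty, then $\AGle = \AGlt$ and there is nothing to prove. If $\sigma$ is a syzygy signature, then $\LAGle=\LAGlt$, so criterion~\ref{item:pivot-basis} for~$\AGle$ is inherited from~$\AGlt$ because $\AGlt\subseteq \AGle$. In the remaining regular case, the hypothesis supplies a $\to_G^\sigma$-reduced $g\in AG^\sigma$; condition~\ref{it:prebasis:colin} together with the regularity of $\sigma$ gives $\LAGle = \LAGlt + Kg$, so any $f\in \LAGle$ decomposes as $f=h+\mu g$ with $h\in\LAGlt$ and $\mu\in K$. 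Since $g$ is reduced, $\lm g$ differs from every leading monomial appearing in $\AGlt$, so the leading monomial of $f$ is either $\lm h$ (realized via the pivot basis property of $\AGlt$) or $\lm g$ (realized by $g \in \AGle$ itself), establishing criterion~\ref{item:pivot-basis} for $\AGle$.

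The principal obstacle I anticipate is the treatment of syzygy signatures in the backward direction: one cannot simply appeal to Proposition~\ref{lem:rb-le}, because the hypothesis says nothing about syzygy signatures, and there is no reason for a $\to_G^\sigma$-reduced witness to exist in~$AG^\sigma$ at such a signature — by definition, every element of $AG^\sigma$ then lies in $\LAGlt$ and is therefore $\to_G^\sigma$-reducible. The case analysis sidesteps this by observing that syzygy signatures contribute no new dimension to $\LAGle$ beyond $\LAGlt$, so the pivot basis property carries over without any explicit witness in $AG^\sigma$.
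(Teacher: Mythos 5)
Your proof is correct and follows essentially the same route as the paper: the forward direction is the paper's normal-form argument recast as a contradiction, and the backward direction is precisely the adaptation of the proof of Proposition~\ref{lem:rb-le} (induction over the well-ordered signatures, with the empty and syzygy cases dispatched because the span does not grow beyond $\LAGlt$) that the paper's one-line converse alludes to. The only local difference is that at a regular signature you verify criterion~\ref{item:pivot-basis} for $\AGle$ directly from the decomposition $\LAGle = \LAGlt + Kg$, whereas the argument of Proposition~\ref{lem:rb-le} invokes the prebasis condition together with the confluence criterion~\ref{item:pivot-nf}; both closings are sound.
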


\begin{proof}
  Assume first that~$G$ is a \sb{}. Let~$\sigma\in \SIG$ be a regular signature
  and let us prove that~$G$ is a rewrite basis at~$\sigma$. Because~$\sigma$ is
  regular there is some~$f \in AG$ with~$\sig f = \sigma$.
  Let~$v$ be a $\to_G$-normal form of~$f$,
  with respect to~$G$.

  The signature~$\sigma$ is regular, so~$AG^\sigma \cap \LAGlt = \varnothing$.
  In particular~$f^\natural$ is not in~$\LAGlt$, and thus~$v^\natural$ is not zero.
  Because~$\AGle$ is a pivot basis, by definition of a \sb, $v^\natural$ is reducible with respect to~$\AGle$.
  So there is some~$g\in AG$ such that~$\sig g \leq \sigma$ and~$\lm g = \lm v$.
  But~$v$ is $\to_G$-reduced so~$\sig g = \sigma$.
  Moreover~$\lm v = \lm g$, so~$g$ is also~$\to_G$-reduced.
  So~$G$ is a \rb{} at~$\sigma$.

  The converse follows from the same argument used in the proof of Corollary~\ref{coro:rb-implies-sb}.
\end{proof}

The only property
that a \sb{} misses to be a \rb, is an explicit marking of syzygy signatures
by sigpairs with polynomial parts equal to zero.
The data of syzygy signatures is a by-product of all known signature-based algorithms.
So actually, they compute \rbs, not only \sbs.
The following statement establishes an equivalence which
does not hold for the original definition of \rbs{} by \textcite[\S 3.2]{EderRoune_2013},
only the “\rb{} $\Rightarrow$ \sb” implication holds for this definition.\footnote{With the simplified definition, there must be at least one $\to_G$-reduced element per signature. With the original signature, one specific element must be $\to_G$-reduced.}
This is the main motivation for the simplified definition.

\begin{corollary}\label{prop:groebner-rewrite-syz}
  Let~$G \subseteq M$ be a prebasis.
  $G$ is a \rb{} if and only if the following hold:
  \begin{itemize}
    \item $G$ is a \sb{};
    \item  for any syzygy signature~$\sigma$, there is some~$g\in G$ such that~$\sig g$ divides~$\sigma$ and~$g^\natural = 0$.
  \end{itemize}
\end{corollary}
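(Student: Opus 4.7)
The plan is to prove both implications by exploiting the trichotomy of signatures (empty, regular, syzygy) introduced just above the statement, and to rely on Proposition~\ref{prop:groebner-rewrite} to dispatch the regular-signature case essentially for free. The only real content is then what happens at syzygy signatures, where in one direction we must extract a sigpair with zero polynomial part from the defining property of a rewrite basis, and in the other we must show that such a sigpair is enough to produce the required $\to_G$-reduced witness.

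For the forward direction, assume $G$ is a rewrite basis. Corollary~\ref{coro:rb-implies-sb} immediately gives that $G$ is a signature basis. Now fix a syzygy signature $\sigma$, so $AG^\sigma$ is nonempty and contained in $\LAGlt$. By the rewrite basis property at $\sigma$, there is a $\to_G$-reduced sigpair $f \in AG$ with $\sig f = \sigma$; write $f = ah$ with $h\in G$ and $a\in A$, so that $\sig h$ divides $\sigma$. The key step is to show $f^\natural = 0$: since $\sigma$ is a syzygy signature, $f^\natural \in \LAGlt$, and since $G$ is a rewrite basis at every signature $<\sigma$, Corollary~\ref{lem:rb} gives that $\AGlt$ is a pivot basis. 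If $f^\natural$ were nonzero, characterization~\ref{item:pivot-basis} would furnish some element of $\AGlt$ matching $\lm f^\natural$, contradicting the fact that $f$ is $\to_G$-reduced. Hence $f^\natural = ah^\natural = 0$, and by the torsion-freeness of $M$ noted after the monomial module axioms, this forces $h^\natural = 0$, giving the required $g = h\in G$.

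For the backward direction, assume $G$ is a signature basis satisfying the syzygy condition. Being already a prebasis by hypothesis, it suffices to show $G$ is a rewrite basis at every $\sigma \in \SIG$. Empty signatures are vacuous, regular signatures are handled by Proposition~\ref{prop:groebner-rewrite}, so only syzygy signatures remain. Given such a $\sigma$, pick $g\in G$ with $g^\natural = 0$ and $\sig g \mid \sigma$, and let $a\in A$ satisfy $a\sig g = \sigma$. Then $ag = (0,\sigma)$ lies in $AG$, has signature $\sigma$, and is trivially $\to_G$-reduced because $\lm 0 = 0$ is the minimum of $\basis$ and no top reduction can produce a strictly smaller leading monomial. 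This exhibits the required reduced witness.

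The one genuinely nonroutine point is the forward step $f^\natural = 0$: it is tempting to conclude directly from $f^\natural \in \LAGlt$ but one really needs the pivot basis property of $\AGlt$ (via Corollary~\ref{lem:rb}) combined with the reducedness of $f$, and then torsion-freeness to transfer vanishing from $ah^\natural$ to $h^\natural$. Everything else is bookkeeping on the trichotomy.
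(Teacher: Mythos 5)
Your proof is correct and takes the route the paper intends (the corollary is stated there without an explicit proof): the empty/regular/syzygy trichotomy, Proposition~\ref{prop:groebner-rewrite} to settle regular signatures, and zero-polynomial-part witnesses at syzygy signatures, with the forward step at a syzygy signature handled via Corollary~\ref{lem:rb}, criterion~\ref{item:pivot-basis} and torsion-freeness. That step is sound as written; equivalently one could invoke Corollary~\ref{prop:rb} to tail-compare the reduced element with $0$ and conclude its polynomial part vanishes.
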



\subsection{Minimal elements in \rbs{}}

We first introduce a binary relation on the set of sigpairs.
\begin{definition}[Domination relation, $\sqsubseteq$]\label{def:dom}
  We say that~\emph{$g$ dominates~$f$}\marginindex{domination}, and denote it~\margindef{$g \sqsubseteq f$}, if one of the following holds:
  \begin{enumerate}[label=D\arabic*]
    \item\label{it:dom:non-min} $\exists a \in A, a  \sig g = \sig f$ and $a \lm g \leq \lm f$;
    \item\label{it:dom:non-topred} $\exists a \in A, a  \sig g < \sig f$ and $a \lm g = \lm f \neq 0$.
  \end{enumerate}
  A sigpair $f$ is \margindef{dominant} in a sigset~$G$ if~$f \in G$ and for any $g \in G$ such that~$g \sqsubseteq f$,
  we also have~$f \sqsubseteq g$.
\end{definition}
Note that the domination relation may not be transitive, although both~\ref{it:dom:non-min} and~\ref{it:dom:non-topred}, considered separately,
define a transitive relation.
Note also that~\ref{it:dom:non-min} is the covering relation defined by \textcite[p.~454]{GaoVolnyWang_2016}.

The elements of a sigset that are strictly dominated are useless in a \rb{}.
It is important to understand why.
The condition~\ref{it:dom:non-topred} means that~$ag$ can be used to top-reduce~$f$, so~$f$ will never help any sigset containing also~$g$ to be a rewrite basis.
The interpretation of the condition~\ref{it:dom:non-min} splits into two cases.
First, when~$a\lm g = \lm f$, then~$f$ will not help because~$ag$ can serve just as well in any situation where~$f$ would serve.
When~$a\lm g < \lm f$, Corollary~\ref{prop:rb} proves that~$f$ will never be reduced in a rewrite basis containing~$g$.



\begin{theorem}
  Let~$G$ be a prebasis and~$H$ be a sigsafe extension
  such that  every element of~$H$ is dominated by an element of~$G$.
  Let~$\sigma$ be a signature such that~$H$ and~$G$ are \rbs{} at any signature~$\tau < \sigma$.
  Then $H$ is a \rb{} at~$\sigma$ if and only if~$G$ is a \rb{} at~$\sigma$.
\end{theorem}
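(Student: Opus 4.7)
The plan hinges on a single leading-monomial observation and a case analysis driven by the two clauses of the domination relation. By Lemma~\ref{lem:sigsafe-ext-prebasis}, $\LAGlt = \LAXlt{H}$, and since $G$ is a \rb{} at every $\tau < \sigma$, Corollary~\ref{lem:rb} makes $\AGlt$ a pivot basis. From this I would deduce that $\AGlt$ and $\AXlt{H}$ have the same set of leading monomials: the inclusion $\AGlt \subseteq \AXlt{H}$ gives one direction, and for the other, any nonzero $e \in \AXlt{H}$ lies in $\LAGlt$, and criterion~\ref{item:pivot-basis} then produces a matching element of $\AGlt$. Consequently, $\to_G^\sigma$- and $\to_H^\sigma$-reducibility agree on elements of $M$. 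Moreover, the sigsafe condition yields $AG^\sigma = \varnothing \Leftrightarrow AH^\sigma = \varnothing$; in that trivial case both $G$ and $H$ are \rbs{} at $\sigma$, so I assume both sets are nonempty.

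The forward direction then follows immediately: a $\to_G$-reduced witness $f \in AG^\sigma$ lies in $AH^\sigma$ and remains $\to_H$-reduced by the reducibility equivalence, so $H$ is a \rb{} at $\sigma$.

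For the converse, I would pick a $\to_H$-reduced witness $h \in AH^\sigma$, which is then also $\to_G$-reduced. Writing $h = a h_0^\natural$ with $h_0 \in H$ and $a \sig h_0 = \sigma$, let $g_0 \in G$ dominate $h_0$. If the domination is of type~\ref{it:dom:non-topred}, say with $b \sig g_0 < \sig h_0$ and $b \lm g_0 = \lm h_0$, then $abg_0^\natural \in \AGlt$ has leading monomial $\lm h$ and would $\to_G$-reduce $h$, contradiction. If the domination is of type~\ref{it:dom:non-min} with $b \lm g_0 = \lm h_0$, then $f := abg_0^\natural$ lies in $AG^\sigma$ with $\lm f = \lm h$ and inherits $\to_G$-reducedness from $h$, witnessing that $G$ is a \rb{} at $\sigma$.

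The main obstacle is the subcase of~\ref{it:dom:non-min} with strict inequality $b \lm g_0 < \lm h_0$, which I would rule out as follows. Combining the sigsafe condition for $h_0$ (which supplies some $f_0 \in AG^{\sig h_0}$ with $h_0^\natural \equiv \lambda f_0^\natural \pmod{\LAGlt[\sig h_0]}$) with the prebasis condition~\ref{it:prebasis:colin} applied to $f_0^\natural$ and $bg_0^\natural$, both in $AG^{\sig h_0}$, yields some $\nu \in K^\times$ with $h_0^\natural - \nu bg_0^\natural \in \LAGlt[\sig h_0]$. The strict inequality forces the leading monomial of this difference to be $\lm h_0 \neq 0$. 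Since $\sig h_0 \leq \sigma$, Corollary~\ref{lem:rb} makes $\AGlt[\sig h_0]$ a pivot basis, so some $e \in \AGlt[\sig h_0]$ satisfies $\lm e = \lm h_0$; then $ae \in \AGlt$ has $\lm(ae) = \lm h$, witnessing the $\to_G$-reducibility of $h$---the desired contradiction.
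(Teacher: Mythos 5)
Your argument is correct, and its skeleton is the paper's: take a $\to_H$-reduced witness in $AH^\sigma$, observe it is also $\to_G$-reduced, rule out domination clause~\ref{it:dom:non-topred} for that reason, and turn clause~\ref{it:dom:non-min} into a $\to_G$-reduced witness in $AG^\sigma$. Where you genuinely diverge is in finishing the~\ref{it:dom:non-min} case. The paper transports the dominating multiple up to signature~$\sigma$ and invokes Corollary~\ref{prop:rb} there: confluence plus reducedness of the witness forces the reverse inequality of leading monomials, hence equality, in one stroke. You instead split into $b\lm g_0=\lm h_0$ (settled at once, since reducibility depends only on the leading monomial) and $b\lm g_0<\lm h_0$, and you kill the strict case by hand: the sigsafe condition plus~\ref{it:prebasis:colin} give $h_0^\natural-\nu b g_0^\natural\in\LAGlt[\sig h_0]$ with leading monomial $\lm h_0\neq 0$, Corollary~\ref{lem:rb} (applicable because $\sig h_0\leq\sigma$ by~\ref{it:momo:div}) makes $\AGlt[\sig h_0]$ a pivot basis, and criterion~\ref{item:pivot-basis} followed by multiplication by~$a$ produces a reducer of~$h$, a contradiction. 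This buys an argument that never touches the tail-equivalence/confluence relation~$\tdown_G$, at the price of the case split and of unwinding by hand exactly the sigsafe-plus-prebasis-plus-pivot-basis bookkeeping that Corollary~\ref{prop:rb} packages, which is why the paper's version is shorter. You also spell out the forward direction (equality of the leading-monomial sets of $\AGlt$ and of the corresponding set for~$H$, via Lemma~\ref{lem:sigsafe-ext-prebasis} and criterion~\ref{item:pivot-basis}), which the paper dismisses as clear; your justification correctly uses the hypothesis that $G$ is a \rb{} at every $\tau<\sigma$, which is indeed needed at that point.
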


\begin{proof}
  A sigsafe extension of a \rb{} is a \rb, so one implication is clear.
  Conversely, assume that~$H$ is a \rb{} at~$\sigma$.
  Because~$H$ is a \rb{} at~$\sigma$, there are
  $b\in A$ and~$f \in H$ such that~$b \sig f = \sigma$ and~$bf$ is $\to_H$-reduced (and thus $\to_G$-reduced too).
  By hypothesis, there is some~$g \in G$ such that~$g \sqsubseteq f$.
  Since~$bf$ is $\to_G$-reduced, \ref{it:dom:non-topred} cannot hold, so~\ref{it:dom:non-min} does:
  there is some~$a\in A$ such that~$a \sig g = \sig f$ and~$a\lm g \leq \lm f$.

  Since~$H$ is a sigsafe extension of~$G$, $f \in AG^\sigma + \LAGlt$ (maybe after a scalar multiplication), by definition.
  By Corollary~\ref{prop:rb},
  there is some~$\lambda \in K^\times$ such that
  $bf \tdown_G \lambda bag$.
  Since~$bf$ is~$\to_G$-reduced, this implies $\lm(bf) \leq \lm(bag)$.
  Combining with the previous inequality, we obtain that~$\lm(bag) = \lm(bf)$. So~$bag$,
  which has same leading monomial and signature as~$bf$, is $\to_G$-reduced and thus~$G$ is a \rb{} at~$\sigma$.
\end{proof}

Combining with Theorem~\ref{thm:faugere-criterion}, we obtain the following corollary
which may be used to reduce the number of signatures to consider when computing a \rb{}.
It allows, during the computation of a \rb, to consider only the critical signatures relative to the dominant elements, while retaining the nondominant elements for computing the reductions.

\begin{corollary}
  Let~$G$ be a prebasis and~$H$ be a sigsafe extension
  such that  every element of~$H$ is dominated by an element of~$G$.
  If $H$ is a \rb{} at any~$\sigma\in \Sigma(G)$, then~$G$ and~$H$ are \rbs.
\end{corollary}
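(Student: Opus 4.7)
The plan is to run a well-founded induction on $\sigma \in \SIG$ (which is possible because $\SIG$ is well-ordered) and show simultaneously that both $G$ and $H$ are \rbs{} at $\sigma$. The two tools I will combine are Proposition~\ref{prop:faugere-criterion-upto}, which forces any ``missing'' signature to lie in $\Sigma(G)$, and the preceding theorem on domination, which lets me transfer ``being a \rb{} at $\sigma$'' between $G$ and $H$ once both are \rbs{} below $\sigma$. Before starting, note that $H$ is automatically a prebasis by Lemma~\ref{lem:sigsafe-ext-prebasis}, so the hypotheses of both tools are available for $G$ and for $H$ alike.

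For the induction step, I assume that $G$ and $H$ are \rbs{} at every $\tau < \sigma$ and aim to conclude the same at $\sigma$. Applying Proposition~\ref{prop:faugere-criterion-upto} to the prebasis $G$, either $G$ is a \rb{} at $\sigma$ already, or $\sigma \in \Sigma(G)$. In the first case, the preceding theorem immediately gives that $H$ is also a \rb{} at $\sigma$. In the second case, the assumption of the corollary says that $H$ is a \rb{} at $\sigma$, and the preceding theorem, applied in the other direction and using the inductive hypothesis for both $G$ and $H$ at signatures $\tau < \sigma$, yields that $G$ is a \rb{} at $\sigma$ too. Either way, the induction carries through, showing that $G$ and $H$ are \rbs{} at every signature, hence \rbs{} outright.

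The main thing to be careful about is not the induction itself, which is essentially bookkeeping, but verifying that the preceding theorem is symmetric enough to be used in the ``$H \Rightarrow G$'' direction: its statement is an ``if and only if'', so this is fine, but one must feed it \emph{both} $G$ and $H$ being \rbs{} below $\sigma$. That is precisely why I build up $G$ and $H$ in lockstep through the induction, rather than trying to prove the statement for $G$ alone and then transfer at the end. Once this is set up, the whole argument is really just Faugère's criterion (Theorem~\ref{thm:faugere-criterion}) re-read through the lens of domination: by dominating $H$ with $G$, we reduce the set of signatures to be checked from $\Sigma(H)$ to $\Sigma(G)$.
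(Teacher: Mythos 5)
Your proof is correct and fills in exactly the argument the paper leaves implicit when it writes ``combining with Theorem~\ref{thm:faugere-criterion}'': the well-founded induction over~$\SIG$ carrying $G$ and $H$ in lockstep, with Proposition~\ref{prop:faugere-criterion-upto} forcing the current signature into $\Sigma(G)$ and the preceding domination theorem transferring the rewrite-basis property back and forth, is the right way to make that combination precise. The only cosmetic point is that you invoke Proposition~\ref{prop:faugere-criterion-upto} rather than Theorem~\ref{thm:faugere-criterion} itself, but since the theorem is derived from the proposition by the very same induction, this is the honest form of the argument rather than a departure from it.
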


\subsection{Syzygies}

When a rewrite basis comes from a \gb{} in the signature module through a map~$\phi : S\to M$ (see Section~\ref{sec:prebases}),
the syzygy signatures have an interpretation in terms of the kernel of~$\phi$.
This is an important feature of \rbs{} that can be exploited to compute efficiently colon ideals and saturations \parencite[]{GaoGuanVolny_2010,EderLairezMohrSafeyElDin_2023}.

\begin{proposition}
  Let~$\phi : S\to M$ be a linear map commuting with the action of~$A$,
  let~$H\subseteq S$ be a \gb{},
  let~$G = \left\{ (\phi(h), \lm h) \st h\in H \right\}$,
  and~$J = \left\{ h\in H \st \phi(h) = 0 \right\}$.
  If~$G$ is a \rb{}, then~$J$ is a \gb{}
  and~$\brack{AJ} = \ker \phi \cap \brack{AH}$.
\end{proposition}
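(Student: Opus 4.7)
The plan has two parts: prove that $J$ is a \gb{} and that $\brack{AJ} = \ker\phi\cap \brack{AH}$. The inclusion $\brack{AJ}\subseteq \ker\phi\cap \brack{AH}$ is immediate because $\phi$ vanishes on $J$ and commutes with the action of $A$. For the \gb{} property, by criterion~\ref{item:pivot-basis} it suffices to show that every nonzero $x\in \brack{AJ}$ admits $k\in J$ and $c\in A$ with $\lm(ck) = \lm x$. Both claims therefore reduce to the following key lemma: for every nonzero $s\in \ker\phi\cap \brack{AH}$, there exist $k\in J$ and $c\in A$ with $c\lm k = \lm s$.

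I would establish this lemma by showing that $\sigma \eqdef \lm s$ is a syzygy signature of $G$ and then invoking Corollary~\ref{prop:groebner-rewrite-syz}. First, $AG^\sigma$ is nonempty: since $H$ is a \gb{} and $s\neq 0$, there exist $h\in H$ and $a\in A$ with $a\lm h = \sigma$, hence $a\phi(h)\in AG^\sigma$. To show $AG^\sigma\subseteq \LAGlt$, take any pair $(a,h)$ with $a\lm h = \sigma$. Then $\lm(ah) = \sigma = \lm s$ in $S$, so~\ref{it:lm:colin} provides $\mu\in K^\times$ with $t\eqdef ah-\mu s$ of leading monomial strictly less than $\sigma$. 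Since $AH$ is a pivot basis, $t = \sum_i \nu_i c_i h_i$ with $c_i\in A$, $h_i\in H$, $\nu_i\in K^\times$ and $\lm(c_i h_i) \leq \lm t < \sigma$. Applying $\phi$ and using $\phi(s)=0$ gives
\[ a\phi(h) = \phi(t) = \sum_i \nu_i c_i\phi(h_i) \in \brack{AG^{<\sigma}} = \LAGlt. \]
So $\sigma$ is a syzygy signature.

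Since $G$ is a \rb{}, Corollary~\ref{prop:groebner-rewrite-syz} produces some $g\in G$ with $g^\natural = 0$ and $\sig g$ dividing $\sigma$. By construction of $G$, such a $g$ has the form $(\phi(k), \lm k)$ with $k\in H$; the condition $\phi(k)=0$ places $k$ in $J$ and $\lm k\mid \sigma$ furnishes the element $c\in A$ witnessing the key lemma. The \gb{} property of $J$ follows at once from~\ref{item:pivot-basis}. The module equality is then obtained by a well-founded induction on $\lm s$ for $s\in \ker\phi\cap \brack{AH}$: use the lemma to pick $k\in J$ and $c\in A$ with $\lm(ck)=\lm s$, then $\lambda\in K^\times$ (from~\ref{it:lm:colin}) with $\lm(s-\lambda ck) < \lm s$; the correction $s-\lambda ck$ still lies in $\ker\phi\cap \brack{AH}$ (as $\phi(ck)=0$), hence by induction in $\brack{AJ}$, and so does $s$.

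The main obstacle is the middle step, where one identifies $\lm s$ as a syzygy signature of $G$. Everything else is routine once this is available. The heart of that step is the interplay between~\ref{it:lm:colin} in the signature module (which matches $\lm s$ by some $\lm(ah)$ up to a lower-order tail), the \gb{} property of $H$ (which controls that tail inside $\brack{AH}$), and the $A$-equivariance of $\phi$ combined with $\phi(s) = 0$ (which pushes the relation from $S$ down to a relation in $M$ with signatures strictly below $\sigma$).
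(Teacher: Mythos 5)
Your proposal is correct, and it reaches the same pivotal reduction as the paper --- for every nonzero $s\in\ker\phi\cap\brack{AH}$ there exist $k\in J$ and $c\in A$ with $\lm(ck)=\lm s$, from which the \gb{} property of~$J$ (via Criterion~\ref{item:pivot-basis}) and the module equality (via the well-founded induction you spell out) follow --- but the middle step is genuinely different. The paper argues element-wise in~$M$: writing $s=\lambda p+q$ with $p\in AH^{\sigma}$, $\sigma=\lm s$, and $q$ of smaller leading monomial (the first step of the reduction $s\to_{AH}0$ from Criterion~\ref{item:pivot-rewrites}), it observes that $0=\phi(s)$ lies in $AG^{\sigma}+\LAGlt$, takes the $\to_G$-reduced element $ag$ at signature~$\sigma$ supplied by the \rb{} hypothesis, and applies Corollary~\ref{prop:rb} to get $0\tdown_G^\sigma\mu\,ag^\natural$, hence $ag^\natural=0$, $g^\natural=0$, with $a\sig g=\sigma$ exactly. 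You instead prove the stronger combinatorial fact that $\sigma=\lm s$ is a \emph{syzygy signature} of~$G$ --- a computation carried out entirely in~$S$ using \ref{it:lm:colin}, the pivot-basis property of~$AH$ and the $A$-equivariance of~$\phi$, essentially re-running the computation behind Theorem~\ref{lem:gb-implies-prebasis} --- and then delegate the extraction of an element with zero polynomial part and $\sig g$ dividing~$\sigma$ to Corollary~\ref{prop:groebner-rewrite-syz}. Both routes are sound and of comparable length; yours yields a slightly sharper intermediate statement (the identification of $\lm s$ as a syzygy signature, and a divisibility rather than an equality of signatures, which is all your induction needs), at the cost of leaning on Corollary~\ref{prop:groebner-rewrite-syz}, whose proof the paper omits (it does follow from Proposition~\ref{prop:groebner-rewrite} together with Corollary~\ref{lem:rb}, by the same confluence argument the paper invokes directly). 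You are also more explicit than the paper about the induction establishing $\ker\phi\cap\brack{AH}\subseteq\brack{AJ}$, which the paper leaves implicit.
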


\begin{proof}
  It is clear that~$\brack{AJ} \subseteq \ker \phi \cap \brack{AH}$.
  Let~$h \in \ker \phi \cap \brack{AH}$,
  let~$\sigma = \lm h$
  and let us prove that there is some~$k \in AJ$ such that~$\lm h = \lm k$.
  This will prove at the same time that $J$ is a \gb{}, using Criterion~\ref{item:pivot-basis},
  and that~$\brack{AJ} = \ker \phi \cap \brack{AH}$.

  Because~$H$ is a \gb{},
  we can decompose~$h$ as~$\lambda p + q$, with $p\in AH^\sigma$, $q \in \LAXlt{H}$ and~$\lambda\in K^\times$ (using the first reduction step of the reduction given by Criterion~\ref{item:pivot-rewrites}).
  In particular~$\lambda^{-1} \phi(h) \in AG^\sigma + \LAGlt$.
  Since~$G$ is a \rb{} at $\sigma$, there is some~$a \in A$ and~$g\in G$ such that
  $ag$ is $\to_G$-reduced and~$a\sig g = \sigma$.
  By Corollary~\ref{prop:rb},
  we have~$\phi(h) \tdown_G^\sigma \mu ag^\natural$ for some~$\mu\in K^\times$.
  But~$\phi(h) = 0$ and~$ag^\natural$ is $\to_G^\sigma$-reduced, so~$ag^\natural = 0$
  and therefore~$g^\natural = 0$.
  By definition of~$H$, $g = (\phi(k), \lm k)$ for some~$k\in H$.
  And since~$g^\natural = 0$, we have~$k \in J$.
  In particular, $\lm h = \sigma = \lm(ak)$.
\end{proof}

\section{Algorithm templates}
\label{sec:algorithms}

In all this section we assume that~$M$ and~$S$ are Noetherian monomial modules, which we define in Section~\ref{sec:noeth-monom-modul}.
(Note that this is unrelated to the property that the regular reduction~$\to_G$ is Noetherian.)
This will imply the finiteness of the critical set~$\Sigma(G)$ of finite sigsets~$G$
as well as the existence of finite sigsafe extensions that are \rbs{}, for any sigsets.

As it will become clear, there is not a single algorithm for computing \rbs{}.
There are many possible variants, some major, such as F5 selection strategy or F4-style reduction, and some minor.
There are also many possible ways to combine them.
More than to prescribe some algorithms, the goal of this section is to highlight design principles.

Section~\ref{sec:noeth-monom-modul} introduces the Noetherian hypotheses.
Section~\ref{sec:proc-sign-order} studies an algorithm where signatures are processed \emph{in order},
that is when a signature is always processed after any smaller signatures.
This is a natural setting, yielding simple proofs of termination, but it does not fit all situations.
Section~\ref{sec:minim-lead-monom} studies the idea of minimizing the leading monomial of the reductant,
in the style of \textcite{ArriPerry_2011} and \textcite{SunWang_2011}.
Again, it leads to rather simple proofs of termination, but it leaves aside other reductant selection strategy, such as the original F5 strategy.

To study algorithms where the signatures may be processed \emph{out of order}
and the reductant selected (almost) freely, Section~\ref{sec:well-formed-sigtrees}
introduces \emph{sigtrees}. It is a tree whose nodes are the elements of the \rb{} being computed,
and $g$ is a child of~$f$ if was obtained from a reduction by a multiple of $f$.
Under mild hypotheses, sigtrees are finite (Theorem~\ref{thm:sigtree-term}), giving a very useful termination criterion.
This criterion is put into practice in Section~\ref{sec:f5-reduct-select}, to study the F5 selection strategy
with out-of-order signature processing, in Section~\ref{sec:expl-manag-crit},
to study the most general selection strategy, according to the sigtree criterion,
and in Section~\ref{sec:simult-reduct} to study simultaneous reduction in the F4 style.

\subsection{Noetherian monomial modules}
\label{sec:noeth-monom-modul}

A partial order~$\trianglelefteq$ on a set~$X$ is a \emph{well partial order} (or \emph{wpo})
if for any sequence~$(x_i)_{i \geq 0}$ in~$X$, there are some~$i < j$ such that~$x_i \trianglelefteq x_j$.
A subset~$T$ of a partially ordered set~$X$ is \emph{closed} if~$a\trianglelefteq b$ and~$a\in T$ imply~$b\in T$.
Wpos have several equivalent characterizations.

\begin{lemma}[{\cite[Theorem~2.1]{Higman_1952}}]
  \label{lem:noetherian}
  Let~$X$ be a set with a partial order~$\trianglelefteq$.
  The following assertions are equivalent:
  \begin{enumerate}[label=N\arabic*]
    \item any sequence $T_0 \subseteq T_1 \subseteq \dotsc$ of closed subsets of~$X$ stabilizes;
    \item\label{it:wpo:pair} for any sequence~$(x_i)_{i\geq 0}$ in~$\SIG$, there are some~$i<j$ such that~$x_i \trianglelefteq x_j$ (i.e. $\trianglelefteq$ is a wpo);
    \item\label{it:wpo:extract} for any sequence~$(x_i)_{i \geq 0}$ in~$\SIG$, there is a subsequence~$(x_j)_{j \geq 0}$ such that~$x_j \trianglelefteq x_{j+1}$ for any~$j \geq 0$;
    \item\label{it:wpo:fpb} for any closed set~$T\subseteq X$, there is finite set~$B$ such that~$T = \left\{ x\in X\st \exists b\in B, b \trianglelefteq x \right\}$.
  \end{enumerate}
\end{lemma}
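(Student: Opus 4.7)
The plan is to establish the cycle (stabilization) $\Rightarrow$ \ref{it:wpo:pair} $\Rightarrow$ \ref{it:wpo:extract} $\Rightarrow$ (stabilization), and then bring in the finite-basis condition through \ref{it:wpo:pair} $\Rightarrow$ \ref{it:wpo:fpb} $\Rightarrow$ (stabilization). Every step is either a direct construction or a contrapositive.

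For the core cycle: stabilization $\Rightarrow$ \ref{it:wpo:pair} is contrapositive. Given a sequence $(x_i)$ with $x_i \not\trianglelefteq x_j$ for all $i<j$, the closed sets $T_n = \{ y \in X : \exists i \leq n,\ x_i \trianglelefteq y \}$ form a strictly increasing chain, since $x_{n+1} \in T_{n+1} \setminus T_n$. The implication \ref{it:wpo:pair} $\Rightarrow$ \ref{it:wpo:extract} uses a \emph{peak} argument: call $i$ a peak if there is no $j > i$ with $x_i \trianglelefteq x_j$; infinitely many peaks would yield a subsequence violating \ref{it:wpo:pair}, so beyond the last peak one can greedily build an increasing subsequence by choosing, at each step, a later index comparable to the current one. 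For \ref{it:wpo:extract} $\Rightarrow$ stabilization, take a non-stabilizing chain and pick $x_n \in T_n \setminus T_{n-1}$ for each~$n$; extract an increasing subsequence $x_{n_1} \trianglelefteq x_{n_2} \trianglelefteq \dotsb$. Closedness of $T_{n_1}$ forces $x_{n_2} \in T_{n_1} \subseteq T_{n_2-1}$, contradicting the choice of $x_{n_2}$.

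To incorporate \ref{it:wpo:fpb}, observe first that \ref{it:wpo:pair} rules out strictly descending chains (such a chain would also be an antichain), so $\trianglelefteq$ is well-founded. For \ref{it:wpo:pair} $\Rightarrow$ \ref{it:wpo:fpb}, let $B$ be the set of minimal elements of a closed $T$: well-foundedness ensures every $x \in T$ dominates some $b \in B$, and $B$ is an antichain, hence finite by \ref{it:wpo:pair}. Conversely, \ref{it:wpo:fpb} $\Rightarrow$ stabilization: the union $T$ of an increasing chain $(T_n)$ is closed, so it admits a finite basis $B$; each $b \in B$ already lies in some $T_{n_b}$, so the chain stabilizes at $N = \max_b n_b$.

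The main obstacle is the peak argument for \ref{it:wpo:pair} $\Rightarrow$ \ref{it:wpo:extract}, which is the only step requiring a genuinely nontrivial combinatorial idea; the related well-foundedness step for \ref{it:wpo:pair} $\Rightarrow$ \ref{it:wpo:fpb} is a minor but essential subtlety. All remaining implications amount to rewriting the hypothesis into the conclusion.
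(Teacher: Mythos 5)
Your proof is correct. The paper itself offers no proof of this lemma---it is cited from Higman (1952, Theorem~2.1)---so there is no in-paper argument to compare against; what you give is the standard textbook proof of the equivalent characterizations of a well partial order. Your cycle of implications and the two side branches through \ref{it:wpo:fpb} are all sound: the contrapositive construction of a strictly increasing chain from a bad sequence, the peak argument with the observation that a strictly descending chain is itself a bad sequence, and the finite-basis extraction via minimal elements and antichain finiteness are exactly what one expects here. One cosmetic remark: in the step \ref{it:wpo:extract} $\Rightarrow$ (N1), you pick $x_n \in T_n \setminus T_{n-1}$ for every~$n$, which tacitly assumes the chain is strictly increasing at every step; a chain that fails to stabilize need not do so, so you should first pass to a strictly increasing subchain (or, equivalently, index by the values of~$n$ at which the chain actually grows) before selecting the~$x_n$. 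This is a trivial reduction but worth stating explicitly.
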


A monomial set~$\basis$ of a monomial module~$M$ is partially ordered by divisibility,
an order that we will denote~$\trianglelefteq$, not to be confused with the total order~$\leq$.
Namely, $m\trianglelefteq n$ if there is some~$a\in A$ such that~$am = n$.
Nonetheless, if~$a\trianglelefteq b$ then~$a\leq b$, by~\ref{it:momo:div}.
\begin{definition}[Noetherian monomial space]
  A monomial space~$M$ is \margindef{Noetherian}
  if~$\trianglelefteq$ is a wpo.
\end{definition}

From now on, we assume that the monomial spaces~$M$ and~$S$ (the signature module)
are Noetherian. The first interesting consequence is the finiteness of the critical set~$\Sigma(G)$ for a given finite sigset~$G$.

\begin{lemma}\label{lem:finite-crit-sig}
  Let~$G$ be a finite sigset. If~$S$ is Noetherian, then~$\Sigma(G)$ is finite.
\end{lemma}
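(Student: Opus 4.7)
\medskip
\noindent\textbf{Proof plan.} Since $G$ is finite and $\Sigma(G) = \bigcup_{f \in G} \Sigma(f, G)$, it suffices to show that $\Sigma(f, G)$ is finite for each individual $f \in G$. The plan is to exhibit $\Sigma(f, G)$ as the set of minimal elements of some subset of $(\SIG, \trianglelefteq)$ for the divisibility order, and then invoke the Noetherian hypothesis on $S$.

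Concretely, let
\[ T_f \eqdef \left\{ \sigma \in \SIG \st \exists a\in A,\ a\sig f = \sigma \text{ and } af \text{ is not } \to_G\text{-reduced} \right\}. \]
The defining conditions \ref{it:sigma:nontr} and \ref{it:sigma:min} say exactly that $\sigma \in \Sigma(f, G)$ if and only if $\sigma \in T_f$ and no proper divisor $\tau \triangleleft \sigma$ of $\sigma$ lies in $T_f$: indeed any such $\tau \in T_f$ is necessarily of the form $b\sig f$ with $bf$ non-reduced (by definition of $T_f$), so \ref{it:sigma:min} is the statement that no such $\tau$ lies in $T_f$. Hence $\Sigma(f, G)$ is precisely the set of minimal elements of $T_f$ with respect to $\trianglelefteq$.

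Now, since $S$ is Noetherian, $\trianglelefteq$ is a well partial order on $\SIG$, so by \ref{it:wpo:pair} of Lemma~\ref{lem:noetherian} every antichain in $(\SIG, \trianglelefteq)$ is finite (an infinite antichain would directly contradict the wpo property). The minimal elements of any subset of a partially ordered set form an antichain, so $\Sigma(f, G)$ is finite. Taking the union over the finitely many $f \in G$ concludes.

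I do not anticipate any genuine obstacle: the only slightly delicate point is to check that restricting to proper divisors of the form $b\sig f$ in \ref{it:sigma:min} really does match the usual notion of minimality in $T_f$, which follows immediately from the definition of $T_f$.
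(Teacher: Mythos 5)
Your proof is correct and follows essentially the same route as the paper: decompose $\Sigma(G)=\bigcup_{f\in G}\Sigma(f,G)$, identify each $\Sigma(f,G)$ as the set of $\trianglelefteq$-minimal elements of a subset of $\SIG$, and conclude by the Noetherian hypothesis via Lemma~\ref{lem:noetherian}. The only (immaterial) difference is that the paper notes the subset is closed and invokes the finite-basis criterion~\ref{it:wpo:fpb}, whereas you invoke~\ref{it:wpo:pair} through the finiteness of antichains, which even spares you the closedness observation.
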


\begin{proof}
  Let~$f\in G$.
  By definition, $\Sigma(f, G)$ is the set of $\trianglelefteq$-minimal elements of some closed subset of~$\SIG$.
  By Criterion~\ref{it:wpo:fpb}, it is finite.
\end{proof}

The termination arguments will not follow from the Noetherianity of~$M$ or~$S$ alone,
but in conjunction.
More precisely, in~$\basis \times \SIG$  we define~$(m, \sigma) \trianglelefteq (n, \tau)$ if~$m \trianglelefteq n$ and~$\sigma \trianglelefteq \tau$.
In other words, $(m, \sigma) \trianglelefteq (n, \tau)$ if there are~$a, b\in A$ such that~$am = n$ and~$b \sigma = \tau$.
Let us insist that $a$ and~$b$ may not be equal.

\begin{lemma}\label{lem:MSwpo}
  If~$M$ and~$S$ are Noetherian monomial modules, then~$\trianglelefteq$ is a wpo on~$\basis \times \SIG$.
\end{lemma}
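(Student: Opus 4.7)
The plan is to verify characterization \ref{it:wpo:extract} of Lemma~\ref{lem:noetherian} directly on the product order, since this characterization is preserved by finite products in an essentially trivial way. This is the standard argument that a finite product of wpos is a wpo, and no genuine obstacle is expected; the only thing to be careful about is that $\trianglelefteq$ on $\basis$ (resp.\ $\SIG$) is transitive, which it is because divisibility is transitive: if $am = n$ and $bn = p$, then $(ba)m = p$.

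More concretely, let $((m_i, \sigma_i))_{i \geq 0}$ be an arbitrary sequence in $\basis \times \SIG$. First, since $M$ is Noetherian, applying \ref{it:wpo:extract} to $(m_i)_{i \geq 0}$ yields a strictly increasing map $\phi \colon \mathbb{N} \to \mathbb{N}$ such that $m_{\phi(i)} \trianglelefteq m_{\phi(i+1)}$ for all $i$. Second, applying \ref{it:wpo:extract} to the subsequence $(\sigma_{\phi(i)})_{i \geq 0}$, which lies in the Noetherian monomial module $S$, yields a strictly increasing $\psi \colon \mathbb{N} \to \mathbb{N}$ such that $\sigma_{\phi(\psi(i))} \trianglelefteq \sigma_{\phi(\psi(i+1))}$ for all $i$.

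Set $\chi = \phi \circ \psi$, which is strictly increasing. By transitivity of divisibility on $\basis$ we still have $m_{\chi(i)} \trianglelefteq m_{\chi(i+1)}$, and by construction $\sigma_{\chi(i)} \trianglelefteq \sigma_{\chi(i+1)}$. By the definition of the product order, $(m_{\chi(i)}, \sigma_{\chi(i)}) \trianglelefteq (m_{\chi(i+1)}, \sigma_{\chi(i+1)})$ for all $i$. This shows that $\basis \times \SIG$ satisfies \ref{it:wpo:extract}, hence $\trianglelefteq$ is a wpo on $\basis \times \SIG$ by Lemma~\ref{lem:noetherian}. The proof takes a handful of lines and requires no machinery beyond Higman's lemma as already recalled.
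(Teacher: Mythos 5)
Your proof is correct and follows essentially the same route as the paper's: verify characterization \ref{it:wpo:extract} of Lemma~\ref{lem:noetherian} for the product order by a double extraction, first on the $\basis$-component and then on the $\SIG$-component of the resulting subsequence. The paper's version is just slightly terser (``up to extracting a subsequence''), leaving the role of transitivity implicit, which you have spelled out correctly.
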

\begin{proof}
  Let~$\big((m_i, \sigma_i)\big)_{i \geq 0}$ be an infinite sequence in~$\basis\times \SIG$.
  By Criterion~\ref{it:wpo:extract},
  we may assume, up to extracting a subsequence, that~$m_i \trianglelefteq m_{i+1}$.
  Similarly, we may assume, up to extracting a subsubsequence, that~$\sigma_i \trianglelefteq \sigma_{i+1}$.
  So~$\trianglelefteq$ on~$\basis\times\SIG$ satisfies Criterion~\ref{it:wpo:extract}.
\end{proof}

The following statement relates $\trianglelefteq$ with the domination relation~$\sqsubseteq$ (Definition~\ref{def:dom}).

\begin{lemma}\label{lem:div-implies-dom}
  For any sigpairs~$f$ and~$g$, if~$(\sig g, \lm g) \trianglelefteq (\sig f, \lm f)$
  then~$g \sqsubseteq f$.
\end{lemma}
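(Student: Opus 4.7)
The plan is to start from the definition of $\trianglelefteq$ on $\basis\times\SIG$: there exist $a,b\in A$ with $a\sig g=\sig f$ and $b\lm g=\lm f$. The two elements $a$ and $b$ need not be equal, and the whole job is to reconcile them into a single witness for one of the clauses \ref{it:dom:non-min} or~\ref{it:dom:non-topred}. The bridge between the signature side and the polynomial side is axiom~\ref{it:sig:compat}, which transports an inequality in $\SIG$ to an inequality in $\basis$ — provided the signature is nonzero. So the proof splits naturally along whether $\sig g=0$, and, when $\sig g\neq 0$, along the trichotomy comparing $a\sig g$ and $b\sig g$.

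The plan in order:
\begin{enumerate}
\item If $\sig g=0$, then $a\sig g=b\sig g=0=\sig f$ by the action preserving $0$, so the witness $a':=b$ satisfies $a'\sig g=\sig f$ and $a'\lm g=\lm f$, which is~\ref{it:dom:non-min}.
\item Otherwise $\sig g\neq 0$ and we compare $a\sig g$ with $b\sig g$. If $a\sig g\leq b\sig g$, axiom~\ref{it:sig:compat} yields $a\lm g\leq b\lm g=\lm f$, and the original $a$ witnesses~\ref{it:dom:non-min}.
\item If $b\sig g<a\sig g=\sig f$, the witness $a':=b$ gives $a'\sig g<\sig f$ and $a'\lm g=\lm f$; this is~\ref{it:dom:non-topred} provided $\lm f\neq 0$. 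To close the remaining edge case $\lm f=0$, invoke torsion-freeness (noted right after~\ref{it:momo:ord}): from $b\lm g=\lm f=0$ and $\lm(bg^\natural)=0\Rightarrow bg^\natural=0\Rightarrow g^\natural=0$, we deduce $\lm g=0$, hence $a\lm g=0=\lm f$, and the original $a$ witnesses~\ref{it:dom:non-min}.
\end{enumerate}

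The only subtle step is the last one: one is tempted to conclude~\ref{it:dom:non-topred} directly, but that clause explicitly requires $\lm f\neq 0$, so the degenerate case $\lm f=0$ must be handled, and it is torsion-freeness of $M$ (a consequence of~\ref{it:momo:ord}) that dissolves the issue by forcing $\lm g=0$ as well. Apart from this wrinkle, the argument is essentially one application of compatibility~\ref{it:sig:compat}; the main conceptual point to keep straight is that \ref{it:dom:non-min}{} and~\ref{it:dom:non-topred}{} offer two complementary ways to merge the two possibly distinct multipliers $a$ and $b$ into one.
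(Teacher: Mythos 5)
Your proof is correct and follows essentially the same route as the paper's: fix $a,b\in A$ witnessing the divisibilities, split on the comparison between $a\sig g$ and $b\sig g$, and use compatibility \ref{it:sig:compat} (applicable exactly when $\sig g\neq 0$) to transfer the signature inequality to leading monomials. The only difference is a reordering of the case split — you peel off $\sig g=0$ first, the paper peels off $b\sig g<\sig f$ first — which changes nothing substantive. You are in fact slightly more careful than the paper's own proof: in the branch $b\sig g<\sig f$ the paper appeals to \ref{it:dom:non-topred} with witness $b$ without checking the condition $\lm f\neq 0$ built into that clause, whereas you explicitly dispose of the degenerate case $\lm f=0$ via torsion-freeness (forcing $\lm g=0$, so \ref{it:dom:non-min} holds with the original $a$). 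This closes a small gap the paper leaves implicit.
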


\begin{proof}
  Let~$a, b\in A$ such that~$a\sig g=\sig f$ and~$b\lm g = \lm f$.
  If~$b \sig g < \sig f$, then \ref{it:dom:non-topred} holds.
  Otherwise, if~$a \sig g = \sig f \leq b \sig g$, then \ref{it:sig:compat} implies that~$a \lm g \leq b \lm g = \lm f$ (so \ref{it:dom:non-min} holds), unless~$\sig g = 0$. In this last case, we have~$\sig f = b\sig g = 0$ and~$b\lm g = \lm f$, so~\ref{it:dom:non-min} also holds.
\end{proof}

The following statement will underlie all the termination proofs.
It is an analogue of Dickson's Lemma for sigpairs. However, we will see that this statement may not apply directly.
The relation $\trianglelefteq$ on~$\basis\times \SIG$, the domination relation~$\sqsubseteq$
and Lemma~\ref{lem:div-implies-dom} appeared
first in the work of
\textcite{ArriPerry_2011,ArriPerry_2017} and they have been used several times since then
\parencite{EderPerry_2011,RouneStillman_2012,GaoVolnyWang_2016}.

\begin{proposition}[Dickson's Lemma for sigpairs]\label{prop:dom-noetherian}
  For any infinite sequence~$(f_i)_{i \geq 0}$ of sigpairs, there are indices~$i< j$ such that~$f_i \sqsubseteq f_j$.
\end{proposition}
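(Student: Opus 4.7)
The plan is to apply two previously established results in immediate succession to the natural projection of the sequence of sigpairs. Concretely, a sigpair~$f$ only contributes two pieces of data to the domination relation~$\sqsubseteq$, namely its signature~$\sig f \in \SIG$ and the leading monomial~$\lm f \in \basis$ of its polynomial part. So the first move is to forget everything else and replace the given sequence~$(f_i)_{i \geq 0}$ by its projection $\big((\sig f_i, \lm f_i)\big)_{i \geq 0}$ in~$\SIG \times \basis$.

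Next, I would invoke Lemma~\ref{lem:MSwpo}, which, under the standing Noetherian assumption on~$M$ and~$S$ in force throughout Section~\ref{sec:algorithms}, asserts that~$\trianglelefteq$ is a well partial order on~$\basis \times \SIG$, and hence on~$\SIG \times \basis$ since the order of the coordinates is irrelevant for this definition. Applying the defining property of a wpo (Criterion~\ref{it:wpo:pair} of Lemma~\ref{lem:noetherian}) to the projected sequence yields indices~$i < j$ such that $(\sig f_i, \lm f_i) \trianglelefteq (\sig f_j, \lm f_j)$.

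Finally, Lemma~\ref{lem:div-implies-dom} converts this divisibility into domination and gives~$f_i \sqsubseteq f_j$, which is the desired conclusion. I do not expect any obstacle: all the real content has already been packaged in Lemma~\ref{lem:MSwpo} (where Noetherianity of~$M$ and~$S$ is combined via a two-step subsequence extraction using~\ref{it:wpo:extract}) and in Lemma~\ref{lem:div-implies-dom} (where the compatibility axiom~\ref{it:sig:compat} does the work, together with a small case analysis when~$\sig g = 0$). The only care needed is to respect the order of the coordinates when translating between~$\SIG \times \basis$ and~$\basis \times \SIG$; otherwise the proof is a three-line chain.
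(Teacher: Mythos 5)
Your proof is correct and follows exactly the paper's own route: the paper likewise derives the statement directly from Lemma~\ref{lem:MSwpo}, Criterion~\ref{it:wpo:pair}, and Lemma~\ref{lem:div-implies-dom}.
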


\begin{proof}
  It is a direct corollary of Lemma~\ref{lem:MSwpo}, Criterion~\ref{it:wpo:pair} and Lemma~\ref{lem:div-implies-dom}.
\end{proof}

\subsection{Processing signatures in order}
\label{sec:proc-sign-order}

By Proposition~\ref{prop:faugere-criterion-upto},
we can compute the smallest signature at which a given prebasis~$G$ is not a \rb{}: it must be an element of the critical set~$\Sigma(G)$, which is finite by Lemma~\ref{lem:finite-crit-sig}.
This signature has many good properties induced by Corollary~\ref{prop:rb}, and in particular we deduce the following one.

\begin{proposition}
  \label{prop:minimal-sig-implies-dominant}
  Let~$G$ be a prebasis and let~$\sigma$ such that~$G$ is a \rb{} at any~$\tau < \sigma$.
  Let~$f \in AG$ with~$\sig f = \sigma$
  and let~$h$ be any~$\to_G$-normal form of~$f$.
  Then either~$G$ is a \rb{} at~$\sigma$, or $g \not\sqsubseteq h$ for any~$g \in G$.
\end{proposition}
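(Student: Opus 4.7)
The plan is to argue by contradiction: assume simultaneously that some $g \in G$ satisfies $g \sqsubseteq h$ and that $G$ is \emph{not} a \rb{} at $\sigma$, and derive a contradiction by splitting on the two clauses of Definition~\ref{def:dom}. Before the case split, I would record three preliminary facts. First, $\sig h = \sigma$ since regular reduction preserves signatures. Second, $h$ is $\to_G$-reduced by definition of normal form. Third, $h^\natural \in AG^\sigma + \LAGlt$: each regular reduction step modifies $h^\natural$ by subtracting a scalar multiple of an element of $\AGlt$, while $f^\natural \in AG^\sigma$ to start with.

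In the first case (\ref{it:dom:non-topred}), there is $a\in A$ with $a\sig g < \sigma$ and $a\lm g = \lm h \neq 0$. Then $ag^\natural \in \AGlt$ has the same leading monomial as $h^\natural$, which provides a one-step top reduction of $h^\natural$ via a reducer in $\AGlt$; lifted to sigpairs, this is a $\to_G$-reduction of $h$, contradicting that $h$ is $\to_G$-reduced.

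In the second case (\ref{it:dom:non-min}), there is $a \in A$ with $a\sig g = \sigma$ and $a\lm g \leq \lm h$. Here $ag$ is a sigpair in $AG$ of signature $\sigma$, so showing it is $\to_G$-reduced directly contradicts the hypothesis that $G$ is not a \rb{} at~$\sigma$. Suppose for a sub-contradiction that $ag^\natural$ is $\to_G^\sigma$-reducible. By Noetherianity of top reduction, pick a $\to_G^\sigma$-normal form $v$ of $ag^\natural$; since at least one top-reduction step must occur, $\lm v < a\lm g$. Both $v$ and $h^\natural$ lie in $AG^\sigma + \LAGlt$ (the first because reductions only add elements of $\LAGlt$), and $G$ is a \rb{} at every $\tau < \sigma$ by hypothesis, so Corollary~\ref{prop:rb} yields $\lambda \in K^\times$ with $v \tdown_G^\sigma \lambda h^\natural$. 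Since both $v$ and $\lambda h^\natural$ are already $\to_G^\sigma$-reduced, this cospan collapses to $v \smile_G^\sigma \lambda h^\natural$, which forces $\lm v = \lm h$. Combining with $\lm v < a\lm g \leq \lm h$ gives $\lm h < \lm h$, the desired contradiction.

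The main obstacle is the second case: one must correctly identify $ag$ as the witness that $G$ is a \rb{} at $\sigma$, and then exploit the confluence-modulo-tail-equivalence provided by Corollary~\ref{prop:rb} to sandwich $\lm v$ between $\lm h$ and something strictly smaller than $\lm h$. The first case is essentially immediate from the definition of $\to_G$.
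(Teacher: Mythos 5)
Your proof is correct and follows essentially the same path as the paper's: rule out case~\ref{it:dom:non-topred} by reducedness of~$h$, then in case~\ref{it:dom:non-min} apply Corollary~\ref{prop:rb} together with reducedness of~$h$ to force~$\lm(ag)=\lm h$ and conclude that~$ag$ is a reduced witness in signature~$\sigma$. The paper is a bit more direct — it applies Corollary~\ref{prop:rb} to~$f$ and~$\lambda ag$ and then uses confluence to get~$h \tdown_G \lambda ag$, avoiding the introduction of a normal form~$v$ of~$ag^\natural$ and the nested sub-contradiction.
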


\begin{proof}
  Assume that there is some~$g\in G$ such that~$g\sqsubseteq h$.
  Domination condition~\ref{it:dom:non-topred} is ruled out because~$h$ is $\to_G$-reduced.
  Therefore~\ref{it:dom:non-min} holds: there is some~$a\in A$ such that~$a \sig g = \sigma$ and~$\lm(ag) \leq \lm h$.
  By Corollary~\ref{prop:rb},
  $f \tdown_G \lambda ag$ for some~$\lambda \in K^\times$.
  By confluence, we also have~$h \tdown_G \lambda ag$.
  Since~$h$ is $\to_G$-reduced, this implies that~$\lm h \leq \lm(ag)$.
  Combining with the condition~\ref{it:dom:non-min}, we obtain that~$\lm(ag) = \lm h$ and therefore that~$ag$ is also $\to_G$-reduced.
  So~$G$ is a \rb{} at~$\sigma$.
\end{proof}

This leads to Algorithm~\ref{algo:rb-min-sig}.
There is no restriction whatsoever on the choice of the reductant on line~\ref{line:minsig:pickf},
they all reduce to the same sigpair, up to scaling and tail equivalence~$\smile_G$ (Corollary~\ref{prop:rb}).

\begin{algorithm}
  \begin{description}
    \item[input] A finite prebasis~$G$
    \item[output] A finite sigsafe extension of~$G$ which is a \rb{}
  \end{description}
  \begin{pseudo}
    \kw{while} $G$ is not a \rb{} at all $\sigma\in \Sigma(G)$ \kw{do} \label{line:minsig:while}\\+
    $\sigma \gets \min \left\{ \sigma \in \Sigma(G) \st G \text{ is not a \rb{} at $\sigma$} \right\}$ \\
    pick any~$f\in AG$ with~$\sig f = \sigma$ \label{line:minsig:pickf}\\
    $g \gets $ any $\to_G$-normal form of~$f$ \label{line:minsig:red}\\
    $G \gets G \cup \left\{ g \right\}$ \label{line:minsig:insert}\\-
    \kw{return} $G$
  \end{pseudo}
  \caption{Computation of a \rb{} handling signatures in increasing order}
  \label{algo:rb-min-sig}
\end{algorithm}

\begin{theorem}\label{thm:algo-minsig}
  Algorithm~\ref{algo:rb-min-sig} is correct and terminates.
\end{theorem}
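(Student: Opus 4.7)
The plan is to establish correctness via Faugère's criterion (Theorem~\ref{thm:faugere-criterion}), once the loop is shown to preserve the prebasis property, and to prove termination by combining Proposition~\ref{prop:minimal-sig-implies-dominant} with Dickson's Lemma for sigpairs (Proposition~\ref{prop:dom-noetherian}).

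For correctness, I would first verify the loop invariants. Each iteration replaces~$G$ by $G \cup \{g\}$, where $g$ is a $\to_G$-normal form of some $f\in AG$ with $\sig f = \sigma$; in particular $\sig g = \sigma$, since regular reductions preserve signatures. By definition this is a sigsafe extension of~$G$, and Lemma~\ref{lem:sigsafe-ext-prebasis} asserts that such extensions preserve the prebasis property and compose transitively, so~$G$ remains a prebasis and a sigsafe extension of the input throughout execution. When the while-loop exits, $G$ is a \rb{} at every $\sigma \in \Sigma(G)$, so Theorem~\ref{thm:faugere-criterion} yields that $G$ is a \rb.

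For termination, let $G_k$ denote the sigset after iteration~$k$, let $g_k$ be the sigpair inserted at that iteration, and set $\sigma_k \eqdef \sig g_k$. The first step is to show that $(\sigma_k)$ is strictly increasing. By the choice of~$\sigma_k$ as the smallest signature in $\Sigma(G_{k-1})$ at which $G_{k-1}$ fails to be a \rb{}, together with Proposition~\ref{prop:faugere-criterion-upto}, $G_{k-1}$ is already a \rb{} at every $\tau < \sigma_k$. Inserting $g_k$ introduces no reducer with signature strictly below $\sigma_k$ (by~\ref{it:momo:div} applied in~$S$), so $g_k$ remains $\to_{G_k}$-reduced, witnessing that $G_k$ is a \rb{} at~$\sigma_k$. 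A second application of Proposition~\ref{prop:faugere-criterion-upto} then shows $G_k$ is a \rb{} at every $\tau \leq \sigma_k$, forcing $\sigma_{k+1} > \sigma_k$ whenever the loop continues.

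The main obstacle is to rule out an infinite execution---strict monotonicity of $(\sigma_k)$ alone does not suffice since $\SIG$ is only well-ordered, not finite. The plan is to argue by contradiction: suppose the loop runs forever, producing an infinite sequence $(g_k)$ of sigpairs. At each step, $G_{k-1}$ is a \rb{} at every $\tau < \sigma_k$ but not at $\sigma_k$, so Proposition~\ref{prop:minimal-sig-implies-dominant}, applied to the reductant $f \in AG_{k-1}$ (with $\sig f = \sigma_k$) and its normal form~$g_k$, forces $g \not\sqsubseteq g_k$ for every $g \in G_{k-1}$. In particular $g_i \not\sqsubseteq g_j$ whenever $i < j$, contradicting Dickson's Lemma for sigpairs (Proposition~\ref{prop:dom-noetherian}), which guarantees some pair of indices $i<j$ with $g_i \sqsubseteq g_j$. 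Hence the algorithm terminates.
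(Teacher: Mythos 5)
Your proof is correct and follows essentially the same route as the paper: correctness from Theorem~\ref{thm:faugere-criterion} (with the observation that each iteration is a sigsafe extension, so the prebasis property persists), and termination by contradiction via Proposition~\ref{prop:faugere-criterion-upto}, Proposition~\ref{prop:minimal-sig-implies-dominant} and Dickson's Lemma for sigpairs (Proposition~\ref{prop:dom-noetherian}). The paragraph on strict monotonicity of the~$\sigma_k$ is an unneeded detour (and its appeal to a ``second application'' of Proposition~\ref{prop:faugere-criterion-upto} is not quite the right reference --- what is really used there is~\ref{it:momo:div} again), but the main argument does not rely on it and stands as written.
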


\begin{proof}
  Correction follows from Theorem~\ref{thm:faugere-criterion}.
  For contradiction, assume that the algorithm does not terminate for some input.
  Let~$g_1,g_2,\dotsc$ be the sigpairs that are inserted to~$G$ on line~\ref{line:minsig:insert} on each iteration.
  By Proposition~\ref{prop:dom-noetherian}, there are some indice~$i< j$ such that~$g_i \sqsubseteq g_j$.
  Proposition~\ref{prop:faugere-criterion-upto} implies that when~$g_j$ is picked, $G$ is a \rb{} at any signature $< \sig g_j$.
  So Proposition~\ref{prop:minimal-sig-implies-dominant} implies that~$g_i \not\sqsubseteq g_j$, which is a contradiction.
\end{proof}

This algorithm is close in essence to the original F5 algorithm \parencite{Faugere_2002}
and more generally to the RB algorithm of \textcite{EderPerry_2011}.
The notion of critical set and the notation~$\Sigma(G)$ greatly simplify the presentation of the algorithm,
but it hides combinatorial computations.
For example, how to update~$\Sigma(G)$ after inserting a new element?
How to find the next signature to handle? How to check the halting condition?
These questions are addressed in Section~\ref{sec:expl-manag-crit}.

\begin{example}[continued]\label{expl:mora-system-4}
  Let us apply Algorithm~\ref{algo:rb-min-sig} to Mora's system (Examples~\ref{expl:mora-system}, \ref{expl:mora-system-2} and~\ref{expl:mora-system-3}).
  Let~$G_{i+3}$ denote the value of~$G$ at the end of the $i$th iteration. (So that~$G_3$ is the input, starting the counter at~3 because the input contains already 3~elements.)
  At the start of the algorithm, we have~$\Sigma(G_3) = \left\{ x^2 y^5 \otimes e_2, x^5 y^2 \otimes e_3 \right\}$.
  The minimal element of~$\Sigma(G_3)$ is~$\sigma_4 = x^2 y^5 \otimes e_2$ and~$G$ is not a \rb{} at~$\sigma_4$.
  We pick the reductant~$x^2 g_2$ (only possible choice) and using~$y^3 g_1$, we compute the reduction
  \[ x^2 g_2 \to_G  -\underline{x^4 y} + y^3, \text{in signature } x^2y^5 \otimes e_2 \]
  We have a new basis element~$g_4 = (\underline{-x^4 y} + y^3, \sigma_4)$ to obtain~$G_4$.
  The set~$\Sigma(g_4, G_4)$ gives two new elements of~$\Sigma(G_4)$:
  \[ \Sigma(G_4) = \Sigma(G_3) \cup \left\{ x^2 y^6 \otimes e_2, x^3 y^5 \otimes e_2 \right\}, \]
  which gives two new elements in~$\Sigma(G_4)$.
  The minimal element of~$\Sigma(G_4)$ at which~$G_4$ is not a \rb{} is~$\sigma_5 = x^5 y^2 \otimes e_3$.
  We pick the reductant~$y^2 g_3$ (only choice) and using~$x^3 g_1$, we compute the reduction
  \[ y^2 g_3 \to_G  -\underline{x y^4} + x^3, \text{in signature } x^5y^2 \otimes e_3. \]
  We have a new basis element~$g_5 = (\underline{-x y^4} + x^3, \sigma_5)$ to obtain~$G_5$.
  We have two new elements in the critical set~$\Sigma(G_5)$:
  \[ \Sigma(G_5) = \Sigma(G_4) \cup \left\{ x^6 y^2 \otimes e_2, x^5 y^3 \otimes e_2 \right\}. \]
  The next signature is~$\sigma_6 = x^2 y^6 \otimes e_2$.
  We pick the reductant~$yg_4$, and we have
  \[ yg_4 \to_{G_5} \underline{y^{4}} - x^{2}, \text{ in signature~$x^2 y^6 \otimes e_2$}, \]
  leading to a new element~$g_6$.
  Another possible choice of reductant is~$x^2y g_2$, which would lead to the same~$g_6$.
  From the computational point of view, it is clear that~$y g_4$ is a better choice than~$x^2 y g_2$ because~$g_4$ was already obtained by reducing~$x^2 g_2$,
  so there will be less work to reduce~$yg_4$ than to reduce~$x^2 y g_2$.
  When the signatures are not process in order, the choice of the reductant have a theoretical importance discussed in the following sections.

  The process goes on similarly.
  We can represent the output of Algorihm~\ref{algo:rb-min-sig} in the form of trees (which are instances of the concept of \emph{well-formed sigtree} introduced in Section~\ref{sec:well-formed-sigtrees}).
  We say that~$h\in G$ is the parent of~$g\in G$ if~$g$ is obtained, on line~\ref{line:minsig:red}, from the reduction of~$f = ah$
  for some~$a\in A$.
  To display the tree, we show only the leading monomials of the sigpairs, the iteration at which they have been inserted,
  and an edge~$h\to g$ is labelled by the element~$a \in A$ defined above.
  For the input discussed above, we obtain Figure~\ref{fig:mora}.
  The data displayed in this tree (leading monomials and signatures) is enough to certify that the output is a \rb{}, using Theorem~\ref{thm:faugere-criterion}.
\end{example}

\begin{figure}[htp]
  \centering
  \includegraphics[scale=.6]{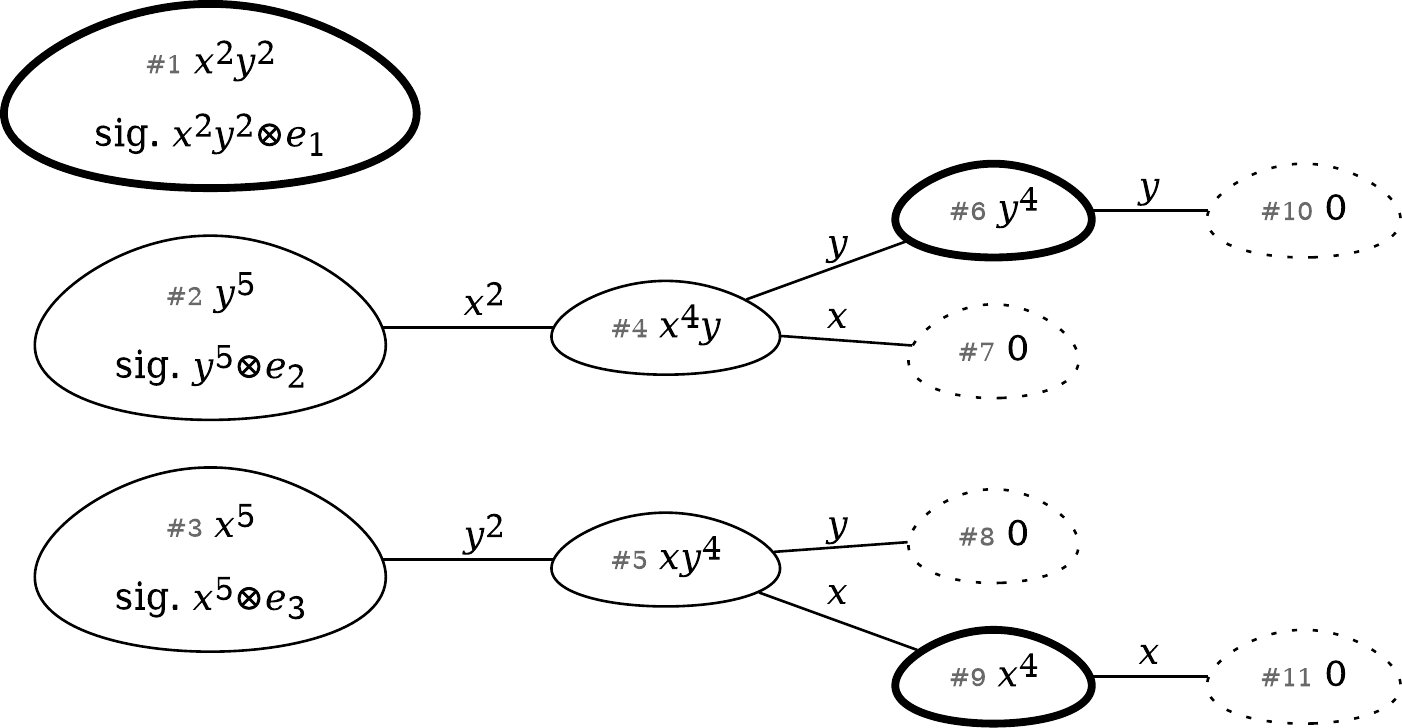}
  \caption{Graphical trace of Algorithm~\ref{algo:rb-min-sig} applied to the
    system in Example~\ref{expl:mora-system}.
    Root nodes, on the left, represent input polynomials.
    Bold nodes represent elements of the \rb{} whose
    leading monomial is the leading monomial of some element of the reduced \gb{} of the input
    ideal. The signature of a node~$n$ can be obtained by multiplying the labels of the edges from~$n$ to the root node, and then multiplying by the signature
    of the corresponding root node.
    For example, the signature of the input node~3 is~$x^5\otimes e_3$, so the signature of the node~11 is~$x^7 y^2 \otimes e_3$.
  }
  \label{fig:mora}
\end{figure}
\begin{figure}[ht]
  \centering
  \includegraphics[scale=.6]{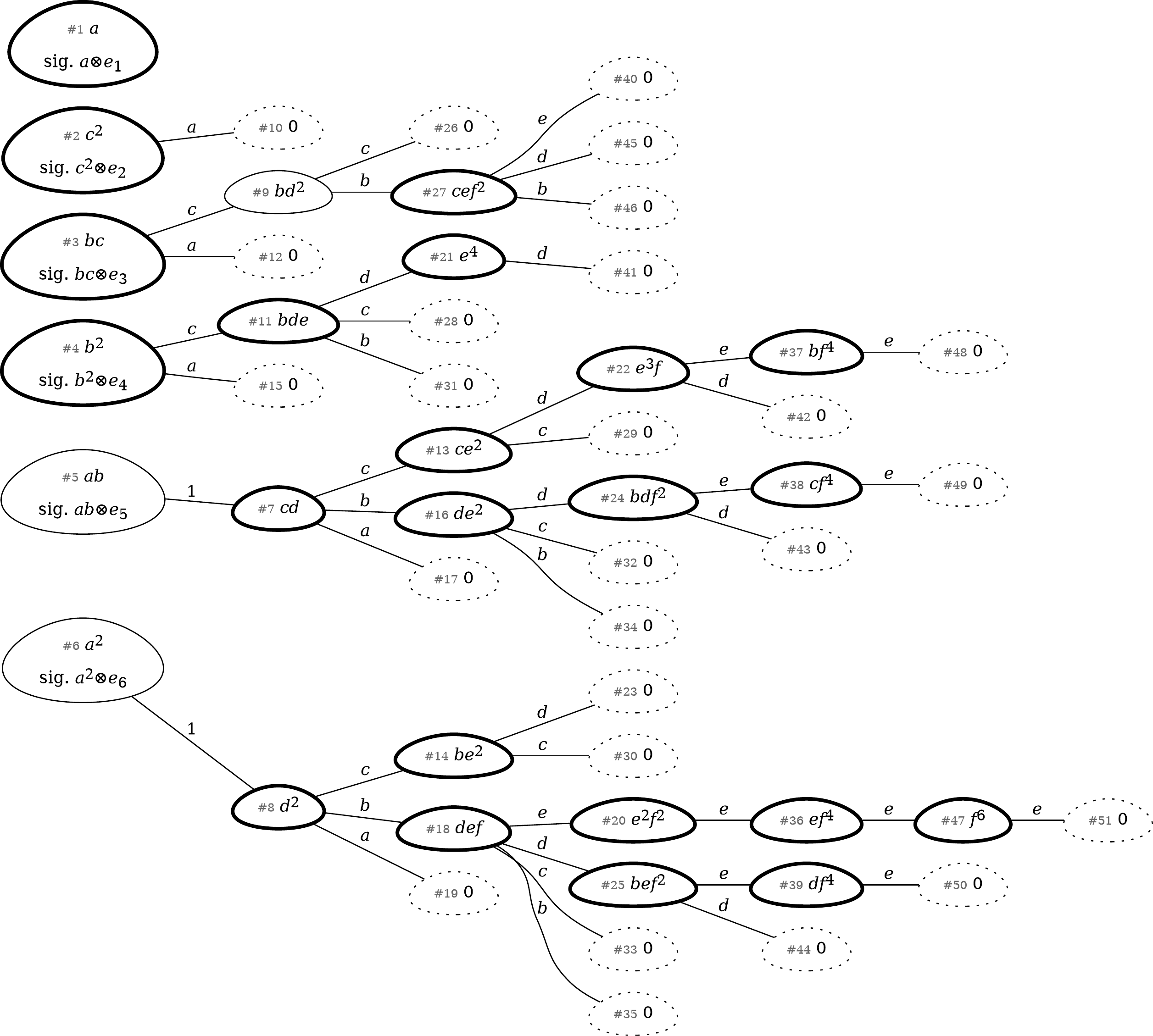}
   \caption{Trace of the computation of a \rb{} for Katsura-6 (Example~\ref{expl:katsura6})
     with the TOP order on the signatures, and the F5 selection strategy of the reductant.
     The input polynomials are given the signatures $\sig g_i = \lm g_i \otimes e_i$.
   }
   \label{fig:top}
\end{figure}

\begin{figure}[htp]
  \centering
  \includegraphics[scale=.6]{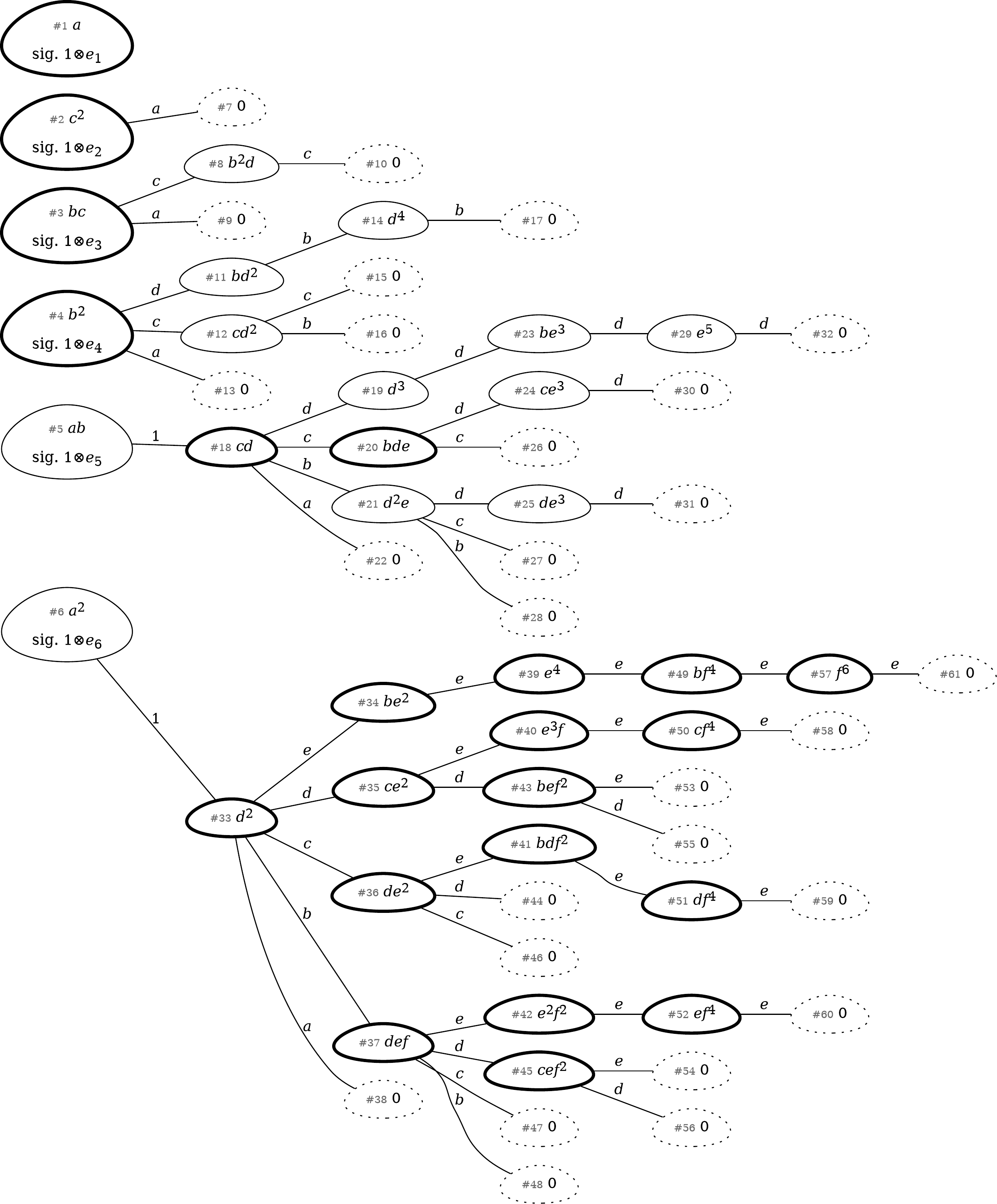}
  \caption{Trace of the computation of a \rb{} for Katsura-6 (Example~\ref{expl:katsura6})
    with the POT order on the signatures, and the F5 selection strategy of the reductant.
    The input polynomials are given the signatures $\sig g_i = 1 \otimes e_i$.
  }
   \label{fig:pot}
\end{figure}

\begin{figure}[htp]
  \centering
  \includegraphics[scale=.45]{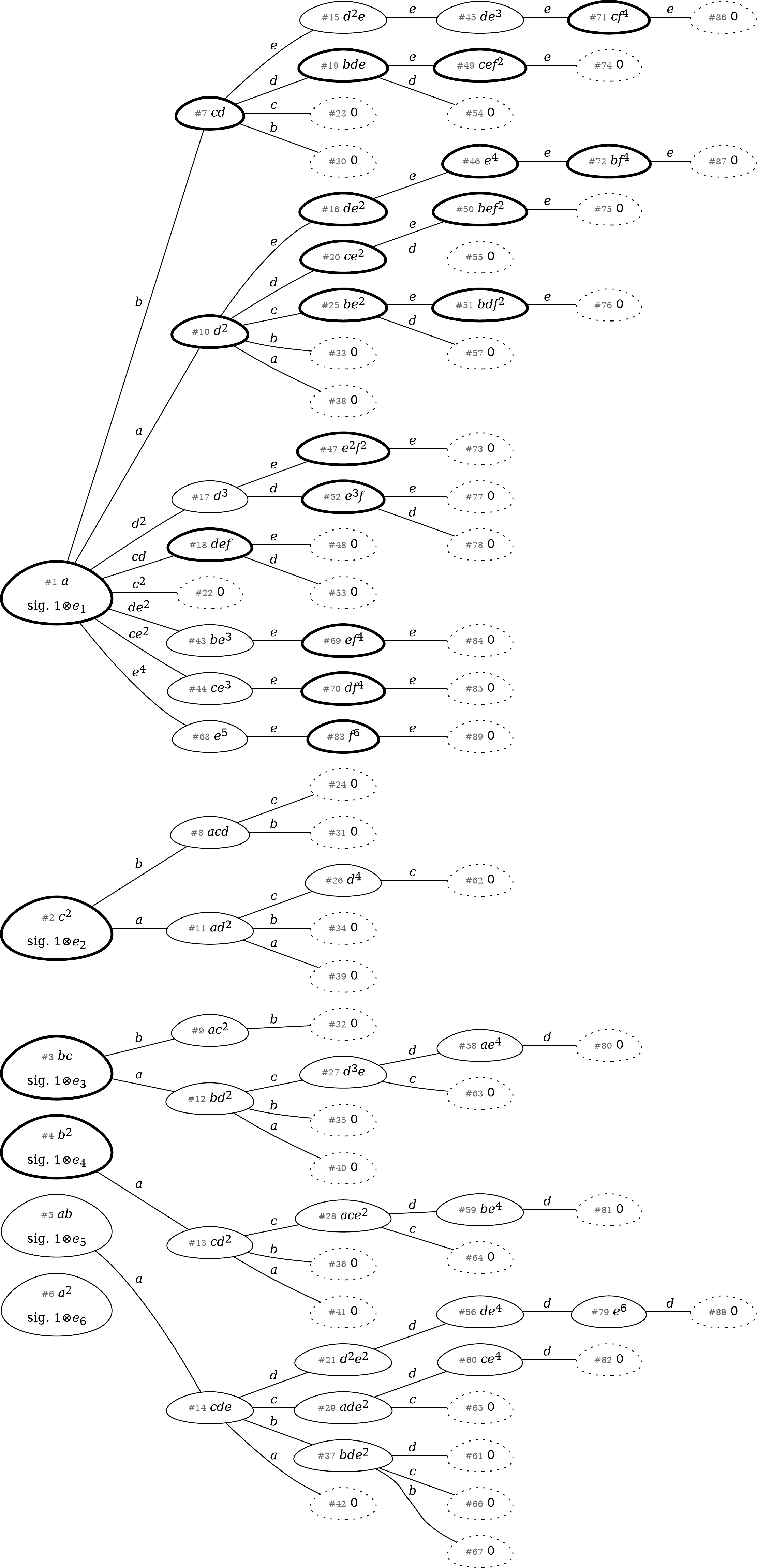}
  \caption{Trace of the computation of a \rb{} for Katsura-6 (Example~\ref{expl:katsura6})
    with the TOP order on the signatures, and the F5 selection strategy of the reductant.
    The input polynomials are given the unshifted signatures $\sig g_i = 1\otimes e_i$.
    In contrast to Figure~\ref{fig:top}, note that many elements are not necessary to form a \gb, eventhough they are necessary to form a \rb.
    For example, the 79th element~$g_{79}$ with~$\lm g_{79} = e^6$ and~$\sig g_{79} = ad^3 \otimes e_4$ cannot be reduced by the earlier~$g_{46}$,
    as suggested by the relation~$e^2\lm g_{46} = \lm g_{79}$, because~$e^2 \sig g_{46} = a e^4 \otimes e_1$ is bigger than~$\sig g_{79}$.}
   \label{fig:unshifted-top}
\end{figure}

\begin{example}[Katsura-6]\label{expl:katsura6}

  We consider the system Katsura-6 (from the famous benchmark family Katsura-$n$, available in Sagemath with the function \texttt{sage.rings.ideal.Katsura}), given in~$\mathbb{Q}[a,b,c,d,e,f]$ (with degree reverse lexicographic ordering) by
  the polynomials
  \begin{align*}
    g_1 &= a + 2 b + 2 c + 2 d + 2 e + 2 f - 1, &
    g_2 &= c^{2} + 2 b d + 2 a e + 2 b f - e,\\
    g_3 &= b c + a d + b e + c f - \tfrac{1}{2} d, &
    g_4 &= b^{2} + 2 a c + 2 b d + 2 c e + 2 d f - c,\\
    g_5 &= a b + b c + c d + d e + e f - \tfrac{1}{2} b, &
    g_6 &= a^{2} + 2 b^{2} + 2 c^{2} + 2 d^{2} + 2 e^{2} + 2 f^{2} - a.
  \end{align*}
  Figures~\ref{fig:top}, \ref{fig:pot} and~\ref{fig:unshifted-top} shows the result of running Algorithm~\ref{algo:rb-min-sig}
  on this input, with different signature orderings. At each iteration, when there are multiple possible choices,
  we pick the one that comes from the most recently inserted element of~$G$, this is the F5 selection strategy, see Section~\ref{sec:f5-reduct-select}.
  The computations are displayed in the form of sigtrees, as in Example~\ref{expl:mora-system-4}.
\end{example}

\subsection{Minimizing the leading monomial of the reductant}
\label{sec:minim-lead-monom}

Processing critical signatures in increasing order seems to be a natural option
but it is also important to understand what happens when signatures are processed in any order.
There may be various reasons to do so: parallel computing, simultaneous reduction in the F4 style \parencite[]{Faugere_1999}.
Recently, \textcite{EderLairezMohrSafeyElDin_2023} used signature algorithms to compute saturation ideals,
this involves enlarging the input ideal on the fly. It can be interpreted as an algorithm processing signatures out of order.

In a time where the termination of F5 \parencite{Galkin_2014} was still unsettled, \Textcite{ArriPerry_2011,ArriPerry_2017}
introduced the idea of choosing carefully the sigpair to be reduced, called the \margindef{reductant}, at a given signature to ensure termination.
This is based on the following observation.

\begin{proposition}
  \label{prop:minimal-lm-implies-dominant}
  Let~$G$ be a prebasis and let~$\sigma$ be a signature at which~$G$ is not a \rb{}.
  Let~$f \in AG$ with~$\sig f = \sigma$ and~$\lm f$ minimal.
  Let~$h$ be a~$\to_G$-normal form of~$f$.
  Then~$g \not\sqsubseteq h$ for any~$g \in G$.
\end{proposition}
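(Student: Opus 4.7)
The plan is to fix an arbitrary $g \in G$ and rule out both clauses \ref{it:dom:non-min} and \ref{it:dom:non-topred} in the definition of $g \sqsubseteq h$. Three facts will drive the argument: regular reduction preserves signatures, so $\sig h = \sigma$; top reduction strictly decreases leading monomials, so $\lm h \leq \lm f$; and by construction $h$ is $\to_G$-reduced, which means $h^\natural$ admits no top-reducer in $\AGlt$.

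Clause \ref{it:dom:non-topred} should fall immediately. If some $a \in A$ satisfied $a \sig g < \sigma$ and $a \lm g = \lm h \neq 0$, then $(ag)^\natural$ would sit in $\AGlt$ with $\lm(ag) = \lm h$, providing a top-reducer for $h$ and contradicting its irreducibility.

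Clause \ref{it:dom:non-min} is the substantive case. Suppose $a \in A$ satisfies $a \sig g = \sigma$ and $a \lm g \leq \lm h$. Then $ag \in AG^\sigma$ with $\lm(ag) \leq \lm h \leq \lm f$, so the minimality of $\lm f$ among elements of $AG$ of signature $\sigma$ forces $\lm(ag) = \lm h = \lm f$. I then claim that $ag$ itself must be $\to_G$-reduced: any $e \in \AGlt$ top-reducing $(ag)^\natural$ would satisfy $\lm e = \lm(ag) = \lm h$ and would therefore top-reduce $h$ as well, contrary to assumption. But a $\to_G$-reduced element of $AG^\sigma$ witnesses that $G$ is a \rb{} at $\sigma$, contradicting the hypothesis.

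The main obstacle, modest as it is, lies in the last step: one must notice that the minimality of $\lm f$ tightens the inequality in \ref{it:dom:non-min} into the equality $\lm(ag) = \lm h$, which is exactly what allows the irreducibility of $h$ to be transferred to the candidate $ag$ and complete the contradiction.
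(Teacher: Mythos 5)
Your proof is correct, and the treatment of clause \ref{it:dom:non-topred} is identical to the paper's (ruled out because $h$ is $\to_G$-reduced). For clause \ref{it:dom:non-min}, however, you take a genuinely longer path than necessary. You derive the chain $\lm(ag) \leq \lm h \leq \lm f \leq \lm(ag)$, conclude $\lm(ag) = \lm h = \lm f$, and then transfer irreducibility from $h$ to $ag$ to show that $ag$ is $\to_G$-reduced, contradicting the failure of the rewrite basis condition at $\sigma$. That is a valid contradiction. The paper instead observes that, since $G$ is not a \rb{} at $\sigma$, \emph{no} element of $AG$ with signature $\sigma$ is $\to_G$-reduced — in particular $f$ itself is not — so $\lm h < \lm f$ \emph{strictly}, and then $\lm(ag) \leq \lm h < \lm f$ contradicts minimality of $\lm f$ directly, with no need for the irreducibility-transfer step. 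Notice that your own chain of equalities $\lm h = \lm f$ already contradicts this strict inequality; you overlooked the shortcut and instead manufactured a second, independent contradiction. Both are sound; yours buys nothing over the paper's except perhaps conceptual symmetry (you contradict the rewrite-basis hypothesis, the paper contradicts the minimality hypothesis), at the cost of an extra argument about transfer of reducers between $ag$ and $h$.
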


\begin{proof}
  By contradiction, assume that there is some~$g\in G$ such that~$g \sqsubseteq h$.
  Condition~\ref{it:dom:non-topred} is ruled out because~$h$ is $\to_G$-reduced.
  Therefore~\ref{it:dom:non-min} holds: there is some~$a\in A$ such that~$a \sig g = \sigma$ and~$\lm(ag) \leq \lm h$.
  Besides,  $G$ is not a \rb{} at~$\sigma$, it follows that~$f$ is not~$\to_G$-reduced,
  and thus~$\lm h < \lm f$ since~$f \to_G h$.
  It follows that~$\lm(ag) < \lm f$, which contradicts the minimality of~$f$.
\end{proof}

Although \citeauthor[]{ArriPerry_2011}
still requires to process signatures by increasing order,
Proposition~\ref{prop:minimal-lm-implies-dominant} opens the way for out-of-order signature handling,
as \textcite{SunWang_2013} and~\textcite{GaoVolnyWang_2016} did.
The formulation that proposed here (Algorithm~\ref{algo:rb-min-lm})
is mostly equivalent to that of the latter.
The choice of signature on line~\ref{line:minlm:picksig} is unconstrained, but the choice of the reductant is imposed.

\begin{algorithm}
  \begin{description}
    \item[input] A finite prebasis~$G$
    \item[output] A finite sigsafe extension of~$G$ which is a \rb{}
  \end{description}
  \begin{pseudo}
    \kw{while} $G$ is not a \rb{} at all $\sigma\in \Sigma(G)$ \kw{do}\\+
    pick any~$\sigma\in \SIG$ such that~$G$ is not a \rb{} at~$\sigma$ \label{line:minlm:picksig}\\
    pick~$f\in AG$ with~$\sig f = \sigma$ and~$\lm f$ minimal\\
    $g \gets $ any $\to_G$-normal form of~$f$ \\
    $G \gets G \cup \left\{ g \right\}$\\-
    \kw{return} $G$
  \end{pseudo}
  \caption{Computation of a \rb{} with out-of-order signature processing, minimizing the leading monomial of the reductant}
  \label{algo:rb-min-lm}
\end{algorithm}

\begin{theorem}\label{thm:algo-minlm}
  Algorithm~\ref{algo:rb-min-lm} is correct and terminates.
\end{theorem}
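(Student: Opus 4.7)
The proof will mirror the template used for Theorem~\ref{thm:algo-minsig}, but replacing the role of minimality of the signature by the minimality of the leading monomial of the reductant. The two ingredients are Proposition~\ref{prop:minimal-lm-implies-dominant} (which supplies the non-domination guarantee on inserted elements without any hypothesis on the order in which signatures are processed) and the Dickson-style statement of Proposition~\ref{prop:dom-noetherian} (which would provide a contradiction from an infinite run).

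For \emph{correctness}, I would first observe that the loop body is well-defined: at the chosen signature~$\sigma$, the set $AG^\sigma$ is nonempty because $G$ is not a \rb{} at~$\sigma$, and since the set of leading monomials is well-ordered by~$\leq$, a reductant~$f\in AG$ with~$\sig f=\sigma$ of minimal~$\lm f$ exists. Secondly, I would prove by induction on iterations that~$G$ remains a sigsafe extension of the input prebasis: the normal form~$g$ inserted is obtained by regular reduction from~$f\in AG^\sigma$, so~$g^\natural \equiv f^\natural \pmod{\LAGlt[\sigma]}$ and $\sig g = \sigma$, making $G\cup\{g\}$ a sigsafe extension of~$G$; sigsafe extensions compose by Lemma~\ref{lem:sigsafe-ext-prebasis}, so they compose back to the input. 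Because sigsafe extensions of prebases are prebases, the loop invariant ``$G$ is a prebasis'' holds throughout. Upon exit, $G$ is a prebasis and a \rb{} at every $\sigma\in\Sigma(G)$, so Theorem~\ref{thm:faugere-criterion} yields that~$G$ is a \rb{}.

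For \emph{termination}, I would argue by contradiction: suppose the algorithm does not halt, and let~$g_1, g_2, \dots$ be the infinite sequence of sigpairs inserted on successive iterations. By Proposition~\ref{prop:dom-noetherian}, there exist indices~$i<j$ with~$g_i \sqsubseteq g_j$. At the iteration in which~$g_j$ is produced, the current sigset contains~$g_i$; since~$g_j$ is a $\to_G$-normal form of a reductant~$f$ of minimal leading monomial in signature~$\sig g_j$, with~$G$ not a \rb{} at~$\sig g_j$, Proposition~\ref{prop:minimal-lm-implies-dominant} forces~$g_i \not\sqsubseteq g_j$. This is the required contradiction.

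The main subtlety is pinning down the exact hypothesis set for Proposition~\ref{prop:minimal-lm-implies-dominant}: unlike the minimal-signature analogue, it does not demand that~$G$ already be a \rb{} at smaller signatures, which is precisely why it is applicable to the out-of-order setting of Algorithm~\ref{algo:rb-min-lm}. Everything else --~the well-definedness of the minimum, the preservation of the prebasis property, and the Dickson-type extraction of a dominating pair~-- is a direct application of earlier results, so no further ingredient is needed beyond what is already in place.
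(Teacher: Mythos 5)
Your proposal is correct and follows the same route as the paper: correctness from Theorem~\ref{thm:faugere-criterion}, and termination by combining Proposition~\ref{prop:dom-noetherian} (Dickson for sigpairs) with Proposition~\ref{prop:minimal-lm-implies-dominant}. The paper's own proof is just one sentence pointing to the proof of Theorem~\ref{thm:algo-minsig} with the obvious substitution; you spell out the details (well-definedness of the minimum, preservation of the prebasis invariant) and correctly note that, unlike in the in-order case, no appeal to Proposition~\ref{prop:faugere-criterion-upto} is needed because Proposition~\ref{prop:minimal-lm-implies-dominant} carries no hypothesis on smaller signatures.
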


\begin{proof}
  Identical to the proof of Theorem~\ref{thm:algo-minsig}, but using Proposition~\ref{prop:minimal-lm-implies-dominant}
  instead of Proposition~\ref{prop:minimal-sig-implies-dominant}.
\end{proof}

\subsection{Well-formed sigtrees}
\label{sec:well-formed-sigtrees}

A \emph{tree} is a set~$\mathcal{T}$, finite or infinite, of finite sequences of nonnegative integers
such that for any $(k_1,\dotsc,k_r) \in \mathcal{T}$ (with~$r \geq 1$),
the prefix subsequence~$(k_1,\dotsc,k_{r-1})$ is also in~$\mathcal{T}$.
The elements of~$\mathcal{T}$ are called \emph{nodes}.
The \emph{children} of a given node~$n\in \mathcal{T}$ are the sequences in~$\mathcal{T}$ that extend~$n$ by exactly one integer.
The \emph{ancestors} of a node~$n = (k_1,\dotsc,k_r)$ are the nodes~$(k_1,\dotsc,k_j)$ for~$0 \leq j < r$.

\begin{definition}[sigtree]
  A \emph{sigtree} is a tree $\mathcal{T}$ together with a rank function~$\rk : \mathcal{T} \to \mathbb{N}$ and a label funtion~$\lambda:\mathcal{T} \to M\times \SIG$
  (recall that~$M\times \SIG$ is the set of sigpairs).
\end{definition}
Sigtrees are a natural way to represent the process of computing a rewrite basis.
Indeed, Algorithms~\ref{algo:rb-min-lm} or~\ref{algo:rb-min-sig}, as well as Pseudo-algorithm~\ref{algo:abstract-algo},
produce sigtrees as follows.
The elements of~$G$, the \rb{} begin computed, are the labels of the sigtrees.
There is one sigtree for each element of the input sigset.
The root node of each sigtree is labelled with the corresponding input element.
At the beginning of the algorithms, there are only the root nodes.
Then, each time some~$f \in AG$ is picked,
reduced, and inserted into~$G$, we can write~$f = a \lambda(n)$, for some node~$n$ of the sigtree, and some~$a\in A$,
and we insert in the sigtree a new child node of~$n$ containing the new element.
The rank function reflects the \emph{birthdate} of a node.
Figures~\ref{fig:mora}, \ref{fig:top}, \ref{fig:pot} and~\ref{fig:unshifted-top} are examples of sigtrees obtained in this way.

\begin{definition}[well-formed sigtree]
  A \emph{well-formed sigtree}
  is a sigtree~$\mathcal{T}$ such that:
  \begin{enumerate}[label=T{\arabic*}]
    \item\label{it:wfst:ratio} $\forall n \in \mathcal{T}, \forall m \text{ child of $n$}, \exists a \in A, a \sig \lambda (n) = \sig \lambda(m)$ and $a \lm \lambda(n) > \lm\lambda(m)$,\\
          \emph{``a child is more reduced than its parent''};
    \item \label{it:wfst:ancestors}
          $\forall n \in \mathcal{T}$, $\lambda(n)$ is $\to$-reduced with respect to the sigset~$ \left\{ \lambda(p) \st \text{$p$ is an ancestor of~$n$} \right\}$,\\
          \emph{``a child is reduced modulo its ancestors''};
    \item \label{it:wfst:siblings} $\forall n \in \mathcal{T},\forall p, q $ children of~$n$, $\rk(p) < \rk(q) \Rightarrow \sig \lambda(p)$ does not divide $\sig\lambda(q)$,\\
          \emph{``the signature of a node does not divide that of younger sibling nodes''}.

    \item \label{it:wfst:rklevels} $\forall k\in \mathbb{N}, \left\{ n\in \mathcal{T} \st \rk(n) = k \right\}$ is finite.
  \end{enumerate}
\end{definition}
Typically, \ref{it:wfst:ratio} and~\ref{it:wfst:ancestors} are satisfied \emph{by design} if $\lambda(m)$ is obtained by reducing~$a \lambda(n)$ modulo a sigset containing at least the labels of the ancestors of~$m$,
and assuming that~$a \lambda(n)$ is indeed reducible to account for the strict inequality in~\ref{it:wfst:ratio}.
\ref{it:wfst:rklevels} will follow from an appropriate bookkeeping.
\ref{it:wfst:siblings} is the real constraint. In the context above, \ref{it:wfst:siblings} puts a contraint on the choice of the reductant. It means that whenever we want to reduce~$a \lambda(n)$, we must first check that we have not previously computed a reduction~$b \lambda(n) \to \lambda(m)$ for some child $m$ of~$n$ and for some~$b$ such that~$\exists c\in A, cb \lambda(n) = a \lambda(n)$. In which case we can reduce~$c \lambda(m)$ instead of~$a \lambda(n)$.

\begin{theorem}\label{thm:sigtree-term}
  A well-formed sigtree is finite.
\end{theorem}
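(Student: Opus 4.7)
The plan is a proof by contradiction, combining König's Lemma with Dickson's Lemma for sigpairs (Proposition~\ref{prop:dom-noetherian}). Assume $\mathcal{T}$ is infinite. Then either some node of $\mathcal{T}$ has infinitely many children, or, by König's Lemma, $\mathcal{T}$ contains an infinite descending path $n_0, n_1, \dotsc$ with each $n_{k+1}$ a child of $n_k$. I treat the two cases separately.

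For the infinite-branching case, let $n$ have infinitely many children. Since each rank level is finite by~\ref{it:wfst:rklevels}, I extract a subsequence $c_1, c_2, \dotsc$ of children of $n$ with strictly increasing ranks. The signatures $\sig\lambda(c_k)$ form an infinite sequence in $\SIG$, so by Noetherianity of $S$ and Lemma~\ref{lem:noetherian}~\ref{it:wpo:pair} there exist $i<j$ with $\sig\lambda(c_i)\trianglelefteq\sig\lambda(c_j)$, i.e. $\sig\lambda(c_i)$ divides $\sig\lambda(c_j)$. This directly contradicts~\ref{it:wfst:siblings}. This short argument uses only~\ref{it:wfst:rklevels} and~\ref{it:wfst:siblings}.

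For the infinite-path case, I apply Proposition~\ref{prop:dom-noetherian} to the sequence $(\lambda(n_k))_k$, obtaining $i<j$ with $\lambda(n_i)\sqsubseteq\lambda(n_j)$, and split on the two subcases of Definition~\ref{def:dom}. If~\ref{it:dom:non-min} provides $a\in A$ with $a\sig\lambda(n_i)=\sig\lambda(n_j)$ and $a\lm\lambda(n_i)\leq\lm\lambda(n_j)$, then iterating~\ref{it:wfst:ratio} along the path $n_i\to n_{i+1}\to\dotsb\to n_j$ yields edge witnesses $b_i,\dotsc,b_{j-1}\in A$ whose product $c=b_{j-1}\cdots b_i$ satisfies $c\sig\lambda(n_i)=\sig\lambda(n_j)$ and, by cascading the strict inequalities $b_k\lm\lambda(n_k)>\lm\lambda(n_{k+1})$ via~\ref{it:momo:ord}, also $c\lm\lambda(n_i)>\lm\lambda(n_j)$. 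Applying~\ref{it:sig:refines} to the equality $a\sig\lambda(n_i)=c\sig\lambda(n_i)$ forces $a\lm\lambda(n_i)=c\lm\lambda(n_i)>\lm\lambda(n_j)$, contradicting the inequality supplied by~\ref{it:dom:non-min}. If instead~\ref{it:dom:non-topred} provides $a$ with $a\sig\lambda(n_i)<\sig\lambda(n_j)$ and $a\lm\lambda(n_i)=\lm\lambda(n_j)\neq 0$, then $a\lambda(n_i)^\natural$ belongs to $A\{\lambda(p)\st p\text{ ancestor of }n_j\}^{<\sig\lambda(n_j)}$ and has leading monomial $\lm\lambda(n_j)$, so it top-reduces $\lambda(n_j)^\natural$, contradicting the irreducibility required by~\ref{it:wfst:ancestors}.

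The hard part is the~\ref{it:dom:non-min} subcase of the infinite-path branch: the abstract witness $a$ coming from the domination relation and the concrete product $c$ arising from iterating~\ref{it:wfst:ratio} along the path both certify the same equality of signatures, and~\ref{it:sig:refines} is what lets me identify their effects on $\lm\lambda(n_i)$ to derive the contradiction. The~\ref{it:dom:non-topred} subcase, by contrast, is immediate once one observes that a witness of domination with strictly smaller signature is precisely a regular reducer modulo the ancestors; and the infinite-branching case is even shorter, using only Higman's characterization of well partial orders.
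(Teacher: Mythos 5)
Your proof is correct and takes essentially the same route as the paper's: König's Lemma to reduce to finite branching and no infinite path, \ref{it:wfst:rklevels} plus \ref{it:wfst:siblings} plus Noetherianity of~$S$ for the branching case, and Proposition~\ref{prop:dom-noetherian} plus \ref{it:wfst:ancestors}/\ref{it:wfst:ratio}/\ref{it:sig:refines} for the path case. You spell out the cascading of~\ref{it:wfst:ratio} via~\ref{it:momo:ord} more explicitly than the paper does, but the argument is the same.
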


\begin{proof}
  Let~$\mathcal{T}$ be a well-formed sigtree.
  By König's lemma, it is enough to prove that~$\mathcal{T}$ has no infinite branch
  and that every node has at most finitely many children.

  If there is an infinite branch, then there is a sequence of nodes~$(n_i)_{i \geq 0}$
  such that~$n_{i+1}$ is a child of~$n_i$.
  By Proposition~\ref{prop:dom-noetherian}, there are some indices~$i < j$ such that~$\lambda(n_i) \sqsubseteq \lambda(n_j)$.
  Condition~\ref{it:dom:non-topred} would contradict~\ref{it:wfst:ancestors}
  so Condition~\ref{it:dom:non-min} holds: there is some~$b\in A$ such that~$b\sig \lambda(n_i) = \sig \lambda(n_j)$
  and~$b \lm \lambda(n_i) \leq \lm \lambda(n_j)$.
  By \ref{it:wfst:ratio} (applied all along the path from~$n_i$ to~$n_j$),
  there is some~$a \in A$ such that~$a\sig \lambda(n_i) = \sig\lambda(n_j)$ and~$a \lm\lambda(n_i) > \lm \lambda(n_j)$.
  Since~$a\sig \lambda(n_i) = b\sig \lambda(n_i)$, \ref{it:sig:refines} implies that~$a\lm \lambda(n_i) = b \lm \lambda(n_i)$, leading to a contradiction.

  If a node has infinitely many children,
  \ref{it:wfst:rklevels} ensures that we can extract an infinite sequence of children with increasing ranks.
  By Noetherianity of~$S$, the signature of one child would divide the signature of another with higher rank.
  This contradicts~\ref{it:wfst:siblings}.
\end{proof}

\subsection{The F5 reductant selection strategy}
\label{sec:f5-reduct-select}

In the original presentation of F5, \textcite{Faugere_2002}
proposes to choose a reductant~$a f$ where, among all possible choices, $f$ is the ``most recent''.
This leads to Algorithm~\ref{algo:rb-f5}.
This selection strategy leads naturally to a well-formed sigtree.
So we can prove that the corresponding algorithm terminates, even if signatures are handled out of order.

\begin{algorithm}[ht]
  \begin{description}
    \item[input] A finite prebasis~$G$
    \item[output] A finite sigsafe extension of~$G$ which is a \rb{}
  \end{description}
  \begin{pseudo}
    $R\gets $ empty dictionary mapping sigpairs to integers\\
    $r \gets 1$ \\
    \kw{for} $g\in G$ \kw{do} $R[g] \gets 0$\\
    \kw{while} $G$ is not a \rb{} at all $\sigma\in \Sigma(G)$ \kw{do}\\+
    pick any~$\sigma\in \SIG$ such that~$G$ is not a \rb{} at~$\sigma$ \\
    pick some~$a \in A$ and~$f\in G$ such that~$a \sig f = \sigma$ and~$R[f]$ maximal \label{line:f5:reductant}\\
    $g \gets $ any $\to_G$-normal form of~$af$ \\
    $G \gets G \cup \left\{ g \right\}$ \label{line:f5:insert}\\
    $R[g] \gets r$\\
    $r\gets r+1$\\-
    \kw{return} $G$
  \end{pseudo}
  \caption{Computation of a \rb{} with out-of-order signature processing and F5 selection strategy of the reductant}
  \label{algo:rb-f5}
\end{algorithm}

\begin{theorem}
  Algorithm~\ref{algo:rb-f5} is correct and terminates.
\end{theorem}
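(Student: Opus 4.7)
The plan is to deduce correctness from Theorem~\ref{thm:faugere-criterion} and termination from Theorem~\ref{thm:sigtree-term}, by modelling each execution as a well-formed sigtree (or forest of sigtrees, one per input element). For correctness, once the loop exits, $G$ is a rewrite basis at every $\sigma \in \Sigma(G)$ by the loop guard, and $G$ is a prebasis because it is a sigsafe extension of the input prebasis: each newly inserted $g$ is a $\to_G$-normal form of some $af\in AG$, so $g^\natural \equiv (af)^\natural \pmod{\LAGlt[\sig g]}$ with $af \in AG^{\sig g}$, and Lemma~\ref{lem:sigsafe-ext-prebasis} lets us chain these extensions across iterations.

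For termination I would attach to the execution a sigtree whose root nodes are labelled by the input sigpairs (all with rank $0$), and where each insertion on line~\ref{line:f5:insert} creates a new child of the node labelled by the chosen reductant $f$, labelled by the new $g$ with $\rk = r$. Conditions~\ref{it:wfst:ratio} and~\ref{it:wfst:ancestors} come for free from the construction: $g$ is a $\to_G$-normal form of $af$, which is itself not $\to_G$-reduced (otherwise $G$ would already be a \rb{} at $\sigma$, contradicting the loop guard), so $\lm g < \lm(af) = a\lm f$ while $\sig g = a\sig f$; and $g$ being $\to_G$-reduced is a fortiori reduced modulo the ancestor labels. Condition~\ref{it:wfst:rklevels} holds because each positive rank is assigned to exactly one node and rank $0$ only to the finitely many input elements.

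The only genuine content is in~\ref{it:wfst:siblings}, and this is where the F5 strategy from line~\ref{line:f5:reductant} plays its role. Let $p, q$ be two children of a node $n$, with $\rk(p) < \rk(q)$, labelled respectively by $\lambda(p), \lambda(q)$ and obtained via reductions of $a_p \lambda(n)$ and $a_q \lambda(n)$. Suppose for contradiction that $\sig \lambda(p)$ divides $\sig\lambda(q)$, that is $c\, a_p \sig\lambda(n) = a_q \sig\lambda(n)$ for some $c\in A$. Since $p$ was created before $q$, at the moment $q$ was created the pair $(c, \lambda(p))$ was a legitimate reductant for signature $\sig\lambda(q)$, with $R[\lambda(p)] > R[\lambda(n)]$ because $\lambda(p)$ was inserted strictly after $\lambda(n)$. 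But the algorithm picked $\lambda(n)$, contradicting the maximality requirement in~line~\ref{line:f5:reductant}. Hence the sigtree is well-formed, Theorem~\ref{thm:sigtree-term} gives its finiteness, so only finitely many iterations can occur.

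The one place I expect a small subtlety is the claim $\lm g < a\lm f$ in~\ref{it:wfst:ratio}, which relies on knowing that $af$ is \emph{not} $\to_G$-reduced when we actually insert $g$; this is where I would invoke the loop guard plus the definition of rewrite basis at $\sigma$ to rule out the degenerate case where the normal form equals $af$ itself.
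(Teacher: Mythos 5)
Your proof is correct and follows essentially the same route as the paper: correctness via Theorem~\ref{thm:faugere-criterion} (with the sigsafe-extension/prebasis bookkeeping made explicit), and termination by viewing the run as well-formed sigtrees, where the F5 rank-maximality on line~\ref{line:f5:reductant} yields \ref{it:wfst:siblings} and Theorem~\ref{thm:sigtree-term} gives finiteness. The subtlety you flag about the strict inequality in \ref{it:wfst:ratio} is resolved exactly as you suggest, since the loop guard guarantees no element of $AG$ in signature $\sigma$ is $\to_G$-reduced.
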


\begin{proof}
  Correctness follows from Theorem~\ref{thm:faugere-criterion}.
  For termination, consider the sigtrees (one for each input element) induced by the algorithm:
  each sigpair $g$ inserted into~$G$ on line~\ref{line:f5:insert}
  is the label of a node whose parent is the node labeled with~$f$,
  where~$f$ is the sigpair picked on line~\ref{line:f5:reductant}.
  The rank of a node is given by~$R$.
  If the algorithm does not terminate, at least one of the sigtrees is infinite.
  Therefore, to prove termination, it is enough to check that the sigtrees are finite.

  These sigtrees are well-formed. \ref{it:wfst:ratio} and~\ref{it:wfst:ancestors} follow by construction.
  To check~\ref{it:wfst:siblings}, we observe that the rank of a node is always greater than the rank of its parent.
  So, on line~\ref{line:f5:reductant},
  if the node corresponding to~$f$ has already a child whose signature divides~$\sigma$, this child has a higher rank than that of~$f$,
  which contradicts the maximality of~$R[f]$.
  The number of nodes of a given rank is at most one, this gives~\ref{it:wfst:rklevels}. Theorem~\ref{thm:sigtree-term} applies and shows that the sigtrees are finite,
  so the algorithm terminates.
\end{proof}

\subsection{Explicit management of the critical set}
\label{sec:expl-manag-crit}

The presentation of Algorithms~\ref{algo:rb-min-sig}, \ref{algo:rb-min-lm} and~\ref{algo:rb-f5}
takes advantage of the notation~$\Sigma(G)$ to abstract the handling of set of signatures to be handled
from concrete questions that theory may ignore but not practical implementations.
There is a lot of room to design a proper handling of signatures, I simply show some possible variants.

\begin{algorithm}[p]
  \begin{pseudo}
    \inschunk{init}\\
    \kw{while} $Q$ is not empty \kw{do} \\+
    $\sigma \gets $ some element of~$Q$\\
    $Q \gets Q \setminus \left\{ \sigma \right\}$\\
    \inschunk{selectreductant} \\
    \kw{if} $f$ is $\to_G$-reducible \kw{then}\\+
    $g \gets $ a $\to_G$-normal form of~$f$\\
    \inschunk{insertnode}\\
    \inschunk{updatequeue}\\
    $G \gets G\cup \left\{ g \right\}$\\--
    \kw{return} G\\
    \\
    \ctfont{Chunks}\\
    \defchunk{init}{initialize signature queue and sigtree} \\+
    $Q \gets \varnothing$ \ct{signature queue}\\
    $\children \gets $ empty list \ct{maps a node to its children}\\
    $\labell \gets $ empty list \ct{maps a node to its label}\\
    $\children[0] \gets \varnothing$ \ct{the set of root nodes}\\
    $n \gets 1$    \ct{node counter} \\
    $k\gets 0$ \ct{index of the root node}\\
    \kw{for} $g \in G$ \kw{do} \ct{create nodes for input elements}\\+
    \inschunk{insertnode}\\
    \inschunk{updatequeue}\\--

    \\
    \defchunk{insertnode}{insert a node with label $g$ and parent~$k$}\\+
    $\labell[n] \gets g$ ; \\
    $\children[k] \gets \children[k] \cup \{ n\}$ ; \\
    $\children[n] \gets \varnothing$\\
    $n \gets n+1$\\-

    \\
    \defchunk{updatequeue}{update the queue with the new relation~$g$}\\+
    \kw{for} $h\in G$ \kw{do} \\+
    $Q \gets Q \cup \Sigma(g, h) \cup \Sigma(h, g)$\\--

    \\
    \defchunk{selectreductant}{select a reductant~$f$ in signature~$\sigma$ with corresponding node~$k$} \\+
    $k \gets 0$             \ct{start the search from the root node}\\
    \kw{for} $c \in \children[k]$ \kw{do} \label{line:select:reductant} \ct{the order of iteration does not matter}\\+
    \kw{if} $\sig L[c]$ divides $\sigma$ \kw{then}\\+
    $k \gets c$ \ct{go down the tree}\\
    \kw{goto} \ref{line:select:reductant} \ct{continue the search from the new position}\\--
    pick~$a\in A$ such that~$a \sig \labell[k] = \sigma$\\
    $f \gets a \labell[k]$
  \end{pseudo}
  \caption{Computation of a \rb{}, with explicit construction of a well-formed sigtree and explicit handling of the critical set}
  \label{algo:explicit}
\end{algorithm}

\subsubsection{Base algorithm}
\label{sec:base-algorithm}

In this section, we assume that we know how to operate on~$\basis$ and~$\SIG$ (that is compare, test divisibility, etc.) and we assume that we have a procedure to compute the critical set~$\Sigma(f, \left\{ g \right\})$ of a pair of sigpairs~$f$ and~$g$ (simply denoted~$\Sigma(f, g)$).
Without more information on~$A$, $\basis$ and~$\SIG$ we cannot go further down into the details.
In the polynomial setting, the set~$\Sigma(f, g)$ may contain zero or one element and its computation amounts to a few operations on monomials, see Section~\ref{sec:polynomial-ring},

There are many ways to proceed and Algorithm~\ref{algo:explicit} is one of them.
In this algorithm, the set~$Q$ contains signatures, and, at the beginning of each iteration of the ``while'' loop,
we have the following invariant:
\begin{equation}
  \label{eq:2}
  \forall \sigma \in \Sigma(G), \sigma \in Q \text{ or $G$ is a \rb{} at~$\sigma$}.
\end{equation}
Indeed, when an element~$g$ is inserted in~$G$, we remove~$\sig g$ from~$Q$
and insert all the elements in the sets~$\Sigma(g, h)\cup \Sigma(h, g)$, for~$h\in G$.
Since~$g$ is~$\to_G$ reduced, 
$G\cup \left\{ g \right\}$ is a \rb{} at~$\sigma$
and the inclusion
\[ \Sigma(G \cup \left\{ g \right\}) \subseteq \Sigma(G) \cup \bigcup_{h\in G} \left( \Sigma(g, h) \cup \Sigma(h, g) \right) \]
proves that Invariant~\eqref{eq:2} is preserved.
With Invariant~\eqref{eq:2} and Theorem~\ref{thm:faugere-criterion} in hand,
it is clear that Algorithm~\ref{algo:explicit} returns a \rb{} when it terminates.

Termination is ensured \emph{by design} by contructing well-formed sigtrees.
The algorithm maintains two lists~$\children$ and $L$.
The~$L$ list contains the labels: $L[i]$ is the label of the~$i$th node in the sigtree.
The~$\children$ list encodes the tree structure: $\children[i]$ is the set of chidren of the node~$i$.
The set~$\children[0]$ contains the root nodes.
The rank of the $i$th node is defined to be~$i$.
The selection procedure of the reductant makes it sure that the sigtree is well formed.
Each iteration of the ``while'' loop either removes an element of~$Q$ or increase the size of the sigtree.
The latter cannot happen infinitely many times, in view of Theorem~\ref{thm:sigtree-term},
so~$Q$ is eventually empty and the algorithm terminates.

\subsubsection{F5 variant}

We can specialize the reductant selection strategy to match the one of F5, exposed in Section~\ref{sec:f5-reduct-select}.
In this variant, it is not necessary to maintain the sigtree explicitely, we may ignore the $\children$ list.
(To really match with F5 algorithm, the reductant is chosen to be zero if possible, even if it does not correspond to the most recent possible reductant.)

\begin{algorithm}[htp]
  \begin{pseudo}
    \ctfont{Similar to Algorithm~\ref{algo:explicit}, except for the following chunk}\\
    \defchunk{selectreductantf5-variant}{select a reductant~$f$ in signature~$\sigma$ with corresponding node~$k$}\\+
    $k \gets 0$\\
    \kw{for} $1\leq j < n$ \kw{do} \\+
    \kw{if} $\sig L[j]$ divides~$\sigma$ \kw{then}
    $k \gets j$\\
    \kw{if} $\lm L[j] = 0$ \kw{then} \kw{break}  \ct{stop the search if $\sigma$ is a syzygy signature}\\-
    pick~$a\in A$ such that~$a \sig L[k] = \sigma$\\
    $f \gets a L[k]$
  \end{pseudo}
  \caption{Variant of Algorithm~\ref{algo:explicit} with the F5 strategy for the reductant selection}
\end{algorithm}

\subsubsection{A variant with signature pruning}

In the set~$Q$, we may remove any element that is divided by a different element of~$Q$.
Instead of Invariant~\eqref{eq:2}, we maintain the following one:
\begin{equation}
  \forall \sigma\in \Sigma(G), \left( \exists \tau\in Q, \tau \text{ divides } \sigma \right) \text{ or $G$ is a \rb{} at $\sigma$}.
\end{equation}
This leads to Algorithm~\ref{algo:explicit:pruning}.
Checking correctness is an easy exercise.

\begin{algorithm}[ht]
  \begin{pseudo}
    \ctfont{Similar to Algorithm~\ref{algo:explicit}, except for the following chunk}\\
    \defchunk{updatequeue-variant}{update the queue with the new relation~$g$}\\+
    \kw{for} $h\in G$ \kw{do} \\+
    $Q \gets Q \cup \Sigma(g, h) \cup \Sigma(h, g)$\\-
    \kw{for} $\sigma\in Q$ \kw{do}\\+
    \kw{if} $\exists \tau \in Q\setminus \left\{ \sigma \right\}$, $\tau$ divides~$\sigma$ \kw{then}\\+
    $Q\gets Q\setminus \left\{ \sigma \right\}$
  \end{pseudo}
  \caption{Variant of Algorithm~\ref{algo:explicit} with signature pruning}
  \label{algo:explicit:pruning}
\end{algorithm}

\subsection{Simultaneous reduction}
\label{sec:simult-reduct}

\begin{algorithm}[ht]
  \begin{pseudo}
    \inschunk{init} \\
    \kw{while} $Q$ is not empty \kw{do} \\+
    $S \gets $ some nonempty subset of~$Q$ \ct{select several signatures at a time}\\
    $Q\gets Q \setminus S$\\
    $F \gets \varnothing$ \ct{set of reductants and corresponding nodes} \\
    \kw{for} $\sigma\in S$ \kw{do} \ct{selection of reductants} \\+
    \inschunk{selectreductant} \\
    \kw{if} $f$ is~$\to_G$-reducible \kw{then}\\+
    $F \gets F \cup \left\{ (f, k) \right\}$ \ct{we keep the information of the parent}\\--

    $N \gets \varnothing$ \ct{set of newly computed relations}\\
    \kw{for} $(f, k)\in F$ by increasing order of~$\sig f$ \kw{do} \ct{reduction of reductants}\\+
    $g^\natural \gets $ a $\to$-normal form of~$f^\natural$ w.r.t. $\AGlt \cup \left\{ \smash{h^\natural} \st h\in N \right\}$ \label{line:f5f4:reduction}\\
    $g \gets (g^\natural, \sigma)$\\
    \inschunk{insertnode}\\
    \inschunk{updatequeue}\\
    $N\gets N \cup \left\{ g \right\}$ \ct{insertion of~$g$ in~$G$ is delayed} \\-

    $G \gets G\cup N$\\
    \kw{return} G
  \end{pseudo}
  \caption{Computation of a \rb{}, with simultaneous reduction. Pseudocode chunks are defined in Algorithm~\ref{algo:explicit}.}
  \label{algo:f5f4}
\end{algorithm}

As another variation of Algorithm~\ref{algo:explicit},
we may handle several signatures at a time,
in the F4 style~\parencites{Faugere_1999,AlbrechtPerry_2010}[\S13]{EderFaugere_2017}.
Concretely, the sigset~$G$ that is used to compute the reductions not
updated each time a new element is discovered.
The new elements are inserted in a sigset~$N$
and after a bunch of signatures is handled (how many is to be determined by the implementation), the elements of~$N$ are inserted in~$G$.
On line~\ref{line:f5f4:reduction}, the reductant~$g$ is reduced with respect to~$G$ (and as usual, multiples of elements of~$G$ can be used in reduction steps) and also with respect to~$N$ (but multiples cannot be used in reduction steps). In other words, the polynomial part~$f^\natural$ is reduced modulo the set~$\AGlt \cup N$.

The reason to delay insertion into~$G$ is the principle of simultenous reduction.
If we have to perform the reductions of sigpairs~$f_1,\dotsc,f_r$ with respect to the same sigset~$G$,
it is possible to formulate the problem in terms of a matrix whose rows represent the~$f_i$ and all possibly useful reducers from~$AG$
in a reduction chain starting from any of the~$f_i$.
Once this matrix is computed (this is the \emph{symbolic preprocessing} step), it can be used to compute the reductions efficiently.
This matrix aspect is crucial for high-performance computations but it is a transparent transformation of the algorithm: it does not change what is computed, compared to the naive reductions of the~$f_i$.
For a more detailed introduction to the F4 strategy, see \textcite[Chapter 10, \S 3]{CoxLittleOShea_2015}.

\begin{theorem}
  Algorithm~\ref{algo:f5f4} is correct and terminates.
\end{theorem}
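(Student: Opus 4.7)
I would follow the pattern of the previous termination proofs in this section: prove correctness via two loop invariants combined with Faugère's criterion (Theorem~\ref{thm:faugere-criterion}), then prove termination by building a well-formed sigtree and invoking Theorem~\ref{thm:sigtree-term}.

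For \emph{correctness}, the loop invariants at the top of the outer \texttt{while} are: $G$ is a prebasis, and every $\sigma \in \Sigma(G)$ either belongs to $Q$ or is a signature at which $G$ is already a \rb{}. The queue part is maintained by the \emph{updatequeue} chunk exactly as in Algorithm~\ref{algo:explicit}, using the inclusion $\Sigma(G \cup \{g\}) \subseteq \Sigma(G) \cup \bigcup_{h\in G}(\Sigma(g,h) \cup \Sigma(h,g))$ applied for each $g \in N$. The novelty is the batch insertion $G \cup N$: to see that it is a sigsafe extension of the $G$ at the start of the batch, I would proceed by induction on the order (i.e.\ by increasing signature) in which elements are added to $N$. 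When $g$ with signature $\sigma$ is produced, it is a $\to$-normal form of some $f \in AG^\sigma$ modulo $\AGlt \cup N^\natural$; the elements of $N^\natural$ already present have strictly smaller signature, hence by induction lie in $\langle AG^{<\sigma}\rangle$ (seen in $M$), so $g^\natural \equiv f^\natural \pmod{\langle AG^{<\sigma}\rangle}$. Lemma~\ref{lem:sigsafe-ext-prebasis} then gives that $G \cup N$ is a prebasis. On termination $Q = \varnothing$, and Theorem~\ref{thm:faugere-criterion} delivers that $G$ is a \rb{}.

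For \emph{termination}, I would associate to a run one sigtree per input element: each $g$ inserted via \emph{insertnode} becomes a child of the node $k$ returned by \emph{selectreductant}, with rank equal to its insertion index. Condition~\ref{it:wfst:ratio} follows from the strict leading-monomial decrease in the reduction (we only insert when $f$ is $\to_G$-reducible). Condition~\ref{it:wfst:ancestors} holds because every ancestor's label already sits in~$G$ at the start of the batch and $g^\natural$ is normalized modulo $\AGlt$, which contains every legal regular reduction by multiples of an ancestor. Condition~\ref{it:wfst:rklevels} is immediate, since distinct ranks are handed out. Across batches, \ref{it:wfst:siblings} is ensured by the descent logic of \emph{selectreductant} exactly as in the proof for Algorithm~\ref{algo:rb-f5}: if an older sibling's signature divided the younger one's, the descent for the latter would have continued past $n$. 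Theorem~\ref{thm:sigtree-term} then bounds each sigtree, so the algorithm halts.

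The \emph{main obstacle} I expect is condition~\ref{it:wfst:siblings} for two siblings born in the \emph{same} batch. Since reductant selection is performed upfront, the descent cannot see freshly-created nodes, so two signatures $\sigma_1 \mid \sigma_2$ both chosen in~$S$ may attach to a common parent~$k$, which would violate~\ref{it:wfst:siblings}. The cleanest remedy is to require $S$ to be an antichain in the divisibility order of~$\SIG$, which can always be enforced by a pruning step analogous to Algorithm~\ref{algo:explicit:pruning}; then signatures of siblings produced in the same batch are pairwise incomparable, and \ref{it:wfst:siblings} is restored. Formalising this discipline — or, equivalently, showing that a violating pair can be detected and rerouted during the symbolic preprocessing of the F4-style matrix — is the step requiring the most care.
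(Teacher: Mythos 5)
Your overall architecture (well-formed sigtree for termination, loop invariant plus Faugère's criterion for correctness) matches the paper's, but there are two substantive deviations.

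\textbf{Termination.} The obstacle you identify — two same-batch siblings $\sigma_1 \mid \sigma_2$ attaching to the same parent — is indeed a real problem \emph{for the rank function you chose} (sequential insertion index), but it disappears with a different choice of rank, which is exactly what the paper does. The paper defines the rank of a node to be \emph{the iteration (batch) number} at which it was produced, not the insertion index. Then all nodes produced in the same batch share a rank, and \ref{it:wfst:siblings} only constrains pairs with $\rk(p) < \rk(q)$, so same-batch siblings are simply exempt; \ref{it:wfst:rklevels} still holds because each batch inserts finitely many nodes. Across batches, the argument is as you say: if $p$ were an older sibling with $\sig\lambda(p) \mid \sig\lambda(q)$, the descent in \emph{selectreductant} for~$q$ would have continued past the common parent. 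With this rank, the theorem goes through with \emph{no} restriction on $S$; the antichain/pruning discipline you propose is an unnecessary modification of the algorithm. Since the definition of a sigtree makes the rank function part of the data, choosing it wisely is exactly the degree of freedom that makes this work.

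\textbf{Correctness.} You invoke the inclusion $\Sigma(G\cup\{g\}) \subseteq \Sigma(G) \cup \bigcup_{h}\big(\Sigma(g,h)\cup\Sigma(h,g)\big)$ ``exactly as in Algorithm~\ref{algo:explicit}''. But that argument crucially uses that the inserted element $g$ is $\to_G$-reduced, so that $G\cup\{g\}$ is a \rb{} at $\sig g$ and $\sig g$ can be safely removed from $Q$. Under simultaneous reduction this can fail: the elements of $N$ are normalized modulo $\AGlt$ and modulo the \emph{bare} polynomials $N^\natural$, not modulo $A$-multiples of $N$, so after $G \gets G \cup N$, some $g \in N$ may be $\to_G$-reducible by a multiple $ah$ of another $h\in N$. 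The paper addresses this explicitly: in that case $\Sigma(g,h) = \{\sig g\}$ (one checks via \ref{it:momo:div} that no proper divisor of $\sig g$ of the form $b\sig g$ exists), so $\sig g$ is reinjected into $Q$ by \emph{updatequeue} and the invariant survives. Your sigsafe-extension argument by induction on the order of processing inside a batch is sound and in fact a useful elaboration, but it does not by itself repair this gap in the queue invariant.
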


\begin{proof}
  Termination is clear because the algorithm produces a well-formed sigtree (where the rank of a node is the number of the iteration at which it was produced),
  and at each iteration, either $Q$ diminishes or the sigtree grows.
  Correctness follows from Invariant~\eqref{eq:2} which also holds for this algorithm,
  with a slightly different argument than the one in Section~\ref{sec:base-algorithm}. Indeed,
  when an element~$g$ is inserted into~$G$,
  if~$g$ is~$\to_G$-reduced, then
  $G\cup \left\{ g \right\}$ is a \rb{} at~$\sigma$
  so we may remove~$\sigma$ from~$Q$ without breaking the invariant.
  However, due to the nature of simultaneous reduction, it may happen that we insert an element that is not~$\to_G$-reduced.
  In this case, then there is some~$h\in G$ which reduces~$g$ and we check easily that~$\Sigma(g, h) = \left\{ \sig g \right\}$.
  So in this case, $\sig g$ is not actually removed from~$Q$ and the invariant is preserved.
\end{proof}

\section{Settings}
\label{sec:settings}

This section describes different monomial spaces coming from different settings in computer algebra. Some are noncommutative or non-Noetherian.

\subsection{Polynomial ring}
\label{sec:polynomial-ring}

Let~$M = K[x_1,\dotsc,x_n]$ be the polynomial ring in~$n$ variables over~$K$,
which we endow with a monomial order, so the function~$\lm$ is well defined.
Let~$A = \left\{ \smash{x_1^{i_1}\dotsb x_n^{i_n}} \st i_1,\dotsc,i_n \in \mathbb{N} \right\}$.
The axioms for monomial orders ensure that~$M$ is a monomial module over~$A$.
It is Noetherian.
Moreover, it satisfies the extra property~\ref{it:momo:extra},
so construction of prebases is easy, see Remark~\ref{remark:prebases-for-free}.

For sigpairs~$f$ and~$g$, the critical set~$\Sigma(f, g)$ has zero or one element.
There is the trivial case where~$f^\natural = 0$ or~$g^\natural = 0$. In this case, every multiple of~$f$ is $\to_{ \left\{ g \right\}}$-reduced, so~$\Sigma(f, g) = \varnothing$.
When~$f^\natural$ and~$g^\natural$ are both nonzero,
there are monomials~$a, b\in A$ such that~$a\lm f = b \lm g = \mop{lcm}(\lm f, \lm g)$.
Then there are two cases, if~$a \sig f \leq b \sig g$, then~$\Sigma(f, \left\{ g \right\}) = \varnothing$;
on the contrary, if~$b \sig g < a \sig f$, then~$\Sigma(f, g) = \left\{ a\sig f \right\}$.

\subsection{Modules over polynomial rings}

Let~$r$ be a positive integer and let~$M = K[x_1,\dotsc,x_n]^r$,
which we endow with a term ordering --~typically position-over-term, term-over-position, or Schreyer's order \parencite[\S1.4]{KreuzerRobbiano_2000}.
The monoid~$A$ is the same as before.
$M$ is a Noetherian monomial module and satisfies the extra condition~\ref{it:momo:extra}.

The computation of~$\Sigma(f, g)$ is slightly different.
In the case where~$f^\natural$ and~$g^\natural$ are both nonzero, it may happen that no multiple of~$\lm f$ and~$\lm g$ coincide. Indeed, nonzero monomials in~$\basis$ have an \emph{index} in~$ \left\{ 1,\dotsc,r \right\}$
which is unchanged under multiplication.
Therefore, if~$f^\natural = 0$ or~$g^\natural = 0$, or~$\lm f$ and~$\lm g$ have different indices, then~$\Sigma(f, g) = \varnothing$.
Otherwise, there are monomial~$a, b\in A$ such that~$a\lm f = b\lm g$ (and~$a\lm f$ minimal).
Depending on the comparison of~$a\sig f$ and~$b\sig g$, $\Sigma(f, g)$ is either~$\varnothing$ or~$\left\{ a\sig f \right\}$, as in the polynomial case.

\subsection{Monoid algebras}

Let~$A$ be a submonoid of
$\left\{ \smash{x_1^{i_1}\dotsb x_n^{i_n}} \st i_1,\dotsc,i_n \in \mathbb{N} \right\}$
and let~$M = K[A]$ be the ring of polynomials whose monomials are contained in~$A$.
It is clear that~$M$ is a Noetherian monomial module over~$A$.
This case includes the ``semigroup algebras'' studied by \textcite{BenderFaugereTsigaridas_2019}.
It also includes some algebras that are interesting in singularity theory such that~$K[x^2, xy, y^2]$,
that are polynomial ring with finitelty many monomials removed (in this case~$x$, $y$, and $xy$).


The critical set~$\Sigma(f, g)$ can contain more than one element.
Assume, for example, that~$M = K[x^2, xy, y^2]$ --~that is~$A = \left\{ x^i y^j \st i+j \geq 2 \right\}$~--
and that~$f^\natural = x^2$ and~$g^\natural = xy$.
The set of all~$a\in A$ such that~$\lm(af^\natural)$ is divided by~$\lm(g^\natural)$
is generated by~$xy$ and~$y^2$. It is not generated by~$y$ because~$y$ is not in~$A$.
Assuming that~$xy\sig f > x^2 \sig g$ and~$y^2 \sig f > xy \sig g$, we have
\[ \Sigma(f, g) = \left\{ xy \sig f, y^2\sig f \right\}. \]

\subsection{Weyl algebras}

Let~$M = K\langle x_1,\dotsc, x_n, \partial_1,\dotsc,\partial_n \rangle$ be the Weyl algebra on $n$ variables.
It is noncommutative.
We may define it as the subalgebra of $\mop{End}_K( K[X_1,\dotsc,X_n] )$
where~$x_i$ is the multiplication by~$X_i$ and~$\partial_i$ is the differentiation with respect to~$X_i$.
Concretely, $x_i x_j = x_j x_i$, $\partial_i \partial_j = \partial_j \partial_i$,
$\partial_i x_j = x_j \partial_i$ (if~$i\neq j$)
and~$\partial_i x_i = x_i \partial_i + 1$.
A basis of~$M$ is given by the monomials~$x_1^{i_1}\dotsb x_n^{i_n} \partial_1^{j_1} \dotsb \partial_n^{j_n}$
and we can consider the same monomial orderings as we would do for a commutative polynomial ring in~$2n$ variables.

For the monoid~$A$, we cannot choose the set of monomials because it is not closed under multiplication.
We choose instead~$A$ to be the submonoid of~$M$ generated by~$x_1,\dotsc,x_n$ and~$\partial_1,\dotsc,\partial_n$.
We could also choose~$A = M$.
This turns~$M$ into a Noetherian monomial module with the extra property~\ref{it:momo:extra},
so we can construct signature modules with Remark~\ref{remark:prebases-for-free}.
We could also choose~$A$ to be the monoid of nonzero elements of~$R$.
Things behave similarly to the polynomial case, due to \emph{quasicommutativity}: for any~$a, b \in M$, $\lm(ab) = \lm(ba)$.

\subsection{Differential algebras}
\label{sec:diff-algebr}

Let~$M = K[t, x_0, x_1, x_2, \dotsc]$ be
a polynomial ring in infinitely many variables
with a derivation defined by~$t' = 1$ and $x_i' = x_{i+1}$.
Let~$W = M\langle \partial \rangle$
be the subalgebra of~$\mop{End}_K M$ where~$M$ acts by multiplication
and~$\partial$ be the derivation, similarly to the Weyl algebra case.
This turns~$M$ into a left~$W$-module
and \emph{differential ideals} are defined to be the submodules of~$M$.
We choose on~$M$ a lexicographic ordering with~$t < x_0 < x_1$...

We choose $A$ to be the monoid generated by~$\partial, t, x_0, x_1$...
This turns~$M$ into a monomial module. It is quasicommutative (that is~$\lm(abm) = \lm(bam)$ for any~$a,b\in W$ and~$m\in M$) but not Noetherian.
However, it satisfies the extra condition~\ref{it:momo:extra}
and the critical sets~$\Sigma(f, G)$ are finite.
This example extends to several independent variables and several function variables.

\subsection{Free algebras}
\label{sec:free-algebras}

Let~$M$ be the free algebra generated by~$n$ variables~$x_1,\dotsc,x_n$.
A basis of~$M$ is given by the monoid of words in~$x_1,\dotsc,x_n$.
A monomial order may be given, for example, by comparing the degree first and then the lexicographic order.
We choose~$A$ to be the monoid of words, which acts naturally on~$M$ by left multiplication.
This turns~$M$ into a monomial space with extra condition~\ref{it:momo:extra}.
If~$n > 1$, it is not Noetherian, but the critical sets are finite.

To deal with two-sided ideals of~$M$,
we need to consider not only left multiplications but also right multiplications.
We introduce the monoid~$A' = A \times A^\text{op}$ of pairs of words with
with the composition $(a, b) (a', b') = (aa', b'b)$
and the action on~$M$ given by~$(a, b)m = amb$.
This turns~$M$ into another mononomial space.
When~$n > 1$, it is not Noetherian and does not satisfy extra condition~\ref{it:momo:extra}.
For example, as shown by~\textcite{GreenMoraUfnarovski_1998},
if~$x_1 > x_2$ then~$x_1 x_1 - x_1 x_2$ generates a two-sided ideal without a finite \gb.
Moreoever, the critical sets may be infinite, eventhough they
contain only finitely many nonsyzygy signatures \parencite{HofstadlerVerron_2022}.

\raggedright
\printbibliography

\end{document}